\newtcolorbox{mybox}{enhanced,colback=red!5!white, colframe=red!75!black, width=\textwidth,box align=center,halign=center,valign=center, center}
\newtheorem{thm}{Theorem}[section]
\newtheorem*{thm*}{Theorem}
\newtheorem{cor}{Corollary}[section]
\newtheorem{lem}{Lemma}[section]
\newtheorem{prop}{Proposition}[section]
\newtheorem*{prop*}{Proposition}
\newtheorem{ass}{A}
\theoremstyle{definition}
\newtheorem{defn}{Definition}[section]
\theoremstyle{remark}
\newtheorem{rem}{Remark}[section]
\numberwithin{equation}{section}
\def\N{{\mathbb N}}
\def\R{{\mathbb R}}
\def\CC{{\mathbb C}}
\def\L{{\mathcal L}}
\def\Lg{{\mathscr{L}}}
\def\Lc{{\mathfrak L}}
\def\H{{\mathcal H}}
\def\O{{\mathcal O}}
\def\E{{\mathbb E}}
\def\dim{{\rm dim}}
\def\X{{\mathcal X}}
\def\Y{{\mathcal Y}}
\def\D{{\mathcal D}}
\def\tr{{\rm tr}}
\def\I{{\mathbbm{1}}}
\def\cP{{\mathscr{P}}}
\def\norma #1{\left\lVert #1 \right\rVert}
\def\P{{\mathbb{P}}}
\def\C{{\mathscr{C}}}
\def\de{{\rm d}}
\def\w{{\rm W}}
\def\K{{\mathcal K}}
\def\M{{\mathbb{M}}}
\def\nm #1{ \left\langle #1 \right\rangle}
\def\lin{{\rm lin}}
\def\ov #1{\overline{#1}}
\def\cE{{\mathcal{E}}}
\newcommand{\f}[1]{{\color{blue!85!black}#1}}
\newcommand{\id}{\mathbbm{1}}
\def\cleardoublepage{\clearpage\if@twoside \ifodd\c@page\else
\hbox{}
\thispagestyle{empty}
\newpage
\if@twocolumn\hbox{}\newpage\fi\fi\fi}
\title[QML]{Quantitative convergence of trained quantum neural networks to a Gaussian process}
\author[A.~Melchor Hernandez]{Anderson Melchor Hernandez}
\address[A.~Melchor Hernandez]{Department of Mathematics, University of Bologna, Piazza di Porta San Donato 5, 40126, Bologna (Italy)}
\email{anderson.melchor@unibo.it}
\author[F.~Girardi]{Filippo Girardi}
\address[F.~Girardi]{Scuola Normale Superiore, piazza dei Cavalieri 7, 56126, Pisa (Italy), Korteweg-de Vries Institute for Mathematics, University of Amsterdam, Science Park, 105-107, 1098 XG, Amsterdam (The Netherlands)\\QuSoft, Science Park 123, 1098 XG Amsterdam (The Netherlands)}
\email{filippo.girardi@sns.it}
\author[D.~Pastorello]{Davide Pastorello}
\address[D.~Pastorello]{Department of Mathematics, University of Bologna, Piazza di Porta San Donato 5, 40126, Bologna (Italy)\\TIFPA-INFN, via Sommarive 14, 38123 Povo (Trento), Italy}
\email{davide.pastorello3@unibo.it}
\author[G.~De Palma]{Giacomo De Palma}
\address[G.~De Palma]{Department of Mathematics, University of Bologna, Piazza di Porta San Donato 5, 40126, Bologna (Italy)}
\email{giacomo.depalma@unibo.it}
\date{\today}
\keywords{Quantum machine learning, quantum neural networks, supervised learning, Wasserstein distance, Stein's method, quantum neural tangent kernel, lazy training, Gaussian processes}
\begin{document}

\begin{abstract}
We study quantum neural networks where the generated function is the expectation value of the sum of single-qubit observables across all qubits. In [Girardi \emph{et al.}, arXiv:2402.08726], it is proven that the probability distributions of such generated functions converge in distribution to a Gaussian process in the limit of infinite width for both untrained networks with randomly initialized parameters and trained networks.
In this paper, we provide a quantitative proof of this convergence in terms of the Wasserstein distance of order $1$.
First, we establish an upper bound on the distance between the probability distribution of the function generated by any untrained network with finite width and the Gaussian process with the same covariance. This proof utilizes Stein’s method to estimate the Wasserstein distance of order $1$.
Next, we analyze the training dynamics of the network via gradient flow, proving an upper bound on the distance between the probability distribution of the function generated by the trained network and the corresponding Gaussian process. This proof is based on a quantitative upper bound on the maximum variation of a parameter during training. This bound implies that for sufficiently large widths, training occurs in the lazy regime, \emph{i.e.}, each parameter changes only by a small amount.
While the convergence result of [Girardi \emph{et al.}, arXiv:2402.08726] holds at a fixed training time, our upper bounds are uniform in time and hold even as $t \to \infty$.
\end{abstract}
\maketitle
\tableofcontents

\section{Introduction}\label{sec:1}

In recent years, scientific communities have become increasingly involved in the use of Artificial Intelligence (AI) to analyze large databases \cite{berlyand2023,russell2016}. AI currently encompasses a vast number of subfields, ranging from learning theory to the mathematical foundations of its development. The core of AI and machine learning lies in the recognition and identification of complex patterns within vast amounts of data, enabling these systems to uncover hidden relationships, make informed predictions, and generate insights that would be difficult, if not impossible, for humans to discern on their own \cite{bishop2006}. Among the new emerging disciplines, Quantum Machine Learning (QML) is an interdisciplinary field that merges the principles of quantum computing with classical machine learning techniques \cite{de2019primer,schuld2015, pastorello2023concise}. One of the core ideas of QML is to harness quantum algorithms and the unique properties of quantum mechanics such as superposition, entanglement, and quantum parallelism to enhance the performance of deep neural models \cite{biamonte2017}. Quantum neural networks constitute the quantum version of deep neural models. These new models are based on quantum circuits and generate functions given by the expectation values of a quantum observable measured on the output of a quantum circuit made by parametric one-qubit and two-qubit gates \cite{girardi2024,schuld2018}. The parameters of the circuit encode both the input data and the parameters of the model itself. These parameters are typically optimized by gradient descent, which involves iterative adjustment to minimize a cost function and improve the performance of the quantum circuit in the processing and analysis of data \cite{schuld2021effect}.

In this paper, we study quantum neural networks applied to supervised learning for binary classification. Let $\mathcal{X}$ be the set of possible inputs (\emph{e.g.}, pictures encoded as points in $\mathbb{R}^d$), which we assume to be finite.
Let $\Theta$ be the vector of the parameters, and let $x\mapsto f(\Theta,x)$ be the function generated by the quantum neural network.
Let $\left\{\left(x^{(i)},\,y^{(i)}\right):i=1,\,\ldots,\,n\right\}$ be the set of the training examples made by the training inputs $x^{(i)}\in\mathcal{X}$ (\emph{e.g.}, pictures of dogs or cats) and the corresponding training labels $y^{(i)}\in\left\{\pm1\right\}$ (\emph{e.g.}, $y^{(i)}=1$ if $x^{(i)}$ represents a dog and $y^{(i)}=-1$ if $x^{(i)}$ represents a cat).
The goal of supervised learning is to adjust the parameters $\Theta$ so that $f(\Theta,x)$ reproduces as closely as possible the training examples.
This is usually achieved by minimizing a loss function such as the empirical quadratic loss
\begin{equation}
\mathcal{L}(\Theta) = \sum_{i=1}^n\left(f(\Theta,x^{(i)}) - y^{(i)}\right)^2
\end{equation}
via gradient descent.
For simplicity, in this paper we will consider the continuous-time gradient flow rather than gradient descent.

Several works have focused on the analysis of quantum neural networks, as it is believed that they can combine the computational power of quantum computers with the capabilities of deep learning algorithms \cite{lloyd2020quantum}. In recent work \cite{liu2021}, the authors had shown that an exponential quantum speed-up can be obtained via the use of a quantum-enhanced feature space, where each data point is mapped in a non-linear way to a quantum state, and then classified by a linear classifier in a high-dimensional Hilbert space \cite{havlivcek2019}. Nevertheless, a significant disadvantage lies in the need to determine the appropriate parameters to configure the quantum circuit beforehand, and it is not yet clear whether these parameters can be effectively obtained using a variational technique \cite{cinelli2021var}. In fact, one of the primary challenges associated with training quantum neural networks is the so-called barren plateau phenomenon, related to different causes like entanglement in input states and the locality of observables, which poses a major obstacle by causing gradients to vanish during the optimization process, making it difficult to train the network effectively \cite{mcclean2018barren,shalev2017,larocca2024review}. To address this challenge, several methods have been proposed recently. For instance, in \cite{martin2023barren}, the authors consider the problem from the perspective of quantum tensor network optimization;
in \cite{ragone2024lie}, the authors presented a general theory based on the Lie algebra of the circuit's generators to describe the rise of barren plateaus; in \cite{park2024hamiltonian}, the authors solved the problem of barren plateaus in the specific case of the Hamiltonian variational ansatz, a kind of quantum circuit used in quantum many-body problems; in \cite{liu2024mitigating}, an efficient state ansatz is proposed to mitigate barren plateaus in the context of variational quantum eigensolvers; in \cite{zhang2024absence}, a lower bound on the variance of the circuit gradients is derived for quantum circuits composed by local 2-designs.

In \cite{cerezo2021}, the authors considered an untrained variational quantum circuit and demonstrated that the associated cost function exhibits an exponentially vanishing gradient, highlighting the difficulty of overcoming this issue. In contrast, the authors in \cite{kiani2022} explored more general cost functions inspired by the quantum generalization of optimal mass transport theory \cite{villani2009optimal} developed in \cite{de2021quantum}, offering a different approach to mitigate the challenges posed by the barren-plateau phenomenon. An important problem in the classical theory of deep learning is the analysis of the behavior of a neural network in the limit of infinite width. Significant progress has been made in addressing the question of whether training can perfectly fit the training examples while simultaneously avoiding overfitting. A fundamental breakthrough has been the proof that, in the limit of infinite width, the probability distribution of the function generated by a deep neural network trained on a supervised learning problem converges to a Gaussian process \cite{hanin2018neural,lee2017deep,lee2019wide}. These recent developments have inspired a renewed interest in this important result within the field of quantum machine learning. Specifically, they have led to investigations into whether quantum neural networks exhibit similar properties. In this context, several studies have emerged. For example, \cite{abedi2023} proves that in the infinite width limit, trained quantum neural networks with constant depth operate in the lazy regime (\emph{i.e.}, the maximum amount by which the training can change a parameter tends to zero) and are capable of perfectly fitting the training examples. For a contrasting scenario, see \cite{garcia2023deep}.

In the recent work \cite{girardi2024}, the authors rigorously generalized the above classical breakthrough to the context of quantum neural networks. They considered quantum neural networks trained on supervised learning tasks, where the objective function is defined as the expected value of the sum of single-qubit observables across all qubits. The authors proved, for the first time, the trainability in the limit of infinite width in any regime where the depth is allowed to grow with the number of qubits (denoted by $m$), as long as barren plateaus do not arise.
More precisely, Ref. \cite{girardi2024} first proves that the probability distribution of the function generated by a randomly initialized quantum neural network in the limit of infinite width converges in distribution to a Gaussian process when the parameters on which each measured qubit depends influence a small number of other measured qubits (\cite[Theorem 3.14]{girardi2024}). Ref. \cite{girardi2024} then proves that for quantum neural networks trained in continuous time with gradient flow, the training occurs in the lazy regime and is able to perfectly fit the training set. The key element of the proof is showing that the dependence of the generated function on the parameters can be approximated by its linearized version near the initialization values of the parameters. Consequently, the linearized evolution equation has an analytic solution whose probability distribution is Gaussian with analytically computable mean and covariance. As a result, the probability distribution of the function generated by the trained network is proved to converge in distribution to the aforementioned Gaussian process (\cite[Theorem 4.15]{girardi2024}).

\subsection{Our results}
In this paper, we prove a quantitative version of the results of \cite{girardi2024}. Our first goal is to establish explicit upper bounds on the distance between the probability distribution of the function generated by the quantum network at initialization and the corresponding limit Gaussian process. We employ the Wasserstein distance of order $1$, which we denote with $\mathrm{d_W}$, and we provide an explicit upper bound that tends to zero as the number of qubits diverges. Without giving all the details, we prove the following (see \autoref{thmwass1} for the formal statement):

\begin{thm}[Convergence at initialization, informal statement]\label{intro:thm1}
We denote with $\ov{X}$ the vector made by the elements of $\mathcal{X}$, and with $f(\Theta,\ov{X})$ the vector made by the associated outputs.
Let $\ov{\K}_{0}$ be the covariance matrix of $f(\Theta,\ov{X})$ when the parameters $\Theta$ are randomly initialized. Then, there exist positive numbers $\left\{\alpha_{m}^{\ov{N},\ov{\K}_{0}}\right\}$ given by \eqref{ratewasst1} and depending on the number of qubits of the network $m$, on the architecture of the network, on the number of possible inputs $\ov{N}$ and on the covariance matrix $\ov{\K}_{0}$ such that
\begin{align}\label{intro:ratewasst1}
\de_{\w}\left(f(\Theta,\ov{X}),\mathcal{N}(0,\ov{\K}_{0})\right)\leq \alpha_{m}^{\ov{N},\ov{\K}_{0}},
\end{align}
where $\mathcal{N}(0,\ov{\K}_{0})$ denotes the centered Gaussian distribution with covariance matrix $\ov{\K}_{0}$.    
\end{thm}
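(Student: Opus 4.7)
The plan is to invoke Stein's method for multivariate Gaussian approximation, exploiting that $f(\Theta,x)$ is by construction a sum over qubits of single-qubit observables: for each input $x$ one writes $f(\Theta,x)=\sum_{q=1}^{m}f_{q}(\Theta,x)$, with each $f_{q}$ depending on $\Theta$ only through the parameters acting on qubit $q$. By Kantorovich--Rubinstein duality for $\de_{\w}$, it is enough to upper bound $\big|\E[h(f(\Theta,\ov{X}))]-\E[h(Z)]\big|$ uniformly over $1$-Lipschitz $h\colon\R^{\ov{N}}\to\R$, where $Z\sim\mathcal{N}(0,\ov{\K}_{0})$.

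First I would solve the multivariate Stein equation associated with the target, namely find $\phi_{h}$ with $\tr\big(\ov{\K}_{0}\,\nabla^{2}\phi_{h}(w)\big)-w\cdot\nabla\phi_{h}(w)=h(w)-\E[h(Z)]$. A canonical $\phi_{h}$ is furnished by the Ornstein--Uhlenbeck semigroup with covariance $\ov{\K}_{0}$, and standard regularity estimates yield bounds on $\|\nabla^{2}\phi_{h}\|_{\infty}$ and $\|\nabla^{3}\phi_{h}\|_{\infty}$ in terms of $\mathrm{Lip}(h)=1$ and the spectrum of $\ov{\K}_{0}$. Substituting $w=f(\Theta,\ov{X})$ and taking expectations reduces the claim to controlling $\big|\E\big[\tr(\ov{\K}_{0}\,\nabla^{2}\phi_{h}(f))-f\cdot\nabla\phi_{h}(f)\big]\big|$.

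Next I would use the locality hypothesis already exploited in \cite{girardi2024}: for each measured qubit $q$ there is a small set $N(q)$ of other qubits sharing parameters with $q$, while outside $N(q)\cup\{q\}$ the summands are independent of $f_{q}$ at initialization thanks to independence of disjoint parameter groups. Splitting $f=f_{q}+f_{N(q)}+f_{\mathrm{far}}^{(q)}$ accordingly and Taylor-expanding $\nabla\phi_{h}(f)$ to second order around $f_{\mathrm{far}}^{(q)}$, the zeroth-order contribution vanishes because $\E[f_{q}]=0$; the first-order term factorizes by independence into $\E[\nabla^{2}\phi_{h}(f_{\mathrm{far}}^{(q)})]\cdot\E[f_{q}\otimes(f_{q}+f_{N(q)})]$, and summing over $q$ this reconstructs $\E[\tr(\ov{\K}_{0}\,\nabla^{2}\phi_{h}(f))]$ up to a residual controlled by $\|\nabla^{3}\phi_{h}\|_{\infty}$ and by third moments of the local sums $f_{q}+f_{N(q)}$; the second-order Taylor remainder is of the same form. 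Combining these estimates gives the announced quantitative bound.

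The hardest part will be the bookkeeping of the circuit's architectural dependencies: one must specify precisely the locality structure $N(q)$ induced by the gate layout, certify the relevant independence statements at initialization, and bound the local third moments $\E\big[|f_{q}|\,|f_{q}+f_{N(q)}|^{2}\big]$ in terms of $|N(q)|$ and the norms of the single-qubit observables. Tracking these architectural constants through the Stein expansion and combining them with the $\ov{N}$-dependent regularity bounds for $\phi_{h}$, so as to package the whole estimate into the explicit rate $\alpha_{m}^{\ov{N},\ov{\K}_{0}}$ appearing in \eqref{intro:ratewasst1}, is where the technical effort concentrates; the dominant term of the resulting rate should scale like $|N(q)|^{2}/\sqrt{m}$ times a polynomial in $\ov{N}$ and the spectral quantities of $\ov{\K}_{0}$.
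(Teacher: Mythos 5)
Your outline follows the same broad strategy as the paper---Stein's method with a local-dependency decomposition of $f(\Theta,\ov{X})=\sum_{k}f_k(\Theta,\ov{X})/N(m)$ into a local block and a far-away independent block---so the skeleton is right. There are, however, two concrete gaps.

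First, the proposal hinges on controlling the residual of a second-order Taylor expansion by $\|\nabla^{3}\phi_h\|_{\infty}$. For the Wasserstein-$1$ distance one must take $h$ merely $1$-Lipschitz, and in dimension $\ov{N}\ge 2$ the solution $\phi_h$ of the Stein equation then has a bounded Hessian but \emph{not} a bounded third derivative; it is only $C^2$ with a Hessian that is log-Lipschitz (Dini-continuous), not $C^3$. This is precisely why the paper expands only to first order via the fundamental theorem of calculus and then bounds $\|Hg(W_i+Z_i)-Hg(W_i+\theta Z_i)\|_{\mathrm{op}}$ by the continuity modulus $\omega(\ov{N},\cdot)$ from \autoref{propreg1}. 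That modulus contributes the $(1+\log N(m))$ factor in \eqref{ratewasst1}. As written, your step ``controlled by $\|\nabla^3\phi_h\|_\infty$'' does not go through; you would need to replace it by the Hessian modulus of continuity or first work with a smoothed metric and then de-smooth.

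Second, your claim that ``summing over $q$ this reconstructs $\E[\tr(\ov{\K}_0\nabla^2\phi_h(f))]$'' elides a nontrivial estimate: the Hessian in your factorized term is evaluated at $f_{\mathrm{far}}^{(q)}$, not at $f$, so the reconstruction leaves an error term $\E\langle \ov{\K}_0-\sum_i X_iZ_i^T,Hg(f)\rangle_{\mathrm{HS}}$ that has to be bounded. The paper does this via a second-moment bound on $\|\sum_i(C_i-X_iZ_i^T)\|_{\mathrm{HS}}$, exploiting that $M_i=C_i-X_iZ_i^T$ and $M_j$ are independent whenever the second-neighborhoods $\widetilde{\mathcal{P}}_i,\widetilde{\mathcal{P}}_j$ are disjoint; this is where the extra architectural constant $\widetilde{D}\le(|\mathcal{M}||\mathcal{N}|)^4$ of \autoref{lem:stimatildeD} enters and ultimately produces the $|\mathcal{M}|^{7/2}|\mathcal{N}|^{7/2}$ scaling in \eqref{ratewasst1}. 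Your sketch, which only invokes first-neighbor sets $N(q)$, would miss this and would not reproduce the correct polynomial in the light-cone sizes. A final minor point: each $f_q$ depends on the parameters in the (extended) past light cone $\mathcal{N}_q$ of qubit $q$, which is larger than the set of parameters physically located on qubit $q$; the independence statements must be phrased in terms of overlapping light cones, as the paper does via $\mathcal{P}_k$.
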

To obtain the explicit expression \eqref{ratewasst1} for $\alpha_{m}^{\ov{N},\ov{\K}_{0}}$, we employ Stein's method, which is a powerful method to estimate the Wasserstein distance of order $1$ of the sum of weakly dependent random variables \cite{nourdin2012,nourdin2010,stein1986approximate,y2015stein,gall2018}. We will prove that $\alpha_{m}^{\ov{N},\ov{\K}_{0}}$ depends on the size of the light cones of the quantum circuit and on a normalizing constant that quantifies the presence or absence of the barren-plateau phenomenon.
The dependence of $\alpha_{m}^{\ov{N},\ov{\K}_{0}}$ on $\ov{N}$ scales as $\ov{N}^{\frac{3}{2}}$, causing $\alpha_{m}^{\ov{N},\ov{\K}_{0}}$ to diverge as the number of inputs increases: the nature of the network introduces dependencies between different qubits, which cause the bound \eqref{intro:ratewasst1} to be worse than the bound obtained for the sum of independent random variables in the central limit theorem.

Then, we quantify the distance between the probability distribution of the function generated by the trained network and its associated Gaussian process. We prove a strong quantitative version of the results of \cite{girardi2024} that can be applied to any quantum neural network with finite width and therefore does not require to build a sequence of networks of increasing width.
As before, without giving all the details, we prove the following (see \autoref{thmwass4} for the formal statement):

\begin{thm}[Convergence of the trained network, informal statement]\label{intro:thm2}
With the same notation of \autoref{intro:thm1}, let us denote by $f(\Theta_{t},\ov{X})$ the vector of the outputs of a quantum neural network trained via gradient flow for a time $t>0$, where $\Theta_{t}$ is the vector of the trained parameters and $\ov{X}$ is the vector of the possible inputs.
Let $X$ be the vector of the training inputs. Then, there exists a Gaussian probability distribution $\mathcal{N}\left(\mu_{t}(\ov{X}),\K_{t}(\ov{X},\ov{X})\right)$ with mean $\mu_{t}\coloneqq \mu_{t}(\ov{X})$ and covariance matrix $\ov{\K}_{t}\coloneqq \K_{t}(\ov{X},\ov{X})$ (that will be determined analytically in \eqref{newmedia} and \eqref{covlimit}) such that
\begin{align}\label{intro:ratewasst2}
\de_{\w}^{(s)}\left(f(\Theta_{t},\ov{X}),\mathcal{N}(\mu_{t},\ov{\K}_{t})\right)\leq\gamma_{m}^{\ov{N},n,\ov{\K}_{0}},
\end{align}
where $\left\{\gamma_{m}^{\ov{N},n,\ov{\K}_{0}}\right\}$ are positive numbers given by \eqref{otherrate4} depending on the number of possible inputs $\ov{N}$, on the number of qubits $m$, on the architecture of the network, on the covariance matrix at initialization $\ov{\K}_{0}$ and on the number of training examples $n$, and $\de_{\w}^{(s)}$ denotes the truncated Wasserstein distance of order $1$ defined in \eqref{def:swassers}.
\end{thm}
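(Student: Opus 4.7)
The plan is to reduce the statement to the convergence at initialization of \autoref{intro:thm1} by showing that training occurs in the lazy regime, so that $f(\Theta_t,\cdot)$ remains close to its linearization around the initial parameters $\Theta_0$. Setting
\begin{align*}
f_{\mathrm{lin}}(\Theta,x) = f(\Theta_0,x) + \nabla_\Theta f(\Theta_0,x)^T (\Theta - \Theta_0)
\end{align*}
and introducing the empirical neural tangent kernel $\hat{\K}_0(x,x') = \nabla_\Theta f(\Theta_0,x)^T \nabla_\Theta f(\Theta_0,x')$, the gradient flow of the quadratic loss associated with $f_{\mathrm{lin}}$ admits the closed-form solution
\begin{align*}
f_{\mathrm{lin}}(\Theta_t,\ov{X}) = f(\Theta_0,\ov{X}) - \hat{\K}_0(\ov{X},X)\,\hat{\K}_0(X,X)^{-1}\left(I - e^{-\hat{\K}_0(X,X)\,t}\right)\left(f(\Theta_0,X) - Y\right).
\end{align*}
This is an explicit function of the pair $(f(\Theta_0,\ov{X}),\hat{\K}_0)$ whose probability law, once $\hat{\K}_0$ is replaced by its deterministic limit $\K_0$ and $f(\Theta_0,\cdot)$ by its Gaussian approximation $\mathcal{N}(0,\ov{\K}_0)$, is precisely $\mathcal{N}(\mu_t,\ov{\K}_t)$.

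The main technical steps are therefore: (i) control $\|\Theta_t - \Theta_0\|_\infty$ uniformly in $t$, via a bootstrap argument combining the fact that every partial derivative $\partial_{\Theta_j} f(\Theta,x)$ has uniformly bounded amplitude (since each parameter enters the circuit as a Pauli rotation, its derivative has operator norm controlled by the normalization of the observable) with an exponential decay of the training loss driven from below by the smallest eigenvalue of $\hat{\K}_0(X,X)$; (ii) promote the parameter displacement bound to a pointwise bound on $|f(\Theta_t,\ov{X}) - f_{\mathrm{lin}}(\Theta_t,\ov{X})|$ through a second-order Taylor estimate in $\Theta$, using a uniform bound on the Hessian of $f$ in the parameters; (iii) estimate the concentration of the empirical NTK $\hat{\K}_0$ around its deterministic mean $\K_0$, and combine it with the fact that the closed-form expression above is a Lipschitz function of the kernel entries on the event where $\hat{\K}_0(X,X)$ is well-conditioned, thereby replacing $\hat{\K}_0$ with $\K_0$ at controlled cost; (iv) invoke \autoref{intro:thm1} to substitute the joint law of $(f(\Theta_0,\ov{X}),\hat{\K}_0)$ by its Gaussian counterpart, and propagate the initialization Wasserstein bound through the affine map produced by the closed-form solution.

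The main obstacle is the uniform-in-time character of the bound. For the bootstrap in step (i) to close on an arbitrarily large interval, the loss along the nonlinear gradient flow must decay exponentially at a rate proportional to $\lambda_{\min}(\K_0(X,X))$, and this must hold on a high-probability event where $\hat{\K}_0(X,X)$ is itself bounded below and close to $\K_0(X,X)$. Because $\hat{\K}_0(X,X)$ is not almost surely strictly positive definite and its inverse appears in the closed-form solution, the integrated error is genuinely unbounded on a small set of bad initializations; it is precisely to absorb this pathology that the result is stated in terms of the truncated Wasserstein distance $\de_{\w}^{(s)}$ rather than $\de_{\w}$, with the cutoff $s$ controlling how much those bad events contribute. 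The quantitative dependence of $\gamma_m^{\ov{N},n,\ov{\K}_0}$ on $\ov{N}$, $n$ and $m$ in \eqref{otherrate4} will emerge from balancing the lazy-regime parameter (driven by the normalization of the observable sum, typically $1/\sqrt{m}$), the NTK concentration rate, and the truncation level $s$, so that each of the four error contributions above stays of comparable order.
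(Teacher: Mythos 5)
Your overall plan coincides with the paper's: triangle inequality through the linearized model, concentration of the empirical NTK around the analytic NTK, propagation of the Wasserstein bound at initialization through the closed-form affine solution of the linearized flow, and the correct observation that the truncated distance $\de_{\w}^{(s)}$ is needed to absorb the small-probability event where $\hat{K}_{\Theta_0}(X,X^T)$ is ill-conditioned. (A minor notational slip: the deterministic limit of the NTK is $K$, not the initialization covariance $\K_0$.)

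There is, however, a genuine gap in step (ii). A second-order Taylor estimate combined with the displacement bound $\|\Theta_t-\Theta_0\|_\infty\le\rho$ controls $|f(\Theta_t,x)-f^{\lin}(\Theta_t,x)|$, i.e.\ the linearization error at the \emph{same} parameter value. But the closed form you display is $f^{\lin}(\Theta_t^{\lin},\cdot)$, where $\Theta_t^{\lin}$ solves the gradient flow of the \emph{linearized} loss — a different trajectory from $\Theta_t$. The remaining piece $|f^{\lin}(\Theta_t,x)-f^{\lin}(\Theta_t^{\lin},x)|$ requires control of $\|\Theta_t-\Theta_t^{\lin}\|_\infty$, and the naive route fails: combining the two displacement bounds gives only $\|\Theta_t-\Theta_t^{\lin}\|_\infty\le 2\rho$, which, paired with the crude gradient bound $|\nabla_\Theta f(\Theta_0,x)^T v|\le 2\tfrac{Lm|\mathcal{M}|}{N(m)}\|v\|_\infty$, produces a term of order $\tfrac{Lm|\mathcal{M}|^2\sqrt n}{N^2(m)\,\lambda_{\min}^K}$. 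Since $N^2(m)\le m|\mathcal{M}||\mathcal{N}|$ by \eqref{eq:inequalityNm}, this quantity is bounded below by a constant times $L|\mathcal{M}|\sqrt n/(|\mathcal{N}|\lambda_{\min}^K)$ and does not vanish. The paper's \autoref{newgradfl} instead performs a genuine comparison of the two flows: it first derives a Gr\"onwall-type bound on $\Delta(t)=\|F(t)-F^{\lin}(t)\|_2$ from the ODE for $\tfrac12\tfrac{d}{dt}\Delta^2(t)$, exploiting positive semidefiniteness of $\hat K_{\Theta_0}$, then feeds that back into a term-by-term bound on $\|\dot\Theta_t-\dot\Theta_t^{\lin}\|_\infty$, and integrates using two different estimates on $[0,t^\ast]$ and $[t^\ast,\infty)$ with the cutoff $t^\ast=\tfrac{1}{\eta\widetilde\lambda_{\min}(\delta)}\log N(m)$. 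This yields a bound on $\|\Theta_t-\Theta_t^{\lin}\|_\infty$ that is parametrically smaller than $2\rho$ and is what makes the final error vanish as $m\to\infty$. Without this step, your decomposition does not close.
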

Let us point out that $\gamma_{m}^{\ov{N},n,\ov{\K}_{0}}$ depends on the previous constant $\alpha_{m}^{\ov{N},\ov{\K}_{0}}$, and thus its dependence on $\ov{N}$ is of order $\ov{N}^{\frac{3}{2}}$.
We stress that \eqref{intro:ratewasst2} is valid for any quantum neural network of finite width, unlike the results of \cite{girardi2024} which require to build a sequence of quantum neural networks with diverging width.
Furthermore, \cite{girardi2024} proved the convergence to a Gaussian process at fixed training time, and such result cannot be directly extended to the limit $t\to\infty$, unless such limit is taken after the limit of infinite width.
On the contrary, the constants $\left\{\gamma_{m}^{\ov{N},n,\ov{\K}_{0}}\right\}$ do not depend on the training time, and therefore the bound \eqref{intro:ratewasst2} is uniform in time and holds for finite width even in the limit $t\to\infty$.
Lastly, let us notice that \eqref{intro:ratewasst1} and \eqref{intro:ratewasst2} imply the convergence results of \cite{girardi2024}.

As in \cite{girardi2024}, the proof of \autoref{intro:thm2} is based on the following upper bound to the maximum variation of a parameter during training (see \autoref{newgradfl} for the formal statement):
\begin{thm}[Lazy training, informal statement]\label{intro:thm3}
With the same notation of \autoref{intro:thm1} and \autoref{intro:thm2}, let us denote by $x\mapsto f^{\mathrm{lin}}(\Theta,x)$ the first-order Taylor approximation of $x\mapsto f(\Theta,x)$ with respect to the parameters $\Theta$ expanded around their initialization values. Let $x\mapsto f(\Theta_t^{\mathrm{lin}},x)$ be the model obtained by randomly initializing $\Theta$ and training $x\mapsto f^{\mathrm{lin}}(\Theta,x)$ via gradient flow for time $t$.
Then, for any $0<\delta<1$ there exist positive numbers $\left\{\eta_{m,n,\delta}\right\}$ given by \eqref{grad3} depending on $m,n,\delta$, and further constants, such that with probability at least $1-\delta$, one gets that
\begin{align}\label{intro:ratewasst3}
\sup_{\substack{x\in\mathcal{X}\\t\geq 0}}|f(\Theta_t,x)-f^{\mathrm{lin}}(\Theta_t^{\mathrm{lin}},x)|&\leq \eta_{m,n,\delta},
\end{align}
where $\mathcal{X}$ is the set of all the possible inputs.
\end{thm}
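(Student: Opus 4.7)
\textbf{Proof strategy for \autoref{intro:thm3}.} The plan is to run a bootstrap/Grönwall argument in parallel for the true gradient flow $\dot\Theta_t=-\nabla_\Theta\mathcal{L}(\Theta_t)$ and for the linearized gradient flow driven by $\mathcal{L}^{\mathrm{lin}}(\Theta)=\sum_{i=1}^{n}(f^{\mathrm{lin}}(\Theta,x^{(i)})-y^{(i)})^{2}$. The linearized dynamics is exactly solvable: if we write the residual $r_t^{\mathrm{lin}}(x)=f^{\mathrm{lin}}(\Theta_{t}^{\mathrm{lin}},x)-y(x)$ on the training set, then $\dot r_t^{\mathrm{lin}}=-2\,\ov{\K}_{0}(X,X)\,r_t^{\mathrm{lin}}$ where $\ov{\K}_{0}(X,X)$ is the empirical neural tangent kernel evaluated at the training inputs. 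Hence $r_t^{\mathrm{lin}}=e^{-2t\ov{\K}_{0}(X,X)}r_0$ decays at the rate $\lambda_{\min}(\ov{\K}_{0}(X,X))$, and
\begin{equation*}
\left\|\Theta_{t}^{\mathrm{lin}}-\Theta_{0}\right\|\;\le\;\frac{C}{\lambda_{\min}(\ov{\K}_{0}(X,X))^{1/2}}\,\left\|r_0\right\|,
\end{equation*}
uniformly in $t\ge 0$. The first step, therefore, is to prove that $\lambda_{\min}(\ov{\K}_{0}(X,X))$ is bounded away from $0$ with probability at least $1-\delta/2$: I would combine a deterministic estimate of $\E[\ov{\K}_{0}(X,X)]$ with a Wasserstein-type concentration inequality derived from \autoref{intro:thm1} (or a direct variance estimate of the gradient components, which is already in reach given the light-cone structure of the circuit).

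Next, I would control the true parameters $\Theta_t$ by bootstrapping. Let $R_*$ be a target radius and define $\tau=\inf\{t\ge 0:\|\Theta_t-\Theta_0\|>R_*\}$. On $[0,\tau)$, Taylor's theorem with explicit remainder gives
\begin{equation*}
\bigl|f(\Theta_s,x)-f^{\mathrm{lin}}(\Theta_s,x)\bigr|\;\le\;\tfrac{1}{2}\,H_m\,\|\Theta_s-\Theta_0\|^{2},\qquad s\in[0,\tau),\;x\in\mathcal{X},
\end{equation*}
where $H_m$ is a uniform bound on the operator norm of $\partial^{2}_\Theta f$. Such an $H_m$ follows directly from the fact that each parameter enters the circuit through one or two parametric gates of the form $e^{-i\theta G}$ with $\|G\|\le 1$, and the light cone controls the number of non-zero second derivatives; this is the same kind of combinatorial estimate that yields the constants in \eqref{ratewasst1}. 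Feeding the Taylor bound into the difference of the two flow equations and applying Grönwall over $[0,\tau)$ produces
\begin{equation*}
\sup_{x\in\mathcal{X},\,s\in[0,\tau)}\bigl|f(\Theta_s,x)-f^{\mathrm{lin}}(\Theta_s^{\mathrm{lin}},x)\bigr|\;\le\;\Phi\!\left(m,n,R_*,\lambda_{\min}(\ov{\K}_{0}(X,X))\right),
\end{equation*}
and in turn an \emph{a priori} bound $\|\Theta_t-\Theta_0\|\le R_*/2$ on the same interval. Choosing $R_*$ so that this bootstrap strictly improves the hypothesis forces $\tau=+\infty$, which is exactly the lazy-regime conclusion; the resulting estimate then yields \eqref{intro:ratewasst3} uniformly in $t\ge 0$ with the desired $\eta_{m,n,\delta}$.

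Finally, the probability budget $\delta$ is spent in two places: one half to guarantee that $\|r_0\|$ is not too large (using the Gaussian-process approximation of $f(\Theta_0,X)$ of \autoref{intro:thm1} together with Markov's inequality on $\|r_0\|^{2}$), and the other half to guarantee the lower bound on $\lambda_{\min}(\ov{\K}_{0}(X,X))$ discussed above. Putting the pieces together yields \eqref{grad3} in explicit form, with $\eta_{m,n,\delta}$ scaling polynomially in $n,1/\delta$, and in the inverse of the expected smallest NTK eigenvalue, and decaying in $m$ through the light-cone constants that appear in $\alpha_{m}^{\ov{N},\ov{\K}_{0}}$.

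\textbf{Main obstacle.} The delicate step is the quantitative lower bound on $\lambda_{\min}(\ov{\K}_{0}(X,X))$ with high probability: the convergence of \autoref{intro:thm1} is in Wasserstein-$1$ between vector-valued distributions, while what is actually needed is tail control for an eigenvalue of a quadratic functional of the gradients, which is not a $1$-Lipschitz observable. Bridging this gap---either by deriving a separate variance estimate for the entries of $\ov{\K}_{0}(X,X)$ using the same light-cone technology that underlies Stein's method here, or by truncating and directly Lipschitz-approximating the eigenvalue map---is where the main work of the proof will lie, since it is what ultimately determines the dependence of $\eta_{m,n,\delta}$ on $\delta$ and on the expected kernel.
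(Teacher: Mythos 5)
Your high-level plan is the same as the paper's: exact solvability of the linearized dynamics, a stopping-time bootstrap to show $\|\Theta_t-\Theta_0\|$ stays small, a Taylor-remainder estimate, and a probability budget split between controlling $\|F(0)-Y\|_2$ and controlling the empirical NTK. But there are two concrete gaps you would need to close, and a couple of smaller inaccuracies.

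First, the ``main obstacle'' you flag is real, but the paper does not derive the eigenvalue lower bound from the Wasserstein bound of \autoref{intro:thm1} at all (nor could it, as you rightly note, since eigenvalue maps of quadratic gradient functionals are not $1$-Lipschitz). Instead it invokes a separate Hoeffding-type concentration inequality on the \emph{entries} of the empirical NTK, namely \autoref{basedthm1} (quoted from Girardi et al.), which gives sub-Gaussian tails for $|\hat K_\Theta(x,x')-K(x,x')|$ and is proved by the same light-cone combinatorics. Note also that Assumption~\autoref{A3} already postulates $\lambda_{\min}^K>0$ for the \emph{analytic} kernel $K=\E[\hat K_{\Theta_0}]$; what the paper needs, and what \autoref{basedthm1} supplies, is concentration of $\hat K_{\Theta_0}$ around $K$, then (via a Lipschitz estimate for $\hat K_{\Theta_t}-\hat K_{\Theta_0}$) a uniform lower bound $\hat K_{\Theta_t}\succ\tilde\lambda_{\min}(\delta)\,\id$ along the whole trajectory. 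A ``variance estimate'' alone would give a polynomial rather than exponential tail and would not survive the later choice $\delta_m=1/\sqrt{N(m)}$; the sub-Gaussian concentration is essential. Relatedly, your bound $\|\Theta_t^{\mathrm{lin}}-\Theta_0\|\lesssim \lambda_{\min}^{-1/2}\|r_0\|$ has the wrong power: integrating the exponential decay gives $\lambda_{\min}^{-1}$ (see the paper's \eqref{grad2}). Also, bounding $\|r_0\|$ does not require \autoref{intro:thm1} at all: since $\E[F(0)]=0$ and the diagonal of $\mathcal{K}_0(X,X^T)$ is known, a direct Chebyshev bound on $\|F(0)\|_2$ suffices, as the paper does.

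Second, the naive ``Grönwall on the difference of the two flows'' step is where your sketch would genuinely fail. The desired quantity is $|f(\Theta_t,x)-f^{\mathrm{lin}}(\Theta_t^{\mathrm{lin}},x)|$, which the paper splits as
\begin{equation*}
|f(\Theta_t,x)-f^{\mathrm{lin}}(\Theta_t,x)|+|\nabla_\Theta f(\Theta_0,x)^T(\Theta_t^{\mathrm{lin}}-\Theta_t)|.
\end{equation*}
The first term is the Taylor remainder you describe, controlled by $\|\Theta_t-\Theta_0\|_\infty^2$. For the second term one must bound $\|\Theta_t-\Theta_t^{\mathrm{lin}}\|_\infty$ by integrating $\|\dot\Theta_t-\dot\Theta_t^{\mathrm{lin}}\|_\infty$; a Grönwall comparison between the two ODEs does not close because neither flow is a contraction of the other, and a crude estimate would yield a time-growing bound. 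The paper instead exploits the exponential decay of both residuals $F(t)-Y$ and $F^{\mathrm{lin}}(t)-Y$, derives an auxiliary bound on $\|F(t)-F^{\mathrm{lin}}(t)\|_2$, and splits the time integral at $t^*=\frac{1}{\eta\tilde\lambda_{\min}(\delta)}\log N(m)$ using two different estimates on the two sub-intervals. Without this splitting device, or some substitute for it, you will not obtain a time-uniform $\eta_{m,n,\delta}$ as in \eqref{grad3}.
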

We stress that, for any fixed $n$ and $\delta$, $\eta_{m,n,\delta}\to0$ for $m\to\infty$ if barren plateaus do not arise.
Therefore, \eqref{intro:ratewasst3} proves that if $m$ is large enough, each parameter remains with high probability close to its initialization value and that the training happens in the lazy regime. In the formal statement of \autoref{intro:thm3}, we provide two main improvements with respect to \cite[Theorems 4.17, 4.20]{girardi2024}.
First, \cite[Theorems 4.17, 4.20]{girardi2024} require to build a sequence of networks with increasing width and hold only if the width is larger than some value that has not been computed.
Instead, we want to describe the training dynamics of a network with finite width, therefore we have replaced the hypotheses about the asymptotic behavior of the network with the quantitative requirement at finite size \eqref{hp:lambda_pos}.
Second, while the bounds of \cite[Theorems 4.17, 4.20]{girardi2024} contain constants that have not been computed and that depend on the failure probability of the training at finite width, we have provided explicit bounds.\\

The article is organized as follows.
In \autoref{sec:2}, we introduce some preliminaries, notations, and our hypotheses. In \autoref{sec:gausproc}, we recall some known facts about Gaussian processes. In \autoref{Sec:Stein}, we recall some facts about the Stein method and its relationship to the Wasserstein distance of order $1$. In \autoref{Sec:results}, we state our main results. In \autoref{sec:init}, we prove \autoref{intro:thm1}. In \autoref{sec:train}, we prove \autoref{intro:thm3}, which is then used to prove \autoref{intro:thm2}. In the same section, we prove \autoref{intro:thm2}. We conclude in \autoref{sec:concl}. Lastly, for the sake of completeness, we include appendices to discuss the regularity of the solutions of the Stein's equation (\autoref{sec:regularity}) and some theorems proved in \cite{girardi2024} regarding the convergence of the function generated by a quantum circuit to a Gaussian process  (\autoref{sec:towgauss}).

\begin{table}[t]
    \caption{Notation concerning the general properties of the circuit and the Wasserstein distance.}
    \label{table1}
    \begin{tabularx}{\textwidth}{p{0.1\textwidth}X>{\raggedleft\arraybackslash}l} 
    \toprule
    Symbol & Description & Introduced in \\
    \midrule
      $m$ & number of qubits in the parameterized quantum circuit & \autoref{sub:quantum}\\ 
      $L$ & number of layers in the parameterized quantum circuit& Def. \ref{def:numbL}\\
      $\Theta$ & vector of the parameters of the quantum circuit&\autoref{parm}\\
      $\mathscr{P}$& denotes the parameter space, so that $\Theta\in \mathscr{P}$. Here $\mathscr{P}=[0,\pi]^{Lm}$& \autoref{sub:trdata}\\
      $|\Theta|$ & number of parameters $|\Theta|:=\dim\mathscr{P}=Lm$ & \autoref{sub:quantum} \\
      $U(\Theta,x)$ & parameterized quantum circuit (unitary operator) & \textit{ibid.}\\
      $f(\Theta,x)$ & function generated by the quantum neural network & \autoref{sub:assumpt}\\
      $N(m)$ & normalization factor of the model & \textit{ibid.}\\
      $\mathcal{M}_i$& (extended) future light cone of the parameter $i$ & Def. \ref{extcone2}\\
      $\mathcal{N}_k$ &(extended) past light cone of the observable $k$& Def. \ref{lcone1}\\
      $|\mathcal{M}|$ & maximal cardinality of a future light cone in the circuit &  \autoref{maxicard}\\
      $|\mathcal{N}|$ & maximal cardinality of a past light cone in the circuit & \textit{ibid.}\\
      $\mathcal{P}_{i}$ &set of indices of the observables depending on the observable $i$&\autoref{pk1}\\
      $\widetilde{\mathcal{P}}_{i}$& set representing the union of the sets $\mathcal{P}_{j}$ for those $j$ in $\mathcal{P}_{i}$ &  \autoref{newsets}\\
      $\mathrm{d_W}$ & Wasserstein distance of order $1$ & \autoref{sub:distances}\\
      $\mathrm{d}_\mathrm{W}^{(s)}$ & truncated Wasserstein distance of order $1$ & \textit{ibid}\\
      $\omega$ & continuity modulus of the Hessian of the solution of Stein's equation & \autoref{contfunct}\\
      \bottomrule
    \end{tabularx}
\end{table}
\begin{table}[t]
    \caption{Notation concerning the training of the circuit.}
    \label{table2}
    \begin{tabularx}{\textwidth}{p{0.1\textwidth}X>{\raggedleft\arraybackslash}l} 
    \toprule
    Symbol & Description & Introduced in \\
    \midrule
      $\mathcal{X}$ & the feature space & \autoref{sub:trdata}\\
       $\ov{N}$ &  cardinality of the feature space& \textit{ibid}\\
       $\ov{X}$ &  vector containing all the possible inputs in $\mathcal{X}$ & \textit{ibid}\\
      $x$ & a generic input belonging to $\mathcal{X}$& \textit{ibid}\\
      $\mathcal{Y}$ & the output space & \textit{ibid}\\
      $y$ & a generic output belonging to $\mathcal{Y}$& \textit{ibid}\\
      $\mathcal{D}$ & training set, whose elements are denoted by $(x^{(i)},y^{(i)})$ for $i=1,\dots,n$ & \textit{ibid.} \\
      $n$ & number of training samples (i.e., cardinality of $\mathcal{D}$) & \textit{ibid.}\\
      $X$ & vector containing the inputs of the training set & \autoref{sub:NTK}\\
      $Y$ & vector containing the outputs of the training set corresponding to $X$ & \textit{ibid.}\\
      $f^{\mathrm{lin}}(\Theta,x)$ & linearized model & \autoref{sub:linearmod}\\
      $ \hat K_{\Theta}(x,x')$ & empirical neural tangent kernel & Def. \ref{def:ENTK}\\
      $ K(x,x')$ & analytic neural tangent kernel & Def. \ref{def:analk1}\\
      $\ov{\K}_{0}$& Covariance matrix at initialization of size $\ov{N}\times \ov{N}$& \autoref{thmwass1}\\
      $\lambda_{\min}^K$ & smallest eigenvalue of $K(X,X^T)$ & Assumption \autoref{A3} \\
      $t$ & continuous or discrete training time & \autoref{sub:NTK}\\
      $F(t)$ & vector containing the model function evaluated in the inputs of the dataset, i.e., $F(t)=f(\Theta_t,X)$ & \autoref{fvectmod}\\
      $\ov{\K}_{t}$& Covariance matrix of $f(\Theta_{t},\ov{X})$& \autoref{thmwass4}\\
      $F^{\mathrm{lin}}(t)$ & vector containing the linearized model function evaluated in the inputs of the dataset, i.e., $F^{\mathrm{lin}}(t)=f^{\mathrm{lin}}(\Theta^{\mathrm{lin}}_t,X)$ & \autoref{linevol1} \\
      $\eta$ & learning rate, which enters the gradient flow equation and is a function of $m$ & \autoref{gradform1}\\
      $\mathcal{L}(\Theta)$ & cost function for the original model according to the training set & \autoref{costfunct2}\\
      $\Theta_t$ & parameter vector evolving via gradient flow according to $\mathcal{L}$ & \autoref{gradform1} \\
      $\mathcal{L}^{\mathrm{lin}}(\Theta)$ & cost function for the linearized model & \autoref{sub:linearmod} \\
      $\Theta_t^{\mathrm{lin}}$ & parameter vector evolving via gradient flow according to $\mathcal{L}^{\lin}$ & \textit{ibid.}\\
      \bottomrule
    \end{tabularx}
\end{table}

\section{Preliminaries}\label{sec:2}
Let us start by introducing the notation of the present work.
\subsection{Training data}\label{sub:trdata}
Let $\X\subset \R^{d}$ be the feature space, \emph{i.e.}, the set of all the possible inputs, which we assume to have finite cardinality $\ov{N} = |\mathcal{X}|$. We will often use the notation $\bar X$ to represent the vector having as entries the elements of the input space $\mathcal{X}$. Let $\mathbb{R}$ be the output space.
Let
\begin{equation}
    \D \coloneqq\left\{(x^{(i)},y^{(i)}):i=1,\ldots,n\right\}\subset \X\times\Y
\end{equation}
be the training set. We set $n=\vert \D\vert$ to be the cardinality of $\D$. We let $\cP$ be the parameter space, and let $\Theta\in\cP$ be the vector of the parameters. Let $f:\cP\times \X\rightarrow \mathbb{R}$ be a generic  parametric function.
As a cost function, we consider the mean squared error on the training set $\D$ of cardinality $n$
\begin{align}\label{costfunct2}
 \L(\Theta)\coloneqq \sum_{i
=1}^{n}\left(f(\Theta,x^{(i)})-y^{(i)}\right)^{2}.  
\end{align}
\subsection{Quantum neural networks}\label{sub:quantum}
Let $\CC^{2}$ be the Hilbert space of a single qubit. In what follows, we denote by $m\in \N$ the number of qubits of the quantum neural network. Hence, the Hilbert space of the system is $\H=\left(\CC^{2}\right)^{\otimes m}$, and its dimension denoted as $\dim\,\H$ is $2^{m}$.  Following the notations of \cite{girardi2024}, we recall what a ``layer'' is.
\begin{defn}\label{def:numbL}
A layer is a unitary operation $U(\Theta,x)\in \Lc(\H)$ resulting from:
\begin{enumerate}
    \item[$1.$] the application on each qubit of a different parametrized single-qubit gate $W_{i}(\Theta)\in \Lc(\CC^{2})$; each parametrized gate depends on a single parameter $\theta_{i}$, which is different for each gate,
\end{enumerate}
followed by
\begin{enumerate}
    \item[$2.$] a set of one-qubit and two-qubit gates acting on disjoint qubits, that is, each qubit can be acted at most one gate; each gate may depend only on the input $x$; the resulting unitary operation will be called $V\in \Lc(\H)$. 
\end{enumerate}
In what follows, a quantum circuit is a combination of parameterized layers $U_{\ell}(\Theta,x)$, $\ell\in \mathbb{N}$. Next, we let $L\in \mathbb{N}$ be the number of layers in a quantum circuit, which may depend on  the number of qubits $m$.
\end{defn}
In the next, we consider $\theta_{1},\ldots, \theta_{Lm}$ be the parameters of a quantum circuit,  so that  $\Theta$ will be the vector

\begin{align}\label{parm}
\Theta\coloneqq
 \begin{pmatrix}
 \theta_{1}\\
 \theta_{2}\\
 \vdots\\
 \theta_{Lm}
 \end{pmatrix},
\end{align}
of dimension $\mathrm{dim}\Theta\coloneqq\vert \Theta\vert=Lm$. We now recall a convenient notation for the indices of the parameters used in \cite{girardi2024}.

\begin{defn}\label{def:layerq}
Each parameter index $i\in \{1,\ldots, Lm\}$ can be expressed in the form $i=m(\ell-1)+k$ for some $\ell\in\{1,\ldots,L\}$, and $k\in\{1,\ldots,m\}$.  Here, $k$ refers to the qubit involved in the single-qubit gate parametrized by $\theta_{i}$, while $\ell$ refers to the layer in which such gate acts. The following compact notation, which we call layer-qubit representation of the parameter index $i$, simplifies the above form:
\begin{align}
    i=[\ell m]\equiv m(\ell-1)+k.
\end{align}
\end{defn}
Therefore, a layer $U_{\ell}(\Theta,x)$ can be written as

\begin{align}
\begin{aligned}
U_{\ell}(\Theta,x)&\coloneqq V_{\ell}(x)\left(W_{[\ell 1]}\otimes \cdots \otimes W_{[\ell m]}\right)(\Theta)\\
&=V_{\ell}(x)W_{\ell}(\Theta),
\end{aligned}
\end{align}
where we have set $W_{\ell}(\Theta)\coloneqq \left(W_{[\ell 1]}\otimes \cdots \otimes W_{[\ell m]}\right)(\Theta)$. The result of the circuit on a initial state $\ket{\psi_{0}}$ is described by the unitary operation 

\begin{align}\label{formula:1}
U(\Theta,x)\coloneqq U_{L}(\Theta,x)\cdots U_{1}(\Theta,x); \qquad \ket{\psi_{{\rm out}}}\coloneqq U(\Theta,x)\ket{\psi_{0}}.
\end{align}

\subsection{Light cones}\label{sub:lightc}
The architecture of the network generates a causal structure where the probability distribution of the outcome of the measurement of each output qubit can depend only on some of the parameters, and each parameter can influence only some output qubits.
Such causal structure is formalized by the notion of light cones:
\begin{defn}[Light cones]\label{lightcones}
For any $i\in\{1,\ldots, \vert \Theta\vert\}$, we define the future light cone $\Lg_{i}^{f}$ of the parameter $\theta_{i}$ as the subset 

\begin{align}\label{lcone1}
\Lg_{i}^{f}\coloneqq \left\{k\in\{1,\ldots,m\}:\text{$f_{k}(\Theta,x)$ depends on $\theta_{i}$}\right\}.
\end{align}
Analogously, we define the past light cone $\Lg_{k}^{p}$ of the qubit $k$ as the subset
\begin{align}\label{lcone2}
\Lg_{k}^{p}\coloneqq\{i\in\{1,\ldots,\vert \Theta\vert\}: \text{$f_{k}(\Theta,x)$ depends on $\theta_{i}$}\}.
\end{align}
\end{defn}
Both sets $\Lg_{i}^{f}$, and $\Lg_{k}^{f}$ are useful for tracking the dependence of observables on the parameters. In general, it is difficult to provide an explicit representation of them. For this reason, we now introduce another family of sets that can help us to explicitly track the dependence on the parameters. For any quantum circuit $U$, we define the following sets. For each layer $\ell$, and qubit $k$, we set

\begin{align}\label{auxset1}
    \mathcal{I}_{\ell,k}\coloneqq \{k'\in\{1,\ldots,m\}:\text{the qubit $k$ interacts with the qubit $k'$ in the layer $\ell$}\}\cup \{k\}.
\end{align}
We now set,

\begin{align}\label{auxset2}
\mathcal{J}_{k}^{\ell}\coloneqq 
\begin{cases}
& \mathcal{I}_{L,k} \hskip 0,3cm \text{if $\ell=L$,}\\
&\displaystyle\bigcup_{k'\in \mathcal{J}_{k}^{\ell+1}}\mathcal{I}_{\ell,k'}\hskip 0,3cm \text{if $\ell<L$}.
\end{cases}
\end{align}

In particular $\mathcal{J}_k^1$ is the set of qubits in the past light cone of the observable $k$, i.e., the qubits involved in the computation of its expectation value.

Furthermore, we set

\begin{align}
\mathcal{N}_{k}^{\ell}\coloneqq \displaystyle\bigcup_{k'\in \mathcal{J}_{k}^{\ell}}\{[\ell k']\}.
\end{align}

\begin{defn}[Extended light cones]\label{extlightc}
Let us fix a quantum circuit $U$. Given any qubit index $k\in\{1,\ldots,m\}$, we define the extended past light cone $\mathcal{N}_{k}$ as the subset of the parameter indices $\{1,\ldots, \vert \Theta\vert\}$ given by

\begin{align}\label{extcone1}
   \mathcal{N}_{k}\coloneqq \displaystyle \bigcup_{\ell=1}^{L}\mathcal{N}_{k}^{\ell} 
\end{align}
\end{defn}
Similarly, we define the extended future light cone of a parameter index $i\in\{1,\ldots, \vert \Theta\vert\}$, as

\begin{align}\label{extcone2}
 \mathcal{M}_{i}\coloneqq \{k\in\{1,\ldots,m\}: i\in \mathcal{N}_{k}\}. 
\end{align}
In the next, we set 
\begin{align}\label{maxicard}
\begin{aligned}
&|\mathcal{M}|\coloneqq \displaystyle\max_{i}\vert \mathcal{M}_{i}\vert;
&|\mathcal{N}|\coloneqq \displaystyle \max_{k}\vert \mathcal{N}_{k}\vert
\end{aligned}
\end{align}
the maximal cardinalities of the extended light cones.
\begin{rem}
We notice that $\Lg_{k}^{p}\subset \mathcal{N}_{k}$, and $\Lg_{i}^{f}\subset \mathcal{M}_{i}$. and thus from now on, we can only consider the extended light cones. 
\end{rem}
\subsection{Assumptions on architecture and initialization}\label{sub:assumpt}
We assume the following:
\begin{ass}\label{A1}
We consider an observable $\O$ given by the sum of single qubit observables $\O_{k}$:
    \begin{equation}\label{ipot1}
     \O=\sum_{k=1}^{m}\O_{k} = \O_{1}\otimes \I_{2}\otimes\cdots\otimes \I_{m} +\I_{1}\otimes \O_{2}\otimes\cdots\otimes \I_{m} +\I_{1}\otimes \I_{2}\otimes\cdots\otimes \O_{m},
    \end{equation}
    where each $\O_k$ is traceless and has the spectrum contained in the interval $[-1,1]$. We further assume that the parametric one-qubit gates of the circuit $W_{i}(\theta_{i})$ can be written as time evolutions generated by hermitian hamiltonians $\mathcal{G}_{i}$ with spectrum in $\{-1,1\}$, \emph{i.e.}, 
    \begin{align}\label{shape1}
        W_{i}(\theta_{i})=e^{-i\mathcal{G}_{i}\theta_{i}}\,,\qquad \mathcal{G}_i = \mathcal{G}_i^\dag = \mathcal{G}_i^{-1}\,.
    \end{align}
We notice that $\theta_i\mapsto W_{i}(\theta_{i})$ is periodic with period $\pi$ up to an irrelevant multiplicative constant, so we fix the parameter space to be $\mathscr{P}=[0,\pi]^{Lm}$, and thus $\vert \Theta \vert={\mathrm dim}\mathscr{P}=Lm$.
The function generated by the network is then
    \begin{align}\label{model1}
    \begin{aligned}
        f(\Theta,x)&\coloneqq \frac{1}{N(m)}\bra{0^{m}}U^{\dag}(\Theta,x)\O\,U(\Theta,x)\ket{0^{m}}\\
        &=\frac{1}{N(m)}\sum_{k=1}^{m}f_{k}(\Theta,x)
        \end{aligned}
    \end{align}
    where
    \begin{align}\label{model2}
    f_{k}(\Theta,x)\coloneqq\bra{0^{m}}U^{\dag}(\Theta,x)\O_{k}\,U(\Theta,x)\ket{0^{m}}.
    \end{align}
\end{ass}

\begin{ass}\label{A2}
We assume that each parameter $\theta_i$ is initialized independently by sampling it from the uniform distribution on $[0,\pi]$ and that the final layer is chosen so that $\E_\Theta\,f_{k}(\Theta,x)=0$ for any $k=1,\ldots,m$ and any $x\in \X$.
    For any $x,\,x'\in\mathcal{X}$, let
    \begin{equation}
        \mathcal{K}_0(x,x') = \mathbb{E}\left(f(\Theta,x)\,f(\Theta,x')\right)\,.
    \end{equation}
    We assume that $\mathcal{K}_0(x,x)>0$ for any $x\in\mathcal{X}$ and that $N(m)$ is chosen such that
\begin{align}\label{ipot2new}
    \max_{x\in\mathcal{X}} \mathcal{K}_0(x,x) = 1\,.
\end{align}
    Finally, we assume that $N(m)\geq 1$ (our bounds would anyway become trivial if $N(m)\leq 1$).
\end{ass}
In what follows, let us set 

\begin{align}
V\coloneqq \{1,\ldots,m\}.
\end{align}
For each $k\in V$, we define

\begin{align}\label{pk1}
\mathcal{P}_{k}\coloneqq \{k'\in V: \text{$f_{k'}(\Theta,x)$ is not independent from $f_{k}(\Theta,x)$}\}.   
\end{align}
This set is crucial since takes track of the number of random variables $f_{k'}(\Theta,x)$ that have correlation with $f_{k}(\Theta,x)$. Let $\mathscr{G}=(V,E)$ be the graph  with vertices $V$, and edges $E$ defined as follows. We say
\begin{align}\label{graphrel}
\text{$(k,k')\in E$ if and only if $k'\in \mathcal{P}_{k}$.}   
\end{align}

Furthermore, we define the {\em maximal degree} $D$ of $\mathscr{G}$ as the maximum number of edges containing any fixed vertex as
\begin{align}\label{maxdegree}
&D\coloneqq \max_{k\in V}{\rm deg}\,k=\max_{k\in V}\vert\{k'\in V: (k,k')\in E\} \vert=\max_{k\in V}|\mathcal{P}_k|.   
\end{align}
Let us notice that according to the definition of $\mathcal{P}_{i}$, we have that $j\in \mathcal{P}_{i}$ if and only if $i\in \mathcal{P}_{j}$.
Let ${\rm dist}$ be the distance on $\mathscr{G}$ given by the length of the shortest path, such that for any $i\in V$ we have
\begin{equation}
\mathcal{P}_i = \left\{j\in V : \mathrm{dist}(i,j) \le 1\right\}\,.
\end{equation}

In what follows, we set
\begin{align}\label{newsets}
\widetilde{\mathcal{P}}_{i}\coloneqq \displaystyle\bigcup_{j\in \mathcal{P}_{i}}\mathcal{P}_{j} = \left\{j\in V : \mathrm{dist}(i,j) \le 2\right\},   
\end{align}
and 

\begin{align}\label{indmeas}
&\widetilde{D}\coloneqq \max_{1\leq i\leq m}\left\vert \left\{ (i,j): \widetilde{\mathcal{P}}_{i}\cap\widetilde{\mathcal{P}}_{j}\neq \emptyset\right\}\right\vert = \max_{1\leq i\leq m}\left\vert \left\{ (i,j): \mathrm{dist}(i,j)\le 4 \right\}\right\vert.
\end{align}

\begin{lem}[{\cite[Lemma 2.25]{girardi2024}}]\label{stimacone1}
 For any $k\in\{1,\ldots,m\}$, let $\mathcal{P}_{k}$ be defined as in \eqref{pk1}. Then 
 \begin{align}\label{conseqlem}
   \vert\mathcal{P}_{k} \vert\leq\vert\mathcal{M} \vert \vert \mathcal{N}\vert.
 \end{align}
 In particular,
 \begin{align}
   D\leq\vert\mathcal{M} \vert \vert \mathcal{N}\vert.
 \end{align}
\end{lem}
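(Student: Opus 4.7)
The plan is to trace the dependencies on parameters through the extended light cones and then use independent initialization of the parameters. Recall from \autoref{A2} that each $\theta_i$ is sampled independently, so two functions $f_k(\Theta,x)$ and $f_{k'}(\Theta,x)$ are automatically independent whenever the sets of parameters they depend on are disjoint. By the definition of the (non-extended) past light cones $\mathscr{L}_k^p$, the function $f_k(\Theta,x)$ depends only on parameters $\theta_i$ with $i\in\mathscr{L}_k^p$. Hence, if $k'\in\mathcal{P}_k$, i.e.\ $f_k(\Theta,x)$ and $f_{k'}(\Theta,x)$ are not independent, then necessarily $\mathscr{L}_k^p\cap\mathscr{L}_{k'}^p\neq\emptyset$, and via the inclusion $\mathscr{L}_k^p\subset\mathcal{N}_k$ recorded in the remark following \autoref{extlightc} this yields $\mathcal{N}_k\cap\mathcal{N}_{k'}\neq\emptyset$.

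The second step is to rewrite the condition $\mathcal{N}_k\cap\mathcal{N}_{k'}\neq\emptyset$ in terms of extended future light cones. By the very definition \eqref{extcone2}, one has $i\in\mathcal{N}_{k'}$ iff $k'\in\mathcal{M}_i$. Thus $\mathcal{N}_k\cap\mathcal{N}_{k'}\neq\emptyset$ is equivalent to the existence of some $i\in\mathcal{N}_k$ with $k'\in\mathcal{M}_i$. In other words, we obtain the inclusion
\begin{equation*}
\mathcal{P}_k\subset \bigcup_{i\in\mathcal{N}_k}\mathcal{M}_i.
\end{equation*}

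Taking cardinalities and applying a union bound now gives
\begin{equation*}
|\mathcal{P}_k|\leq \sum_{i\in\mathcal{N}_k}|\mathcal{M}_i|\leq |\mathcal{N}_k|\cdot|\mathcal{M}|\leq |\mathcal{N}|\cdot|\mathcal{M}|,
\end{equation*}
which is \eqref{conseqlem}. The bound on $D=\max_k|\mathcal{P}_k|$ follows immediately by taking the maximum over $k$.

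The argument is essentially bookkeeping about which parameters influence which observables, so I do not expect any serious obstacle; the only delicate point is the justification of the first step, namely that the non-independence of $f_k$ and $f_{k'}$ forces their (genuine) parameter supports to overlap, which in turn relies on the independent uniform initialization guaranteed by \autoref{A2} and on the fact that the extended past light cone $\mathcal{N}_k$ contains the actual set of parameters on which $f_k$ depends.
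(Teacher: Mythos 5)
Your proof is correct and complete. The key chain is exactly right: independence of the initialization parameters implies that $f_k$ and $f_{k'}$ can fail to be independent only if their genuine past light cones $\mathscr{L}_k^p$ and $\mathscr{L}_{k'}^p$ intersect, hence (by $\mathscr{L}_k^p\subset\mathcal{N}_k$) the extended past light cones $\mathcal{N}_k$ and $\mathcal{N}_{k'}$ intersect; the duality $i\in\mathcal{N}_{k'}\iff k'\in\mathcal{M}_i$ from \eqref{extcone2} then gives $\mathcal{P}_k\subset\bigcup_{i\in\mathcal{N}_k}\mathcal{M}_i$, and the union bound yields $|\mathcal{P}_k|\le|\mathcal{N}_k|\,|\mathcal{M}|\le|\mathcal{N}|\,|\mathcal{M}|$. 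The present paper only cites {\cite[Lemma 2.25]{girardi2024}} and does not reproduce a proof, so there is nothing to compare against in the text itself; your bookkeeping argument is the natural one and there is no gap.
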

\begin{lem}\label{lem:stimatildeD}
We have
\begin{equation}
    \widetilde{D} \le \vert\mathcal{M} \vert^4\, \vert \mathcal{N}\vert^4\,.
\end{equation}
\end{lem}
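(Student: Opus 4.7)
The plan is to iterate the degree bound of Lemma~\ref{stimacone1} twice, viewing $\widetilde{\mathcal{P}}_i$ as the closed ball of radius $2$ around $i$ in the graph $\mathscr{G}$. The equivalence $\widetilde{\mathcal{P}}_i \cap \widetilde{\mathcal{P}}_j \neq \emptyset \iff \mathrm{dist}(i,j)\le 4$ is already recorded in \eqref{indmeas}, so I will directly work with graph distances and simply bound the number of vertices at distance at most $4$ from any fixed vertex.

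First, I would estimate $|\widetilde{\mathcal{P}}_i|$. By definition \eqref{newsets} we have $\widetilde{\mathcal{P}}_i = \bigcup_{k\in \mathcal{P}_i}\mathcal{P}_k$, so a union bound combined with the degree estimate $|\mathcal{P}_k|\le D \le |\mathcal{M}||\mathcal{N}|$ from Lemma~\ref{stimacone1} gives
\begin{equation*}
|\widetilde{\mathcal{P}}_i| \;\le\; \sum_{k\in \mathcal{P}_i}|\mathcal{P}_k| \;\le\; |\mathcal{P}_i|\,D \;\le\; D^2 \;\le\; |\mathcal{M}|^2|\mathcal{N}|^2.
\end{equation*}

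Next, I would observe that any $j$ with $\mathrm{dist}(i,j)\le 4$ admits a midpoint decomposition: along any minimizing walk $i=v_0,v_1,\ldots,v_r=j$ with $r\le 4$, the vertex $l\coloneqq v_{\min(r,2)}$ satisfies both $\mathrm{dist}(i,l)\le 2$ and $\mathrm{dist}(l,j)\le 2$. Hence $l\in\widetilde{\mathcal{P}}_i$ and $j\in\widetilde{\mathcal{P}}_l$, which yields the inclusion $\{j:\mathrm{dist}(i,j)\le 4\}\subseteq\bigcup_{l\in\widetilde{\mathcal{P}}_i}\widetilde{\mathcal{P}}_l$. Applying the bound on $|\widetilde{\mathcal{P}}_\cdot|$ from the previous step a second time gives
\begin{equation*}
\widetilde{D} \;\le\; \max_{i}\sum_{l\in\widetilde{\mathcal{P}}_i}|\widetilde{\mathcal{P}}_l| \;\le\; \bigl(\max_i|\widetilde{\mathcal{P}}_i|\bigr)^{2} \;\le\; D^4 \;\le\; |\mathcal{M}|^4|\mathcal{N}|^4,
\end{equation*}
which is the desired inequality.

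There is no real obstacle here: this is a standard ball-doubling argument in a graph of bounded degree, and the only non-trivial input beyond pure graph theory is the a priori degree bound $D\le |\mathcal{M}||\mathcal{N}|$ already supplied by Lemma~\ref{stimacone1}. The slight care needed is only in writing the midpoint decomposition for walks of length up to $4$, which I handled by picking the vertex two steps along the minimizing walk (or the endpoint itself if $r<2$).
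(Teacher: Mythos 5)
Your argument is correct, and it is essentially the content that the paper's own proof leaves implicit: the paper simply asserts $\widetilde{D}\le D^4$ ``by definition of $D$'' and then invokes \autoref{stimacone1}, while you supply the actual justification via the two-stage union bound and midpoint decomposition. This is the same ball-doubling idea, just spelled out rather than asserted.
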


\begin{proof}
Note that by definition of $D$, we have that
\begin{align*}
\widetilde{D}\le D^4,
\end{align*}
and so that by \autoref{stimacone1}, we are done.
\end{proof}

\begin{rem}\label{usedremark1}
For any $i=1,\,\ldots m$, let
\begin{align}\label{eq:defZi}
Z_{i}\coloneqq \sum_{k\in \mathcal{P}_{i}}\frac{f_{k}(\Theta,\ov{X})}{N(m)}.
\end{align}
Then if $\mathcal{P}_{k}\cap \widetilde{\mathcal{P}}_{j}=\emptyset$, we have that $Z_{j}$ is independent of $Z_{k}$. This fact will be significant when proving \autoref{thmwass1}.
\end{rem}

\subsection{The neural tangent kernel}\label{sub:NTK}
Before presenting our main result, let us review some relevant facts about the quantum neural tangent kernel as presented in \cite{girardi2024}. We are interested in the analysis of the minimization of the cost function \eqref{costfunct2} via gradient flow:
\begin{align}\label{gradform1}
\frac{\de \Theta_{t}}{\de\,t}=-\eta \nabla_{\Theta}\L(\Theta_{t})  
\end{align}
where $\eta>0$ is the learning rate that can be reabsorbed by rescaling the training time, and the initial value of $\Theta$ is given by the random sampling of \autoref{A2}. Notice that 
\begin{align}\label{gradvar1}
\frac{\de }{\de t}\L(\Theta_{t})=\frac{\de \Theta_{t}}{\de t}\cdot \nabla_{\Theta}\L(\Theta_{t})=-\eta \norma{\nabla_{\Theta}\L(\Theta_{t})}_{2}^{2}\leq 0.   
\end{align}
We stress that in general, the loss function $\mathcal{L}(\Theta)$ is not convex, hence gradient flow is not guaranteed to converge to a global minimum.
Given a training set $\mathcal{D}=\{(x^{(i)},y^{(i)})\}_{i=1,\dots,n}\subseteq \mathcal{X}\times\mathcal{Y}$, we will call $n=|\mathcal{D}|$ the number of examples and we will represent it in a vectorized form as follows
\begin{align}
X=\begin{pmatrix} x^{(1)}\\x^{(2)}\\\vdots\\x^{(n)} \end{pmatrix},\qquad 
Y=\begin{pmatrix} y^{(1)}\\y^{(2)}\\\vdots\\y^{(n)} \end{pmatrix}.
\end{align}
Given any function $g:\mathbb{R}\to\mathbb{R}$, we will often use the following notation:
\begin{align}
    g(X)\coloneqq \begin{pmatrix} g(x^{(1)})\\g(x^{(2)})\\\vdots\\g(x^{(n)}) \end{pmatrix},\qquad g(X^T)\coloneqq \begin{pmatrix} g(x^{(1)})&g(x^{(2)})&\cdots& g(x^{(n)}) \end{pmatrix}
\end{align}
Similarly, for any bivariate function $K:\mathbb{R}\times\mathbb{R}\to\mathbb{R}$ we will write $K(X,X^T)$ to indicate the $n\times n$ matrix with entries $\left(K(X,X^T)\right)_{ij}\coloneqq K(x^{(i)},x^{(j)})$ for $1\leq i,j\leq n$. In what follows, we denote by
\begin{equation}
    X = \left(x^{(1)},\,\ldots,\,x^{(n)}\right)^T
\end{equation}
the vector of the training inputs and by $\ov{X}$ the vector of all the possible inputs in $\mathcal{X}$.
We set
\begin{align}\label{fvectmod}
\begin{aligned}
F(t)\coloneqq
\begin{pmatrix}
f(\Theta_{t},x^{(1)})\\
f(\Theta_{t},x^{(2)})\\
\vdots\\
f(\Theta_{t},x^{(n)})
\end{pmatrix} 
\end{aligned}
= f(\Theta_t,X)\,.
\end{align}
From the gradient-flow equation \eqref{gradform1} and the chain rule, the evolution equations for the parameters and the model function can be written as
\begin{align}\label{gradeq1}
 \displaystyle\begin{cases}
  \frac{\de \Theta_{t}}{\de t}=-\eta\nabla_{\Theta}f(\Theta_{t},X^{T})\nabla_{f(\Theta_{t},X)}\L(\Theta_{t}),\\
 \frac{\de}{\de t}f(\Theta_{t},x)=-\eta\left(\nabla_{\Theta}f(\Theta_{t},x)\right)^{T}\nabla_{\Theta}f(\Theta_{t},X^{T})\nabla_{f(\Theta_{t},X)}\L(\Theta_{t})\,,
 \end{cases}   
\end{align}
where $\nabla_{\Theta}f(\Theta_{t},X^{T})$ denotes the gradient of $f(\Theta_{t},X^{T})$ with respect to $\Theta$ while $\nabla_{f(\Theta_{t},X)}\L(\Theta_{t})$ indicates the gradient of the cost function $\L$ with respect to $f(\Theta_{t},X)$. Recall that $^T$ is the transposition operator.
\begin{defn}\label{def:ENTK}
We define the empirical neural tangent kernel, in short  empirical neural tangent kernel, as
\begin{align}\label{NTK1}
\hat{K}_{\Theta}(x,x')\coloneqq \left(\nabla_{\Theta}f(\Theta_{t},x)\right)^{T}\nabla_{\Theta}f(\Theta_{t},x')\,.
\end{align}
\end{defn}
\eqref{gradeq1} can be written as
\begin{align}\label{gradeq2}
\displaystyle\begin{cases}
  \frac{\de \Theta_{t}}{\de t}=-\eta\nabla_{\Theta}f(\Theta_{t},X^{T})\nabla_{f(\Theta_{t},X)}\L(\Theta_{t}),\\
 \frac{\de}{\de t}f(\Theta_{t},X)=-\eta \hat{K}_{\Theta_{t}}(x,X^{T})\nabla_{f(\Theta_{t},X)}\L(\Theta_{t}).
 \end{cases}     
\end{align}

We mention that the quantum neural tangent kernel has been studied in several works inspired by the success of the neural tangent kernels (NTKs) in describing the dynamics of classical neural networks and the convergence of the training processes \cite{shirai2024quantum, incudini2024toward, nakaji2023quantum, liu2022representation, yu2023expressibility, egginger2024hyperparameter}. In \cite{shirai2024quantum}, a quantum NKT for deep quantum circuits is introduced considering a first-order expansion with respect to the parameters since, during training, for a deep enough circuit the parameters do not move much from initialization. In \cite{incudini2024toward}, the authors propose the use of efficient quantum algorithms to construct kernel functions based on properties that are difficult to compute by classical computations. In \cite{nakaji2023quantum}, the authors consider a class of hybrid neural networks composed of a quantum data-encoder followed by a classical network where the quantum part is randomly initialized according to unitary 2-designs, and the classical part is also randomly initialized according to Gaussian distributions, then they apply the NTK theory to construct an effective quantum kernel, which is in general nontrivial to design. In \cite{liu2022representation}, neural tangent kernels for variational circuits are defined, then the authors  derive dynamical equations for their associated loss function in optimization and learning tasks, they analytically solve the dynamics in the lazy training regime. In \cite{yu2023expressibility}, the authors study the connections between the expressibility and value concentration of quantum tangent kernel models. In particular, for global loss functions, they rigorously prove that high expressibility of quantum encodings can lead to exponential concentration of quantum tangent kernel values to zero. In \cite{egginger2024hyperparameter}, there is an analysis on the effects of hyperparameter choice on a quantum kernel model performance and the generalization gap between classical and quantum kernels.

\subsection{The assumptions for the neural tangent kernel}
In what follows, we introduce some useful assumptions to treat the behavior of the neural tangent kernel. Before, let us recall the definition of the analytic neural tangent kernel.
\begin{defn}\label{def:analk1}
We define the analytic neural tangent kernel as the expectation of the empirical neural tangent kernel:
\begin{align}\label{ankd1}
 K(x,x')\coloneqq \E(\hat{K}_{\Theta}(x,x')).  
\end{align}
\end{defn}
\begin{ass}\label{A3}
We suppose that the neural tangent kernel restricted to the training inputs $K=K(X,X^{T})$ is invertible.
We denote with $\lambda_{\max}^K$ and $\lambda_{\min}^K$ its maximum and minimum eigenvalue, respectively.
\end{ass}
\begin{ass}\label{A4}
    Let us furthermore assume that
    \begin{align}\label{eq:A4}
    \lambda_{\min}^{K}&\geq \frac{96 n\sqrt{Lm}|\mathcal{M}|^2|\mathcal{N}|}{N^2(m)}\sqrt{\log 
    \left(2n^2\sqrt{N(m)}\right)},\\
    \frac{Lm\vert \mathcal{M}\vert^{2} \vert\mathcal{N} \vert^{2}}{N^{3}(m)} &\leq \frac{1}{256\log 2}\frac{1}{\ov{N}}. \label{eq:A4bis}
\end{align}
Eq. \eqref{eq:A4} will be needed to enforce that the empirical neural tangent kernel is strictly positive with high probability, while eq. \eqref{eq:A4bis} will be needed to enforce that the empirical neural tangent kernel is close to the analytic neural tangent kernel with high probability.
\end{ass}

\subsection{The expected behavior of $N(m)$ in circuits with logarithmic depth}

In the quantitative bounds that we are going to provide in the next sections, the dependence on the number of qubits will appear as
\begin{align}\label{condizione}
\left(\frac{L^\alpha m^\beta |\mathcal{M}|^\gamma|\mathcal{N}|^\delta}{N(m)}(\log N(m))^\sigma\right)^\nu
\end{align}
for some $\alpha,\beta,\gamma,\delta,\sigma,\nu>0$ according to the precise statement.
As discussed in \cite{girardi2024}, in some cases it is possible to estimate $N(m),|\mathcal{M}|$ and $|\mathcal{N}|$ so that the asymptotical behavior of the bounds for wide circuits can be studied. In the example provided by E. Abedi \& al. in \cite{abedi2023} we have $L=O(1)$ and
\begin{equation}
     N(m)=\sqrt m, \quad |\mathcal{M}|=O(L), \quad  |\mathcal{N}|=O(L^2).
\end{equation}
As we will see, in general $\beta<\frac{1}{2}$, therefore this kind of circuit always satisfies all our theorems. However, since $\dim\,\mathcal{H}_{\mathrm{loc}}=2^{O(L)}=O(1)$, such circuit is classically simulable, so we do not expect to achieve quantum advantages with such architecture.\\
A suitable class of circuits for investigating quantum advantages is the one studied by J. C. Napp in \cite{napp2022quantifying}, where the following bound is provided
\begin{equation}
    N(m)\geq\frac{\sqrt m}{2^{CL}}\quad \text{for some fixed } C. \label{eq:Napp}    
\end{equation} 
Furthermore, we will prove that (see \eqref{eq:inequalityNm})
\begin{equation}
    N(m)\leq \sqrt{m|\mathcal{M}||\mathcal{N}|}.
\end{equation}
In a generic setting the number of qubits in the past light cone of any observable $O_k$ can grow as
\begin{equation}
    |\mathcal{J}_k^1|=O(2^L).
\end{equation}
Under the hypothesis of geometrical locality (i.e., each qubit can interact only with the nearest neighbor qubits), a $d$-dimensional lattice of qubits has
\begin{equation}
    |\mathcal{J}_k^1|=O(L^d).
\end{equation} 
Let us assume to choose $L=\epsilon\log_2 m$. Without assumptions on the geometrical locality, $|\mathcal{J}_k^1|=O(m^{\epsilon})$, which is an upper bound, so any growth $|\mathcal{J}_k^1|=\Theta(m^{\epsilon'})$ with $\epsilon'\leq\epsilon$ can be achieved by an appropriate choice of the interactions. In this case, the dependence on the number of qubits will provide a prefactor \eqref{condizione} asymptotically vanishing provided that $\epsilon,\epsilon'$ are small enough. Indeed, up to multiplicative constants, \eqref{condizione} will behave as
\begin{equation}\label{eq:limit}
    \lim_{m\to\infty}\left(\frac{(\log_2m)^\alpha m^{\beta+(\gamma+\delta)\epsilon'}}{m^{1/2-C\epsilon}}\left(\log m\right)^\sigma\right)^\nu=0
\end{equation} 
since in all our bounds we have $\beta<1/2$. However, the local Hilbert spaces have a quasi-exponential dimension $2^{m^{\epsilon'}}$.
In the geometrically local setting, we can choose $|\mathcal{J}_k^1|=\Theta((\epsilon\log_2m)^{d})$; if $d\geq 2$, then the local Hilbert spaces have a super-polynomial dimension, since $2^{(\epsilon \log_2 m)^d}=m^{ \epsilon^d\log^{d-1}_2 m}$. Besides, \eqref{condizione} is asymptotically vanishing for $\epsilon$ small enough.\\

The fact that there are circuits such that the quantities of the form \eqref{condizione} are asymptotically vanishing -- as in \eqref{eq:limit} -- is fundamental to understand why both conditions of Assumption \autoref{A4} are reasonable. Since in \eqref{eq:A4bis} a quantity of the form \eqref{condizione} appears, we immediately see that it can be satisfied by the circuits described above when the number of qubits is large enough. Similarly, for any fixed $\lambda_0>0$, if we restrict the class of circuits considered above to the ones having $\lambda_{\min}^K\geq \lambda_0$, then \eqref{eq:A4} holds provided that $m$ is large enough.\\

We should mention that the exponential decrease of $N(m)$ on the number of layers is a manifestation of the phenomenon of \textit{barren plateaus} \cite{napp2022quantifying}.
We will not further investigate the particular architectures which satisfy our theorems, but our statements will be motivated by the existence of examples such as the ones we have just presented.

\subsection{The Wasserstein distance of order 1}\label{sub:distances}
In this subsection, we introduce the distance that we employ to quantify the closeness between the probability distribution of the functions generated by quantum neural networks and the associated Gaussian processes (for a precise definition of what a Gaussian process is, see \autoref{sec:gausproc}): the Wasserstein distance of order $1$, also called Monge--Kantorovich distance or earth mover's distance \cite{ambrosio2008gradient,MR2459454,santambrogio2015optimal}.
The Wasserstein distance of order $1$ has been introduced in the context of optimal mass transport  as the minimum cost required to transport a mass distribution into another \cite{monge1781memoire, kantorovich1960mathematical}. It admits applications in several areas of mathematics \cite{villani2009optimal} and in the general scenario of transporting resources in the cheapest way \cite{evans2012phylogenetic, rachev2011probability}. As a distance among probability distributions, the Wasserstein distance finds natural applications in statistics \cite{panaretos2020invitation} and machine learning \cite{frogner2015learning, MAL-073, cheng2020wasserstein, arjovsky2017wasserstein} and it is also be extended to the quantum realm and considered in the context of quantum machine learning \cite{chakrabarti2019quantum,kiani2022}.

Let $(\Omega,\P)$ be a probability space. Let us denote by $\C\coloneqq \C(\R^{d})$ a collection of Borel-measurable real-valued functions on $\R^{d}$. We say that  the class $\C$ is separating if the following property holds: any two $\R^{d}$-valued random variables $F,G$ satisfying $h(F), h(G) \in L^{1}(\Omega,\P)$ and $\E\left(h(F)\right)=\E\left(h(G)\right)$ for any $h\in \C$, have necessarily the same law. The distance between the laws of $F$ and $G$ induced by $\C$ is
\begin{align}\label{gdef1}
\de_{\C}(F,G)\coloneq \sup\left\{\left|\E\left(h(F)\right)- \E\left(h(G)\right)\right|\,: h\in\C\right\}.
\end{align}
The Wasserstein distance of order $1$ between the laws of $F$ and $G$, denoted  by $\de_{\w}(F,G)$, is obtained from \eqref{gdef1} by taking $\C$ to be the set of all functions $h:\R^{d}\rightarrow \R$ such that $\norma{h}_{{\rm Lip}}\leq 1$, where

    \begin{align}\label{dw1}
        \norma{h}_{{\rm Lip}}\coloneqq \sup_{\substack{x\neq y \\ x,y\in\R^{d}}}\frac{\left| h(x)-h(y)\right|}{\norma{x-y}_{2}},
    \end{align}
and $\norma{\cdot}_{2}$ denotes the usual Euclidean norm in $\R^{d}$. The Wasserstein distance of order $1$ admits the following equivalent formulation:
\begin{align}\label{def:wassers1}
\de_{W}(F,G)\coloneqq \inf\left\{ \int_{\R^{d}\times \R^{d}}\norma{x-y}_{2}\de\gamma(x,y): \gamma\in \Gamma({{\rm law}(F)},{{\rm law}(G)})\right\},
\end{align}
where $\Gamma({{\rm law}(F)},{{\rm law}(G)})$ denotes the set of all probability measures supported on $\R^{d}\times \R^{d}$ with marginals ${\rm law}(F)$, and ${\rm law}(G)$. In \eqref{def:wassers1}, we defined the Wasserstein distance of order $1$ between the law of two random variables, $F,G$. In general, this definition applies for any probability measures $\mu,\nu$ on $\R^{d}$. Furthermore, the distance \eqref{def:wassers1} can also be expressed in terms of the so-called Monge--Kantorovich formulation \cite{feldman2002uniqueness}. Let $T:\R^{d}\rightarrow \R^{d}$ be a Borel map, and denote by $T_{\#}\mu$ the push forward of $\mu$ through $T$ defined by 

\begin{align}
    T_{\#}\mu(E)\coloneqq \mu(T^{-1}(E)) \hskip 0,5cm \text{$\forall\, E\subset \R^{d}$, Borel.}
\end{align}
The Monge version of the transport problem is the following: Let $\mu,\nu$ be probability measures on $\R^{d}$ absolutely continuous with respect to the $d$-dimensional Lebesgue measure. Minimize

\begin{align}\label{monge1}
    T\mapsto \int_{\R^{d}\times \R^{d}}\norma{x-T(x)}_{2}{\rm d}\mu(x),
\end{align}
among all the transport maps $T$ from $\mu$ to $\nu$, that is, all maps $T$ such that $T_{\#}\mu=\nu$. It is well-known that \eqref{def:wassers1}  is equivalently written in terms of the minimization problem \eqref{monge1}. That is, there exists an optimal transport map $T:\R^{d}\rightarrow \R^{d}$ such that

\begin{align}
\de_{W}(\mu,\nu)=\int_{\R^{d}}\norma{x-T(x)}_{2}{\rm d}\mu(x).
\end{align}
In what follows, given $s>0$, we will  also need the following truncated version of the Wasserstein distance of order $1$, denoted as $\de_{W}^{(s)}(F,G)$:
\begin{align}\label{def:swassers}
\de_{W}^{(s)}(F,G)\coloneqq \inf\left\{ \int_{\R^{d}\times \R^{d}}\de^{(s)}(x,y)\de\gamma(x,y): \gamma\in \Gamma({{\rm law}(F)},{{\rm law}(G)})\right\},
\end{align}
where 
\begin{align}\label{struncated}
\de^{(s)}(x,y)\coloneq \norma{x-y}_{2}\wedge s, \hskip0,2cm \text{for $x,y\in\R^{d}$}.    
\end{align}
Let us denote by $\C_{{\rm Lip}(1),s}$ the class of $1$-Lipschitz functions with respect to $\de^{(s)}$, and  by $\C_{{\rm Lip}(1)}$ the class of $1$-Lipschitz functions with respect to $\norma{\cdot}_{2}$. Notice that $\C_{{\rm Lip}(1),s}\subset \C_{{\rm Lip}(1)}$, and thus by the dual characterization of the $1$-Wasserstein distance, we get that

\begin{align}\label{ineqtroncata}
\de_{W}^{(s)}(F,G)\leq \de_{W}(F,G).
\end{align}

\section{Some relevant facts about Gaussian Processes}\label{sec:gausproc}
In this part, we collect some of the properties needed in the study of the convergence of a sequence of random variables $\{X_{n}\}_{n\in\N}$ to a Gaussian processes. Let us start by recalling what a stochastic process is.

\begin{defn}
A stochastic process is a collection of random variables ${\left(X_{\alpha}\right)}_{\alpha\in A}$ defined on a same probability space, where $A$ is a set of indices.
\end{defn}
It is customary to write $X\sim Y$ when the random variables $X$, and $Y$ have the same probability distribution. We say that a random variable $X$ is Gaussian, or distributed according to a Gaussian probability distribution $N(\mu,\sigma^{2})$ with $\mu\in \R$, $\sigma\neq 0$, if its probability density is given by
\begin{align*}
p(x)=\frac{1}{\sqrt{2\pi\sigma
^{2}}}e^{-\frac{-(x-\mu)^{2}}{2\sigma^{2}}}, \hskip 0,3cm x\in\R.
\end{align*}
In the vectorial case, given $\mu\in \R^{d}$, and $\K\in \R^{d\times d}$ is positive definite matrix, we write $X\sim N(\mu,\K)$ if $X$ is a random vector with probability density function
\begin{align*}
p(x)=\frac{1}{(2\pi)^{\frac{N}{2}}}\frac{1}{\sqrt{{\rm det}(\K)}}e^{-\frac{1}{2}(x-\mu)^{T}\K^{-1}(x-\mu)}, \hskip 0,3cm x\in\R.
\end{align*}
In what follows, we recall what a Gaussian process is.
\begin{defn}\label{gaussp}
A stochastic process $\{X_{\alpha}\}_{\alpha\in A}$ is called Gaussian process (GP) if for every subset $F\subset A$ there exist a vector $\mu_{F}\in \R^{\vert F\vert}$, and a positive semidefinite covariance matrix $\K_{F}$ such that
\begin{align}\label{defGaussian1}
 \{X_{\alpha}\}_{\alpha\in F}\sim N(\mu_{F},\K_{F}).   
\end{align}
\end{defn}
In particular, if we define the mean function $\mu:A\rightarrow \R$, and the covariance function $\K:A\times A\rightarrow \R$ as
\begin{align}\label{meanv1}
 &\mu(\alpha)\coloneqq\E(X_{\alpha})\\\label{covf1}
 &\K(\alpha,\alpha')\coloneqq \E\left((X_{\alpha}-\mu(\alpha))(X_{\alpha'}-\mu(\alpha'))\right),
\end{align}
then, we can consider $\mu_{F}$, and $\K_{F}$ as the  vector, and the matrix given by the evaluations of these functions, that is, we may set
\begin{align}\label{meanvar1}
    \mu_{F}=\mu(F), \hskip 0,4cm \K_{F}=\K(F,F^{T}).
\end{align}

An important fact about the Gaussian processes is that it can be characterized by the mean and the covariance function. Then, we can refer to a Gaussian process $\{X_{\alpha}\}_{\alpha\in A}$ by writing
\begin{align}\label{gaussp1}
\{X_{\alpha}\}_{\alpha\in A}\sim {\rm GP}(\mu,\K).   
\end{align}
In what follows, we aim to recall some about the convergence in distribution of random variables.

\begin{defn}
We say that a sequence of real random variables $\{X_{n}\}_{n\in\N}$ converges in distribution to a real valued random variable $X$ if for any bounded, and continuous function $f:\R\rightarrow \R$, one has that
\begin{align*}
 \lim_{n\rightarrow +\infty}\E(f(X_{n}))=\E(f(X)).  
\end{align*}
In this case, we denote it as $X_{n}\stackrel{d}{\longrightarrow}X$.
\end{defn}

\section{Multidimensional Stein methods}\label{Sec:Stein}
In this section, we recall some concepts related to Stein's method, which we will employ to study the closeness between the law of a function generated by a quantum circuit and that of a Gaussian random variable. Stein's method is a probabilistic technique that enables the assessment of the distance between two probability measures via second-order differential operators. Specifically, we aim to bound the Wasserstein distance of order one using bounds on solutions to the so-called Stein equations, which we briefly review in this section.\\
In what follows, we set $d\geq 1$, and denote by $\M_{d\times d}$ the set of all real $d\times d$-matrices. Given a positive semidefinite matrix $\Sigma \in \M_{d\times d}$, we denote by $\mathcal{N}_{d}(0,\Sigma)$ the law of a $\R^{d}$-valued Gaussian vector with mean zero and covariance matrix $\Sigma$. When $\Sigma$ is the identity matrix, we write $\I_{d}$.
 In what follows, we closely follows the notation of \cite[Chapter 4]{nourdin2012}. We denote the Hilbert-Schmidt inner product and the Hilbert-Schmidt norm by $\nm{\cdot,\cdot}_{{\rm HS}}$ and $\norma{\cdot}_{{\rm HS}}$, respectively:
 \begin{align}
     \nm{A,B}_{{\rm HS}}\coloneqq \tr(AB^{T}), \hskip 0,5cm\norma{A}_{{\rm HS}}\coloneqq \sqrt{\nm{A,A}_{{\rm HS}}}\qquad\forall\;A,\,B\in\mathbb{M}_{d\times d}\,,
 \end{align}
where $\tr(\cdot)$ denotes the usual trace operator. In the next, we denote by $\norma{A}_{{\rm op}}$ the operator norm of $A$
\begin{align}
  \norma{A}_{{\rm op}}\coloneqq \sup\left\{\text{$\norma{Ax}_{2}: x\in\R^{d}$ such that $\norma{x}_{2}=1$}\right\},
\end{align}
where $\norma{\cdot}_{2}$ is the Euclidean norm of $\R^{d}$. The following lemma, commonly referred to as Stein's Lemma, will be used to derive bounds on the Wasserstein distance of order one between probability measures:
\begin{lem}[{\cite[Lemma 4.1.3]{nourdin2012}}]
Let $\Sigma$ be a positive semidefinite $d\times d$ matrix. Let $N=(N_{1},\ldots, N_{d})$ be a random vector with values in $\R^{d}$. Then $N$ has Gaussian $\mathcal{N}_{d}(0,\Sigma)$ distribution if and only if 

\begin{align}\label{stein1}
 \E\left(\nm{N,\nabla f(N)}_{\R^{d}}\right)=\E\left(\nm{\Sigma,{\rm Hess}\,f(N)}_{{\rm HS}}\right) 
\end{align}
for every $C^{2}$ function $f:\R^{d}\rightarrow\R$ having bounded first and second derivatives. Here, we have denoted by ${\rm Hess}\,f$ the Hessian matrix of $f$.
\end{lem}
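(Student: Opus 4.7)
The plan is to prove the two directions separately, using integration by parts for the forward implication and a characteristic-function argument for the reverse.

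For the forward direction, assume $N\sim\mathcal{N}_d(0,\Sigma)$. I first treat the non-degenerate case where $\Sigma$ is strictly positive definite. Then $N$ has density $p(x)=(2\pi)^{-d/2}(\det\Sigma)^{-1/2}\exp\bigl(-\tfrac{1}{2}x^{T}\Sigma^{-1}x\bigr)$, and the key algebraic identity is $\nabla p(x)=-\Sigma^{-1}x\,p(x)$, or equivalently $x_{i}p(x)=-\sum_{j}\Sigma_{ij}\partial_{j}p(x)$. I substitute this into $\E(N_{i}\partial_{i}f(N))=\int x_{i}\partial_{i}f(x)p(x)\,\de x$ and integrate by parts (the boundary terms vanish because $\partial_{i}f$ is bounded and $p$ has Gaussian decay), obtaining $\sum_{j}\Sigma_{ij}\E(\partial_{i}\partial_{j}f(N))$. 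Summing over $i$ and using the symmetry of the Hessian together with the identity $\nm{\Sigma,A}_{\mathrm{HS}}=\sum_{i,j}\Sigma_{ij}A_{ij}$ for symmetric $A$ gives the claim. For a general positive semidefinite $\Sigma$, I write $N\sim\Sigma^{1/2}Z$ with $Z\sim\mathcal{N}_d(0,\I_d)$, or alternatively I approximate $\Sigma$ by $\Sigma+\varepsilon\I_d$, applying the identity for each $\varepsilon>0$ and passing to the limit; both sides depend continuously on $\varepsilon$ by dominated convergence, since the first and second derivatives of $f$ are bounded.

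For the reverse direction, I assume the Stein identity holds for every $C^{2}$ function with bounded derivatives. I want to identify the distribution of $N$ via its characteristic function $\varphi(t)\coloneqq\E\bigl(e^{i\nm{t,N}_{\R^d}}\bigr)$. For fixed $t\in\R^{d}$, the functions $x\mapsto\cos(\nm{t,x}_{\R^d})$ and $x\mapsto\sin(\nm{t,x}_{\R^d})$ are of class $C^{2}$ with derivatives bounded (respectively) by $\norma{t}_{2}$ and $\norma{t}_{2}^{2}$, so both are admissible test functions. Applying Stein's identity to each and combining them into the complex test function $f(x)=e^{i\nm{t,x}_{\R^d}}$ gives
\begin{equation*}
i\,\E\bigl(\nm{t,N}_{\R^d}\,e^{i\nm{t,N}_{\R^d}}\bigr)=-\bigl(t^{T}\Sigma t\bigr)\varphi(t).
\end{equation*}
Since $\nm{t,\nabla\varphi(t)}_{\R^d}=i\,\E\bigl(\nm{t,N}_{\R^d}\,e^{i\nm{t,N}_{\R^d}}\bigr)$, this becomes $\nm{t,\nabla\varphi(t)}_{\R^d}=-(t^{T}\Sigma t)\varphi(t)$.

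Now I fix a direction $t\in\R^{d}$ and set $g(s)\coloneqq\varphi(st)$ for $s\in\R$. Evaluating the identity above at $st$ yields the scalar ODE $g'(s)=-s\,(t^{T}\Sigma t)\,g(s)$ with initial condition $g(0)=1$, whose unique solution is $g(s)=\exp\bigl(-\tfrac{s^{2}}{2}t^{T}\Sigma t\bigr)$. Taking $s=1$ gives $\varphi(t)=\exp\bigl(-\tfrac{1}{2}t^{T}\Sigma t\bigr)$ for every $t\in\R^{d}$, which is the characteristic function of $\mathcal{N}_{d}(0,\Sigma)$; by uniqueness of the characteristic function, $N\sim\mathcal{N}_{d}(0,\Sigma)$.

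The main technical point I anticipate is the extension to the degenerate $\Sigma$ case in the forward direction, which requires either a direct approximation argument or a representation via a standard Gaussian composed with $\Sigma^{1/2}$; once this is handled, both implications are essentially applications of integration by parts and uniqueness of solutions to a linear ODE.
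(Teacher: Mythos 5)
The paper does not prove this lemma; it imports it verbatim from Nourdin--Peccati~\cite[Lemma~4.1.3]{nourdin2012}, so there is no in-paper argument to compare against. Your proof is a correct and essentially standard derivation of that result: integration by parts against the Gaussian density (with the degenerate case handled either through $N=\Sigma^{1/2}Z$ or by $\Sigma+\varepsilon\I_d$ and dominated convergence) for the forward direction, and a characteristic-function/ODE argument for the converse. Both directions are sound, and you correctly avoid applying the Stein identity directly to a complex-valued $f$ by first using the real test functions $\cos\nm{t,\cdot}$ and $\sin\nm{t,\cdot}$ and then recombining.

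One small gap worth flagging: to justify $\partial_j\varphi(t)=\E\bigl(iN_j e^{i\nm{t,N}}\bigr)$ (and hence $\nm{t,\nabla\varphi(t)}=i\,\E\bigl(\nm{t,N}e^{i\nm{t,N}}\bigr)$) you need to differentiate under the expectation, which requires $\E\norma{N}_2<\infty$. This integrability hypothesis is part of Nourdin--Peccati's statement of the lemma but was dropped in the paper's restatement; without it the left-hand side of \eqref{stein1} need not even be well-defined, so you should state it explicitly. With that hypothesis the interchange is justified by dominated convergence and the rest of the argument goes through.
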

In the statement of the previous lemma, there is no need to restrict $\Sigma$ to be positive definite. However, in the next definition, we make such an assumption.
\begin{defn}
Let $\Sigma$ be a positive definite $d\times d$ matrix.
Let $N\sim \mathcal{N}_{d}(0,\Sigma)$. Let 
$h:\R^{d}\rightarrow \R$ be such that $\E\vert h(N)\vert<+\infty$.  The Stein equation associated to $h$ and $N$ is the partial differential equation

\begin{align}\label{pdestein1}
 \nm{\Sigma,{\rm Hess}\,f(x)}_{{\rm HS}}-\nm{x,\nabla\,f(x)}_{\R^{d}}=h(x)-\E\left(h(N)\right).
\end{align}
A solution to the equation \eqref{pdestein1} is a $C^{2}$-function $f$ satisfying \eqref{pdestein1} for every $x\in\R^{d}$.
\end{defn}
The following Proposition relates Lipschitz functions with equation \eqref{pdestein1}.
\begin{prop}[{\cite[Proposition 4.3.2]{nourdin2012}}]\label{boundpec}
Let $N\sim \mathcal{N}_{d}(0,\Sigma)$. Let $h:\R^{d}\rightarrow \R$ be a function with Lipschitz constant $K>0$. Then the function $f_{h}:\R^{d}\rightarrow\R$ given by
\begin{align}\label{solpde1}
 f_{h}(x)\coloneqq \int_{0}^{\infty}\E\left(h(N)-h\left(\exp(-t)x+ \sqrt{1-\exp(-2t)}N\right)\right) \de t
\end{align}
is well-defined, it is a $C^{2}$-function, and satisfies \eqref{pdestein1} for all $x\in\R^{d}$. Furthermore,
\begin{align}\label{boundpde1}
 \sup_{x\in\R^{d}}\norma{{\rm Hess}\,f_{h}(x)}_{{\rm HS}}\leq K\sqrt{d} \norma{\Sigma^{-1}}_{{\rm op}} \norma{\Sigma}_{{\rm op}}^{\frac{1}{2}}.
\end{align}
\end{prop}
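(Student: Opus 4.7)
I would identify $f_{h}$ as the potential of the Ornstein--Uhlenbeck semigroup associated with $\mathcal{N}_{d}(0,\Sigma)$, namely $P_{t}g(x) := \E\left(g(\exp(-t)x + \sqrt{1-\exp(-2t)}N)\right)$, so that $f_{h}(x) = \int_{0}^{\infty}\left(\E(h(N)) - P_{t}h(x)\right)\de t$. Well-definedness follows from $K$-Lipschitzness: the integrand is bounded in modulus by $K\,\E\norma{(1-\sqrt{1-\exp(-2t)})N - \exp(-t)x}_{2}$, which is bounded on $[0,1]$ by $K(\E\norma{N}_{2}+\norma{x}_{2})$ and decays exponentially on $[1,\infty)$ thanks to the elementary inequality $1-\sqrt{1-u}\leq u$ for $u\in[0,1]$. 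This makes the subsequent semigroup manipulations legitimate.

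The Stein equation is then verified by exploiting the identity $\partial_{t}P_{t}g = LP_{t}g$, where the generator $L$ acts on smooth $g$ via $Lg(x) = \nm{\Sigma,{\rm Hess}\,g(x)}_{\rm HS} - \nm{x,\nabla g(x)}_{\R^{d}}$. Granted that $f_{h}$ is $C^{2}$ and differentiation under the integral is allowed, one obtains $Lf_{h}(x) = -\int_{0}^{\infty}\partial_{t}P_{t}h(x)\de t = -(P_{\infty}h(x) - P_{0}h(x)) = h(x) - \E(h(N))$, where the identity $P_{\infty}h = \E(h(N))$ comes from the convergence in law $\exp(-t)x + \sqrt{1-\exp(-2t)}N \to N$ as $t\to\infty$.

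To establish $C^{2}$-regularity together with the Hessian bound, I would approximate $h$ by functions in $C^{\infty}_{b}$ via mollification while preserving the Lipschitz constant $K$, so that bounds for the smooth approximants extend by dominated convergence. For smooth $h$, direct differentiation gives $\partial_{i}\partial_{j}P_{t}h(x) = \exp(-2t)\E\left((\partial_{i}\partial_{j}h)(\exp(-t)x + \sqrt{1-\exp(-2t)}N)\right)$, and Gaussian integration by parts --- using $\E\left(\partial_{j}g(N)\right) = \E\left(g(N)(\Sigma^{-1}N)_{j}\right)$ applied to the $N$-variable --- shifts one derivative outside $h$:
\begin{equation*}
\partial_{i}\partial_{j}P_{t}h(x) = \frac{\exp(-2t)}{\sqrt{1-\exp(-2t)}}\,\E\!\left((\partial_{i}h)(\exp(-t)x + \sqrt{1-\exp(-2t)}N)\,(\Sigma^{-1}N)_{j}\right).
\end{equation*}
Cauchy--Schwarz componentwise, combined with $\sum_{i}|\partial_{i}h|^{2}\leq K^{2}$ and $\sum_{j}\E\left((\Sigma^{-1}N)_{j}^{2}\right) = \tr(\Sigma^{-1})\leq d\norma{\Sigma^{-1}}_{\rm op}$, yields $\norma{{\rm Hess}\,P_{t}h(x)}_{\rm HS}\leq K\sqrt{d}\norma{\Sigma^{-1}}_{\rm op}^{1/2}\,\exp(-2t)/\sqrt{1-\exp(-2t)}$. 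The substitution $u = 1 - \exp(-2t)$ evaluates $\int_{0}^{\infty}\exp(-2t)/\sqrt{1-\exp(-2t)}\,\de t = 1$, so $\norma{{\rm Hess}\,f_{h}(x)}_{\rm HS}\leq K\sqrt{d}\norma{\Sigma^{-1}}_{\rm op}^{1/2}$, which is itself at most the stated $K\sqrt{d}\norma{\Sigma^{-1}}_{\rm op}\norma{\Sigma}_{\rm op}^{1/2}$ because $\norma{\Sigma}_{\rm op}\norma{\Sigma^{-1}}_{\rm op}\geq 1$. Alternatively, the stated form can be produced by first treating $\Sigma = I_{d}$ (which gives $K\sqrt{d}$) and reducing the general case via $\tilde{h}(y) := h(\Sigma^{1/2}y)$, $\tilde{x} := \Sigma^{-1/2}x$, since then $f_{h}(x) = f_{\tilde h}(\tilde x)$ and ${\rm Hess}_{x}\,f_{h}(x) = \Sigma^{-1/2}\,{\rm Hess}_{\tilde x}\,f_{\tilde h}(\tilde x)\,\Sigma^{-1/2}$, with $\tilde h$ being $K\norma{\Sigma}_{\rm op}^{1/2}$-Lipschitz.

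The main obstacle will be the Gaussian integration-by-parts step when $h$ is only Lipschitz, so that $\nabla h$ exists merely almost everywhere: the mollification argument must be executed carefully to ensure uniform control of the Lipschitz constants through the limit, together with domination of the integrable singularity $1/\sqrt{1-\exp(-2t)}$ near $t=0$ in order to justify differentiating twice under the integral and to obtain the continuity of ${\rm Hess}\,f_{h}$ in $x$.
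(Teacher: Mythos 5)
The paper does not prove this proposition; it imports it verbatim from Nourdin--Peccati [nourdin2012, Prop.~4.3.2]. Your reconstruction is correct and follows the same route as the reference: represent $f_h$ through the Ornstein--Uhlenbeck semigroup $P_t$, use the generator identity $Lf_h = -\int_0^\infty \partial_t P_t h\,\de t = h - \E h(N)$ to verify the Stein equation, and obtain the Hessian bound by Gaussian integration by parts to trade the second derivative of $h$ for a factor $(\Sigma^{-1}N)_j$ with the integrable kernel $e^{-2t}/\sqrt{1-e^{-2t}}$, mollifying $h$ to justify the manipulations. All the computations check: the bound $1-\sqrt{1-u}\le u$ gives the exponential decay needed for well-definedness, $\int_0^\infty e^{-2t}(1-e^{-2t})^{-1/2}\de t = 1$, and Cauchy--Schwarz with $\sum_j \E[(\Sigma^{-1}N)_j^2]=\tr(\Sigma^{-1})$ gives the HS-norm estimate.

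One small remark: your direct computation actually yields the strictly sharper constant $K\sqrt{\tr(\Sigma^{-1})}\le K\sqrt{d}\,\norma{\Sigma^{-1}}_{\rm op}^{1/2}$, whereas the bound stated in the proposition (and in the reference) is $K\sqrt{d}\,\norma{\Sigma^{-1}}_{\rm op}\norma{\Sigma}_{\rm op}^{1/2}$, which is what one gets from the more hands-off reduction $x\mapsto \Sigma^{-1/2}x$, $h\mapsto h(\Sigma^{1/2}\cdot)$ to the isotropic case. You correctly observe both facts and that $\norma{\Sigma}_{\rm op}\norma{\Sigma^{-1}}_{\rm op}\geq 1$ reconciles them. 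The technical obstacle you flag at the end — pushing the $L^2$ and pointwise bounds through the mollification while controlling the integrable singularity at $t=0$ — is exactly the point that requires care in the reference, so your assessment of where the real work lies is accurate.
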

The following theorem provides an explicit relationship between the Wasserstein distance of order one and the bound given in \eqref{boundpde1}.

\begin{thm}[{\cite[Theorem 4.4.1]{nourdin2012}}]
Let $\Sigma$ be a positive definite $d\times d$ matrix, and let $N\sim \mathcal{N}_{d}(0,\Sigma)$. For any square integrable random vector $F$ with values in $\R^{d}$ we have
\begin{align}\label{wasest1}
 \de_{\w}(F,N)\leq \sup_{f\in \mathscr{F}_{d}(\Sigma)}\left\vert \E\left(\nm{\Sigma,{\rm Hess}\, f(F)}_{{\rm HS}}\right)- \E\left(\nm{F,\nabla f(F)}_{\R^{d}}\right) \right\vert   
\end{align}
where
\begin{align}\label{wasclass1}
\mathscr{F}_{d}(\Sigma)\coloneqq \left\{\textnormal{$f:\R^{d}\rightarrow \R$ is $C^{2}$, and $\sup_{x\in\R^{d}}\norma{{\rm Hess}\,f(x)}_{{\rm HS}}\leq\sqrt{d}\norma{\Sigma^{-1}}_{{\rm op}} \norma{\Sigma}_{{\rm op}}^{\frac{1}{2}}$}\right\}.  
\end{align}
\end{thm}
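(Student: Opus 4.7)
The plan is to combine the Kantorovich-Rubinstein duality for the Wasserstein distance of order $1$ with the Stein equation, using the results recalled in \autoref{boundpec}. From the definition in \eqref{dw1},
\[
\de_{\w}(F,N) = \sup\left\{\left|\E(h(F)) - \E(h(N))\right| : \norma{h}_{{\rm Lip}} \leq 1\right\},
\]
so it suffices to show that for every $1$-Lipschitz $h$, the difference $\E(h(F)) - \E(h(N))$ can be written as the Stein functional evaluated at some $f \in \mathscr{F}_{d}(\Sigma)$.

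Fix a $1$-Lipschitz function $h$. By \autoref{boundpec} with $K = 1$, the function $f_{h}$ defined by \eqref{solpde1} is of class $C^{2}$, satisfies the Stein equation \eqref{pdestein1} pointwise in $x \in \R^{d}$, namely
\[
\nm{\Sigma, {\rm Hess}\, f_{h}(x)}_{{\rm HS}} - \nm{x, \nabla f_{h}(x)}_{\R^{d}} = h(x) - \E(h(N)),
\]
and enjoys the Hessian bound \eqref{boundpde1}, which in this case reads $\sup_{x} \norma{{\rm Hess}\, f_{h}(x)}_{{\rm HS}} \leq \sqrt{d}\,\norma{\Sigma^{-1}}_{{\rm op}}\, \norma{\Sigma}_{{\rm op}}^{\frac{1}{2}}$. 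In particular $f_{h} \in \mathscr{F}_{d}(\Sigma)$.

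The next step is to substitute $x = F$, take expectation, and verify the integrability of the two terms on the left-hand side. The Hessian bound yields at once $\E\left|\nm{\Sigma, {\rm Hess}\, f_{h}(F)}_{{\rm HS}}\right| < \infty$. For the gradient term, I observe that a uniform bound on the Hilbert-Schmidt norm of the Hessian makes $\nabla f_{h}$ globally Lipschitz, hence of at most linear growth: $\norma{\nabla f_{h}(x)}_{2} \leq a + b\,\norma{x}_{2}$ for some $a, b > 0$. Since $F$ is square integrable, Cauchy-Schwarz gives $\E\left|\nm{F, \nabla f_{h}(F)}_{\R^{d}}\right| < \infty$. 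Taking expectations in the Stein equation evaluated at $x = F$ then produces
\[
\left|\E(h(F)) - \E(h(N))\right| = \left|\E\left(\nm{\Sigma, {\rm Hess}\, f_{h}(F)}_{{\rm HS}}\right) - \E\left(\nm{F, \nabla f_{h}(F)}_{\R^{d}}\right)\right|.
\]

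To conclude, I take the supremum over all $1$-Lipschitz $h$ on the left-hand side; since $f_{h} \in \mathscr{F}_{d}(\Sigma)$ for every such $h$, the right-hand side is bounded by the supremum in \eqref{wasest1}. The main obstacle I anticipate is the integrability of $\nm{F, \nabla f_{h}(F)}_{\R^{d}}$, which is not directly supplied by \autoref{boundpec} (stated only in terms of the Hessian); the passage from a bounded Hessian to a linearly growing gradient, which then allows Cauchy-Schwarz against the square integrable $F$, is the technical point that makes the argument go through.
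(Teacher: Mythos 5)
The paper does not prove this statement: it is quoted verbatim from Nourdin--Peccati \cite[Theorem 4.4.1]{nourdin2012}, so there is no in-paper proof to compare against. Your reconstruction is correct and matches the standard argument in that reference: Kantorovich--Rubinstein duality reduces the claim to $1$-Lipschitz test functions $h$, \autoref{boundpec} supplies a $C^2$ Stein solution $f_h$ whose Hessian bound places it in $\mathscr{F}_d(\Sigma)$, and substituting $x=F$ and taking expectations gives the inequality. Your treatment of the integrability point is the one step that needs care and you handle it properly: a uniform bound $\sup_x\norma{{\rm Hess}\,f_h(x)}_{{\rm HS}}\le C$ controls $\norma{{\rm Hess}\,f_h(x)}_{{\rm op}}$, hence by the fundamental theorem of calculus $\nabla f_h$ is globally $C$-Lipschitz and of linear growth, so $|\nm{F,\nabla f_h(F)}_{\R^d}|\le \norma{F}_2\left(a+b\norma{F}_2\right)$ is integrable once $F\in L^2$; the Hessian term is trivially integrable since it is bounded. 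Nothing is missing.
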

This theorem provides an explicit form for bounding the Wasserstein distance of order one. However, this comes at the cost of a less favorable bound with respect to the dimensionality of the ambient space $\R^{d}$,  and the fact that the right-hand side depends on the inverse of the covariance matrix $\Sigma$, which can be challenging to determine explicitly in certain cases \cite{meckes2006}.

\subsection{The analytic solution for the linearized model}\label{sub:linearmod}

By using a linearized version of \eqref{gradeq2}, it is possible to study the behavior of this evolution equation directly. In particular, the first-order approximation of the function $f(\Theta_{t},\cdot)$ enables the analysis of the asymptotic properties of the function itself \cite{abedi2023,girardi2024}. In what follows, we consider
\begin{align}
\displaystyle\begin{cases}
 & \frac{\de \Theta_{t}^{\lin}}{\de t}=-\eta\nabla_{\Theta}f(\Theta_{0},X^{T})\nabla_{f^{\lin}(\Theta_{t},X)}\L^{\lin}(\Theta_{t}^{\lin}),\\
 &\frac{\de}{\de t}f^{\lin}(\Theta_{t}^{\lin},X)=-\eta \hat{K}_{\Theta_{0}}(x,X^{T})\nabla_{f^{\lin}(\Theta_{t}^{\lin},X)}\L^{\lin}(\Theta_{t}^{\lin}),
 \end{cases}     
\end{align}
where the linearized version $f^{\lin}(\Theta_{t}^{\lin},)$ of the system \eqref{gradeq2} is given by

\begin{align}\label{linappr1}
f^{\lin}(\Theta_{t}^{\lin},x)=f(\Theta_{0},x)+\nabla_{\Theta}f(\Theta_{0},x)^{T}(\Theta_{t}^{\lin}-\Theta_{0}).
\end{align}
Furthermore, $\L^{\lin}$ is the loss function $\L$ computed on $\D$ using the linearized model function \eqref{linappr1} instead of the original model function $f(\Theta_{t},\cdot)$. Using the notation \eqref{fvectmod}, we may write the equation for $f(\Theta_{t},\cdot)$, and for $f^{\lin}(\Theta_{t}^{\lin},\cdot)$ as

\begin{align}
&\frac{\de}{\de t}F(t)=-\eta \hat{K}_{\Theta_{t}}(F(t)-Y),\\\label{linevol1}
&\frac{\de}{\de t}F^{\lin}(t)=-\eta \hat{K}_{\Theta_{0}}(F^{\lin}(t)-Y)
\end{align}
where $F^{\lin}(t)\coloneqq f^{\lin}(\Theta_{t}^{\lin},X)$, $\hat{K}_{\Theta_{t}}=\hat{K}_{\Theta_{t}}(X,X^{T})$, for $t\geq 0$. For one moment, suppose that $\hat{K}_{\Theta_{0}}$ is invertible. The solution of \eqref{linevol1} gives the evolution of the linearized function evaluated at the inputs of the dataset. Since $F^{\lin}(0)=F(0)$, we have that

\begin{align}
    F^{\lin}(t)=e^{-\eta t\hat{K}_{\Theta_{0}}}(F(0)-Y)+Y.
\end{align}
For a new input, one gets that 

\begin{align}
\frac{\de }{\de t}f^{\lin}(\Theta_{t}^{\lin},x)=-\eta\hat{K}_{\Theta_{0}}(x,X^{T})e^{-\eta\hat{K}_{\Theta_{0}}t}(F(0)-Y). 
\end{align}
Therefore, 

\begin{align}\label{linevolmod1}
f^{\lin}(\Theta_{t}^{\lin},x)= f(\Theta_{0},x)-\hat{K}_{\Theta_{0}}(x,X^{T})\hat{K}_{\Theta_{0}}^{-1}\left(\mathbbm{1}-e^{-\eta\hat{K}_{\Theta_{0}}t}\right)(F(0)-Y).
\end{align}
Based on this observation, \cite{girardi2024} proved that the probability distribution of $\{f^{\lin}(\Theta_{t},x)\}_{x\in\X}$ converges in distribution to a Gaussian process in the limit of infinite width. This motivates the following notations. We denote by $f^{(\infty)}(\cdot)$ the Gaussian process to which $f(\Theta_{0},\cdot)$ converges in distribution. Let us denote by $\overline{X}$ the vectors of all the inputs of $\mathcal{X}$.
For $t>0$, we set
\begin{align}\label{covlimit}
&\begin{aligned}
\K_{t}(x,x')\coloneqq &\K_{0}(x,x')-K(x,X^{T})K^{-1}\left(\mathbbm{1}-e^{-t\eta\,K}\right)\mathcal{K}_{0}(X,x')\\
&-K(x',X^{T})K^{-1}\left(\mathbbm{1}-e^{-t\eta\,K}\right)\mathcal{K}_{0}(X,x)+\\
&+K(x,X^{T})K^{-1}\left(\mathbbm{1}-e^{-t\eta\,K}\right)\mathcal{K}_{0}(X,X^{T})\left(\mathbbm{1}-e^{-t\eta\,K}\right)K^{-1}K(X,x'),
\end{aligned}\\\label{newmedia}
&\mu_{t}(x)\coloneqq K(x,X^{T})K^{-1}\left(\mathbbm{1}-e^{-t\eta\,K}\right)Y.
\end{align}
More precisely, \cite[Corollary 4.16]{girardi2024} proved that $\{f^{\lin}(\Theta_{t},x)\}_{x\in\X}$ converges to the Gaussian process with mean and covariance defined by \eqref{newmedia}, and \eqref{covlimit}, respectively.

\section{Main results}\label{Sec:results}
In this section we state our main results.
\begin{thm}[Convergence at initialization, formal statement]\label{thmwass1}
Suppose that \autoref{A1}-\autoref{A2} hold true.
Let $\ov{\K}_{0}\coloneqq \K_{0}(\ov{X},\ov{X}^{T})$ be the covariance matrix of the quantum neural network at initialization. Then, we have
\begin{align}\label{ratewasst1}
\de_{\w}\left(f(\Theta,\ov{X}),\mathcal{N}(0,\ov{\K}_{0})\right)\leq\ov{N}^{\frac{3}{2}}\left(2\norma{\ov{\K}_{0}^{-1}}_{{\rm op}}\norma{\ov{\K}_{0}}_{{\rm op}}^{\frac{1}{2}}+6\right)\frac{m|\mathcal{M}|^{7/2}|\mathcal{N}|^{7/2}}{N^3(m)}\left(1+\log N(m)\right),
\end{align}
where $\mathcal{N}(0,\ov{\K}_{0})$ denotes a centered Gaussian process with covariance $\ov{\K}_{0}$. 
\end{thm}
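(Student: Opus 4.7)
The plan is to combine the Nourdin--Peccati Stein bound (\autoref{boundpec} applied together with the Wasserstein representation of \eqref{wasest1}) with a local-dependence argument built from the light-cone structure of the circuit. I first apply this bound directly to $F := f(\Theta,\ov{X})$ with $\Sigma = \ov{\K}_0$; it is then enough to estimate
\[
\Bigl|\E \nm{\ov{\K}_0, \mathrm{Hess}\,f(F)}_{\mathrm{HS}} - \E \nm{F, \nabla f(F)}\Bigr|
\]
uniformly over test functions $f\in\mathscr{F}_{\ov{N}}(\ov{\K}_0)$, whose Hessian is bounded in Hilbert--Schmidt norm by $\sqrt{\ov{N}}\norma{\ov{\K}_0^{-1}}_{\mathrm{op}}\norma{\ov{\K}_0}_{\mathrm{op}}^{1/2}$. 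I will moreover need a quantitative modulus-of-continuity control $\omega$ for $\mathrm{Hess}\,f$, deferred to \autoref{sec:regularity}.

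Next I decompose $F = \sum_{k=1}^m Y_k$ componentwise, with $(Y_k)_a = f_k(\Theta, x^{(a)})/N(m)$. By \autoref{A1}--\autoref{A2} each $Y_k$ is centered and satisfies $\norma{Y_k}_2 \le \sqrt{\ov{N}}/N(m)$. The graph $\mathscr{G}$ encodes the independence structure: writing $Y_k$ as a function of the parameters in $\mathcal{N}_k$, a direct parameter-overlap check (using $k''\notin\widetilde{\mathcal{P}}_k \Rightarrow \mathcal{P}_{k''}\cap \{k\}\cup\mathcal{P}_k = \emptyset$) shows that $Y_k$, jointly with $\{Y_{k'}\}_{k'\in\mathcal{P}_k}$, is independent of
\[
\tildo{F}^{(k)} := F - \sum_{k''\in \widetilde{\mathcal{P}}_k} Y_{k''}.
\]
Moreover, since $\E[Y_k] = 0$ and $\E[Y_{k,a}\,Y_{k'',b}] = 0$ for $k''\notin\mathcal{P}_k$, the covariance splits as $(\ov{\K}_0)_{ab} = \sum_k\sum_{k''\in\mathcal{P}_k}\E[Y_{k,a}Y_{k'',b}]$.

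I then Taylor-expand $\partial_a f(F) = \partial_a f(\tildo{F}^{(k)}) + \sum_b (F-\tildo{F}^{(k)})_b\int_0^1 \partial_a\partial_b f(\tildo{F}^{(k)} + s(F-\tildo{F}^{(k)}))\,ds$, multiply by $Y_{k,a}$, and sum over $k,a$. The zeroth-order term vanishes because $\E[Y_k] = 0$ and $Y_k$ is independent of $\tildo{F}^{(k)}$. Freezing $\partial_a\partial_b f$ at $\tildo{F}^{(k)}$ inside the integral and invoking the independence of $\{Y_{k'}\}_{k'\in\mathcal{P}_k}$ from $\tildo{F}^{(k)}$ to factor the expectation, the leading first-order contribution equals $\sum_k\sum_{a,b}\sum_{k''\in\mathcal{P}_k}\E[Y_{k,a}Y_{k'',b}]\,\E[\partial_a\partial_b f(\tildo{F}^{(k)})]$, which matches $\E\nm{\ov{\K}_0, \mathrm{Hess}\,f(F)}_{\mathrm{HS}}$ term by term in $k$ up to the shift $\tildo{F}^{(k)} \leftrightarrow F$. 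The only residues are a segment error $\partial_a\partial_b f(\tildo{F}^{(k)} + s(\cdot)) - \partial_a\partial_b f(\tildo{F}^{(k)})$ and a target-shift error $\partial_a\partial_b f(F) - \partial_a\partial_b f(\tildo{F}^{(k)})$, both controlled by $\omega(\norma{F-\tildo{F}^{(k)}}_2)$ with $\norma{F-\tildo{F}^{(k)}}_2\lesssim |\widetilde{\mathcal{P}}_k|\sqrt{\ov{N}}/N(m)$ and $|\widetilde{\mathcal{P}}_k|,|\mathcal{P}_k|\le|\mathcal{M}|^2|\mathcal{N}|^2$ by \autoref{stimacone1} and \autoref{lem:stimatildeD}.

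The main obstacle is the quantitative regularity estimate for the Hessian of the Stein solution $f_h$ defined in \eqref{solpde1}. \autoref{boundpec} supplies only uniform $L^\infty$ control of $\norma{\mathrm{Hess}\,f_h}_{\mathrm{HS}}$; differentiating the Ornstein--Uhlenbeck integral representation once more produces a logarithmic singularity at the origin, so the best attainable modulus is of ``almost-Lipschitz'' form $\omega(r) \lesssim \sqrt{\ov{N}}\norma{\ov{\K}_0^{-1}}_{\mathrm{op}}\norma{\ov{\K}_0}_{\mathrm{op}}^{1/2}\cdot r(1+\log(1/r))$, which is precisely the source of the $1+\log N(m)$ factor in \eqref{ratewasst1}. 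This regularity analysis is carried out in \autoref{sec:regularity}. Combining $\omega$ with the diameter bounds above, the uniform moment bound $\norma{Y_k}_2\le\sqrt{\ov{N}}/N(m)$, the Cauchy--Schwarz estimates in the indices $a,b\in\{1,\ldots,\ov{N}\}$, and the sum over the $m$ qubits produces the announced scaling of the right-hand side of \eqref{ratewasst1}, with the additive $+6$ inside the parenthesis absorbing constants coming from the ``constant part'' of $\omega$ that is not multiplied by $\norma{\ov{\K}_0^{-1}}_{\mathrm{op}}\norma{\ov{\K}_0}_{\mathrm{op}}^{1/2}$.
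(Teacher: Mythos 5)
Your argument is a valid local-dependence Stein-method proof and is structurally close to the paper's, but the two take a genuinely different route at the crucial decomposition step, and the trade you make does not quite reproduce the constants in \eqref{ratewasst1}. The paper sets $W_i=W-Z_i$ with $Z_i=\sum_{k\in\mathcal{P}_i}f_k/N(m)$ removing only the \emph{one-step} neighborhood $\mathcal{P}_i$. This keeps the displacement $\|Z_i\|_2\lesssim D\sqrt{\ov{N}}/N(m)$ small inside the modulus of continuity, but $Z_i$ is then not independent of $W_i$, so the leading-order term cannot be factorized; the paper instead compares $X_iZ_i^T$ with its mean $C_i$ and controls the fluctuation via $\sqrt{\E\|\sum_i(C_i-X_iZ_i^T)\|_{\rm HS}^2}$, where the pairwise independence scale $\widetilde{D}$ appears. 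You remove the \emph{two-step} neighborhood $\widetilde{\mathcal{P}}_k$ so that $(Y_k,\{Y_{k'}\}_{k'\in\mathcal{P}_k})$ is jointly independent of $\tildo{F}^{(k)}$; this nicely eliminates the $L^2$ fluctuation term (and your justification that the terms with $k''\in\widetilde{\mathcal{P}}_k\setminus\mathcal{P}_k$ vanish because $Y_k\perp(Y_{k''},\tildo{F}^{(k)})$ is correct), but it enlarges the displacement in the $\omega$-argument to $\|F-\tildo{F}^{(k)}\|_2\lesssim|\widetilde{\mathcal{P}}_k|\sqrt{\ov{N}}/N(m)\lesssim D^2\sqrt{\ov{N}}/N(m)$. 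Summing over the $m$ qubits, this gives a bound of order
\begin{align*}
\frac{m\,\ov{N}^{2}\,|\mathcal{M}|^{4}|\mathcal{N}|^{4}}{N^{3}(m)}\left(1+\log N(m)\right),
\end{align*}
which has strictly larger exponents than the announced $\ov{N}^{3/2}|\mathcal{M}|^{7/2}|\mathcal{N}|^{7/2}$, so your claim that your estimates ``produce the announced scaling'' of \eqref{ratewasst1} is an overstatement. Two further minor points: (i) you attribute the $\|\ov{\K}_0^{-1}\|_{\rm op}\|\ov{\K}_0\|_{\rm op}^{1/2}$ factor to the modulus of continuity of the Hessian, whereas in the paper that factor comes from the uniform Hessian bound of \autoref{boundpec} applied to the $L^2$ fluctuation term, while the modulus from \autoref{propreg1} carries only the dimension-dependent constant $C(1,\ov{N})\le 6\sqrt{\ov{N}}$ (which is the true origin of the ``$+6$''); and (ii) since your route uses only the modulus and never the uniform Hessian bound, it is unclear how you would recover the two-term structure $\left(2\|\ov{\K}_0^{-1}\|_{\rm op}\|\ov{\K}_0\|_{\rm op}^{1/2}+6\right)$ at all. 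In summary, you have a correct proof of a weaker (but qualitatively equivalent) bound, not of the exact inequality \eqref{ratewasst1}.
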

\begin{proof}
    See \autoref{sec:init}.
\end{proof}

\begin{thm}[Convergence of the trained network, formal statement]\label{thmwass4}
Suppose that \autoref{A1}, \autoref{A2}, \autoref{A3}, and \autoref{A4} hold true. Let $\ov{\K}_{0}\coloneqq \K_{0}(\ov{X},\ov{X}^{T})$ be the covariance matrix of the quantum neural network at initialization.
Then, for any $s>0$ and any $t>0$, we have

\begin{align}\label{otherrate4}
 \begin{aligned}
\de_{\w}^{(s)}\left(f(\Theta_{t},\ov{X}),\mathcal{N}(\mu_{t}(\ov{X}),\ov{\K}_{t})\right)&\leq C(\ov{N}, n, \|Y\|_2, \lambda_{\min}^K, \ov{\K}_{0}, s)  \frac{L^2m^{9/4}|\mathcal{M}|^{11/2}|\mathcal{N}|^{9/2}}{N^5(m)}\left(1+\log N(m)\right),
    \end{aligned}
\end{align}
 where 
\begin{align}\label{newcostante}
\begin{aligned}
    C(\ov{N}, n, \|Y\|_2, \lambda_{\min}^K, \ov{\K}_{0}, s)&\coloneqq 96 s\ov{N} +8\sqrt{\ov{N}}\norma{\ov{\K}_{0}^{-1}}_{{\rm op}}\norma{\ov{\K}_{0}}_{{\rm op}}^{\frac{1}{2}}(\ov{N}+4n^2)\\
    &\quad +132n^{2}\sqrt{\ov{N}} \left(2\|Y\|_2^2+4n\right)\,\left(1+\frac{27}{\left(\lambda_{\min}^{K}\right)^{3}}\right)\\
    &\quad+2\left(\sqrt n + \|Y\|_2\right)\left(\frac{1}{\lambda_{\min}^{K}} +\frac{6}{\left(\lambda_{\min}^{K}\right)^{2}}\sqrt{\ov{N} n}\right),
\end{aligned}
\end{align} 
and $\mathcal{N}(\mu_{t}(\ov{X}),\ov{\K}_{t})$ denotes Gaussian process with mean vector $\mu_{t}(\ov{X})$ and covariance matrix $\ov{\K}_{t}\coloneqq \K_{t}(\ov{X},\ov{X}^{T})$, where $\mu_{t}$ and $\K_{t}$  are defined by \eqref{newmedia} and \eqref{covlimit}, respectively.
\end{thm}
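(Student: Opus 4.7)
The plan is to decompose the truncated Wasserstein distance via the triangle inequality into three contributions: (i) a \emph{lazy-training} term comparing $f(\Theta_t,\ov{X})$ with its linearized counterpart $f^{\mathrm{lin}}(\Theta_t^{\mathrm{lin}},\ov{X})$, (ii) an \emph{NTK concentration} term replacing the empirical NTK $\hat K_{\Theta_0}$ by the analytic one $K$ in the closed-form solution \eqref{linevolmod1}, and (iii) a \emph{Gaussian convergence at initialization} term, where the remaining quantity is an explicit affine function of the initial random vector $f(\Theta_0,\ov{X})$ whose image of $\mathcal{N}(0,\ov{\K}_0)$ is exactly $\mathcal{N}(\mu_t(\ov{X}),\ov{\K}_t)$. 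The four summands appearing in the constant \eqref{newcostante} correspond to the three steps plus the truncation mass used to handle low-probability bad events.

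For (i), I would couple the two dynamics by sharing $\Theta_0$ and apply \autoref{intro:thm3}: with probability at least $1-\delta$ the supremum over $x\in\X$, $t\ge 0$ of $|f(\Theta_t,x)-f^{\mathrm{lin}}(\Theta_t^{\mathrm{lin}},x)|$ is bounded by $\eta_{m,n,\delta}$, giving an $\ell^2$ bound of $\sqrt{\ov{N}}\,\eta_{m,n,\delta}$. On the complementary event I upper bound the truncated distance by $s$, and optimize $\delta$. This produces the term $96 s\ov{N}$ in \eqref{newcostante} together with the main $1/N(m)$-type decay of the theorem. The uniformity in $t$ of \autoref{intro:thm3} is precisely what lets the final bound \eqref{otherrate4} be uniform in $t$ as well.

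For (iii), after replacing $\hat K_{\Theta_0}$ by $K$ the trained linearized value becomes a deterministic affine map of the random vector $f(\Theta_0,\ov{X})$. Writing it as $T u+b$ with
\begin{equation*}
 T = I - K(\ov{X},X^T)\,K^{-1}\bigl(I-e^{-\eta t K}\bigr)\,\Pi_X,\qquad b = K(\ov{X},X^T)\,K^{-1}\bigl(I-e^{-\eta t K}\bigr)Y,
\end{equation*}
where $\Pi_X$ is the coordinate projection onto the training inputs, the image $T_\#\mathcal{N}(0,\ov{\K}_0)+b$ is precisely $\mathcal{N}(\mu_t(\ov{X}),\ov{\K}_t)$ by \eqref{newmedia}--\eqref{covlimit}. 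Since $\de_{\w}^{(s)}(TF+b,TN+b)\le\|T\|_{\mathrm{op}}\,\de_{\w}(F,N)$ and $\|T\|_{\mathrm{op}}\le 1+\|\ov{\K}_0\|_{\mathrm{op}}/\lambda_{\min}^{K}$, invoking \autoref{thmwass1} for $\de_{\w}(f(\Theta_0,\ov{X}),\mathcal{N}(0,\ov{\K}_0))$ yields the second line of \eqref{newcostante}, and explains the $\ov{N}^{3/2}$-type growth inherited from \eqref{ratewasst1}.

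The main obstacle is step (ii): the empirical NTK is random, nonlinear functions of it ($\hat K_{\Theta_0}^{-1}$ and $e^{-\eta t \hat K_{\Theta_0}}$) appear in \eqref{linevolmod1}, and a naive perturbation estimate $\|e^{-\eta t\hat K}-e^{-\eta t K}\|_{\mathrm{op}}\le \eta t\|\hat K-K\|_{\mathrm{op}}$ would blow up as $t\to\infty$ and destroy the uniform-in-time bound. The way around this is to notice that $\hat K$ enters \eqref{linevolmod1} only through the combination $\hat K^{-1}(I-e^{-\eta t \hat K})=\int_0^{\eta t}e^{-s\hat K}\,ds$, which, using $\lambda_{\min}^{\hat K}\gtrsim\lambda_{\min}^K$ (valid with high probability under \autoref{A4}), is bounded uniformly in $t$ by $1/\lambda_{\min}^{\hat K}$ and is Lipschitz in $\hat K$ with constant of order $1/(\lambda_{\min}^K)^2$. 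Combining this with a concentration bound $\|\hat K_{\Theta_0}-K\|_{\mathrm{op}}\le\varepsilon$ obtained from \autoref{A4} (the variance of $\hat K_{\Theta_0}$ is controlled by light-cone counting exactly as in the proof of \autoref{thmwass1}), and absorbing the low-probability complement into $s$, produces the third and fourth lines of \eqref{newcostante}, including the $1/(\lambda_{\min}^K)^2$ and $1/(\lambda_{\min}^K)^3$ factors and the dependence on $\|Y\|_2$ and $n$ inherited from $\|F(0)-Y\|_2$. Summing the three contributions and tracking the prefactors given by \autoref{thmwass1} and \autoref{intro:thm3} yields the announced scaling $L^2 m^{9/4}|\mathcal{M}|^{11/2}|\mathcal{N}|^{9/2} N(m)^{-5}(1+\log N(m))$.
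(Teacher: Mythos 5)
Your decomposition matches the paper's proof essentially step for step: the paper splits $\de_{\w}^{(s)}(f(\Theta_t,\ov{X}),\mathcal{N}(\mu_t,\ov{\K}_t))$ into $\de_{\w}^{(s)}(f,f^{\lin})$ (handled via \autoref{newgradfl}, your step (i)) plus $\de_{\w}^{(s)}(f^{\lin},\mathcal{N}(\mu_t,\ov{\K}_t))$, and then further splits the latter (\autoref{thmwass2}) into the NTK-concentration piece $\de_{\w}^{(s)}(f^{\lin},\tilde f^{\lin})$ (your step (ii)) and the affine-image piece $\de_{\w}^{(s)}(\tilde f^{\lin},Z_t)$ (your step (iii)). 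Your resolution of the main obstacle in (ii) is the right one and matches the paper: rather than using the naive $\eta t\|\hat K - K\|$ bound, the paper differentiates $M(\Sigma)=\Sigma(\ov X,X^T)\Sigma^{-1}(\I-e^{-t\Sigma})$ and kills the $t$-factor via $te^{-t\lambda}\le 1/(e\lambda)$, exactly the uniform-in-$t$ control you describe through the integral representation $\int_0^{\eta t}e^{-s\hat K}\,ds$.

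Three bookkeeping corrections. First, in step (iii) the operator-norm bound should read $\|T\|_{\rm op}\le 1+\|K(\ov X,X^T)\|_{\rm op}/\lambda_{\min}^K$, not $\|\ov{\K}_0\|_{\rm op}/\lambda_{\min}^K$ — what sits in front of $K^{-1}(\I-e^{-\eta tK})$ is the cross-kernel $K(\ov X,X^T)$, which the paper bounds by $4\sqrt{\ov N n}\,Lm|\mathcal M|^2/N^2(m)$; using $\ov{\K}_0$ would produce a different final constant. The paper in fact sidesteps the global $\|T\|_{\rm op}$ estimate by splitting $T$ into the identity plus the remainder and applying \autoref{thmwass1} once on $\ov X$ and once on $X$, giving $C_m^{\ov N,\ov{\K}_0}+\frac{\|K(\ov X,X^T)\|_{\rm op}}{\lambda_{\min}^K}C_m^{n,\ov{\K}_0}$, which is a bit sharper since $C_m^{n,\ov{\K}_0}$ carries an $n$ where $C_m^{\ov N,\ov{\K}_0}$ carries $\ov N$. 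Second, the attribution of the lines of \eqref{newcostante} is shifted: the $132n^2\sqrt{\ov N}(2\|Y\|_2^2+4n)(\cdots)$ line is the lazy-training contribution from step (i), the $8\sqrt{\ov N}\|\ov{\K}_0^{-1}\|_{\rm op}\|\ov{\K}_0\|_{\rm op}^{1/2}(\ov N+4n^2)$ term is the Gaussian-at-initialization contribution from step (iii), and the $2(\sqrt n+\|Y\|_2)(\cdots)$ line together with the $s$-multiples is the NTK-concentration/tail piece from step (ii). Third, the paper does not optimize over the failure probability $\delta$; it simply fixes $\delta_m=1/\sqrt{N(m)}$ in \autoref{f_flin}, and correspondingly $\varepsilon=1/\sqrt{N(m)}$ in \autoref{thmwass2}. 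None of these change the architecture of the argument, but they do change the constants obtained.
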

\begin{proof}
    See \autoref{sec:train}.
\end{proof}

\begin{rem}
The convergence rates in \autoref{thmwass1} and \autoref{thmwass4} strongly depends on the number of possible inputs $\ov{N}$.
We stress that, if we are interested in only a limited set of inputs, we can improve the rates by restrict $\mathcal{X}$ to such set.
For example, we can choose $\mathcal{X}$ to be the set of the training inputs together with a single test input and get $\ov{N} = n+1$. Furthermore, let us notice that the right hand side of \eqref{otherrate4} does not depend on the training time, and therefore this bound is uniform in time, and holds for finite width even in the limit of $t\rightarrow +\infty$.
\end{rem}

\section{Convergence at initialization}\label{sec:init}
In this Section, we prove \autoref{thmwass1}. 
\begin{proof}
Let $h:\R^{N}\rightarrow \R$ be a Lipschitz function with Lipschitz constant equal to one. Let $\ov{X}$ be the vector of all the inputs in $\mathcal{X}$.
Let
\begin{align}
\begin{aligned}
&W_{i}\coloneqq \sum_{k\notin \mathcal{P}_{i}}\frac{f_{k}(\Theta,\ov{X})}{N(m)}, \hskip 0,2cm Z_{i}\coloneqq \sum_{k\in \mathcal{P}_{i}}\frac{f_{k}(\Theta,\ov{X})}{N(m)}. \hskip 0,2cm \text{for $i=1,\ldots,m$;}\\
&\text{$X_{i}\coloneqq \frac{f_{i}(\Theta,\ov{X})}{N(m)}$ for $i=1,\ldots,m$.}
\end{aligned}
\end{align}
For the sake of a clean notation, let $W\coloneqq f(\Theta,\ov{X})$, and we set $Hf\coloneqq {\rm Hessian}f$ to denote the Hessian of a generic smooth function $f:\R^{d}\rightarrow \R$. Notice that
\begin{align}
 W=W_{i} + Z_{i}.  
\end{align}
Let us denote by $g$ one solution of \eqref{pdestein1} associated to $h$. Further, by \autoref{boundpec} one gets that

\begin{align}\label{steinmod}
\E\left(h\left(W\right)-h(\mathcal{N}(0,\ov{\K}_{0}))\right)=\E\left(\nm{\ov{\K}_{0},Hg\left(W\right)}_{{\rm HS}}-W\cdot \nabla\,g\left(W\right)\right).
\end{align}
Let us now choice a further random variable $\theta$ independent from all the variables $X_{i}$ and uniformly distributed in $[0,1]$. By the fundamental theorem of calculus, we write for $x,u,v\in \R^{\ov{N}}$

\begin{align}
\nm{\nabla\,g(x+v)-\nabla\,g(x),u}=\E\left(\nm{Hg(x+\theta v),uv^{T}}_{\rm HS}\right). 
\end{align}
Since $W_{i}, X_{i}$ are independent, then we obtain

\begin{align}
\begin{aligned}
\E\left(\nm{X_{i},\nabla\, g(W_{i}+Z_{i})}\right)&= \E\left(\nm{X_{i},\nabla\, g(W_{i}+Z_{i})-\nabla\, g(W_{i})}\right)\\
&=\E\left(\nm{Hg(W_{i}+\theta Z_{i}),X_{i}Z_{i}^{T}}_{\rm HS}\right).
\end{aligned}
\end{align}
Then, we insert this equality into \eqref{steinmod} to obtain that

\begin{align}
\E\left(h\left(W\right)-h(\mathcal{N}(0,\ov{\K}_{0}))\right)=\E\left(\nm{\ov{\K}_{0},Hg(W)}_{\rm HS}-\sum_{i=1}^{m}\nm{X_{i}Z_{i}^{T},Hg\left(W_{i}+\theta Z_{i}\right)}_{\rm HS}\right).
\end{align}
Furthermore, we write

\begin{align}\label{tobound1}
\begin{aligned}
\E\left(h\left(W\right)-h(\mathcal{N}(0,\ov{\K}_{0}))\right)&=\E\left(\nm{\ov{\K}_{0}-\sum_{i=1}^{m}X_{i}Z_{i}^{T},Hg(W)}_{\rm HS}\right)\\
&+\E\left(\sum_{i=1}^{m}\nm{X_{i}Z_{i}^{T},Hg(W_{i}+Z_{i})-Hg(W_{i}+\theta Z_{i})}_{\rm HS}\right).
\end{aligned}
\end{align}
Note then that 
\begin{align}
\left\vert\E\sum_{i=1}^{m}\nm{X_{i}Z_{i}^{T},Hg(W_{i}+Z_{i})-Hg(W_{i}+\theta Z_{i})}_{\rm HS}\right\vert&\leq \E\sum_{i=1}^{m}\norma{X_{i}}_{2}\norma{Z_{i}^{T}}_{2}\norma{Hg(W_{i}+ Z_{i})-Hg(W_{i}+\theta Z_{i})}_{\rm op}.
\end{align}
On the other hand, notice that by \autoref{propreg1} with $d=\ov{N}$, we get for $x,y\in \R^{\ov{N}}$,
\begin{align}\label{ourcont}
\norma{Hg(x)-Hg(y)}_{\rm op} \leq  \omega(\ov{N},\norma{x-y}_{2}),
\end{align}
where $\omega$ denotes the modulus of continuity of $Hg$. Hence, by \eqref{ourcont}, and since the maximum number of qubits involved in the definition of $Z_{i}$ are $D$ as defined in \eqref{eq:defZi} and \eqref{maxdegree}, we have that

\begin{align}\label{pass2}
\begin{aligned}
\left\vert\E\sum_{i=1}^{m}\nm{X_{i}Z_{i}^{T},Hg(W_{i}+Z_{i})-Hg(W_{i}+\theta Z_{i})}_{\rm HS}\right\vert&\leq\E\sum_{i=1}^{m}\norma{X_{i}}_{2}\norma{Z_{i}^{T}}_{2} \omega(\ov{N},(1-\theta)\norma{ Z_{i}}_{2})\\
&\leq\E\sum_{i=1}^{m}\norma{X_{i}}_{2}\norma{Z_{i}^{T}}_{2}\omega(\ov{N},\norma{Z_{i}}_{2})\\
&\leq \frac{Dm\ov{N}}{(N(m))^{2}}\omega\left(\ov{N},\frac{D}{N(m)}\right).
\end{aligned}
\end{align}
Let us now proceed to bound the first term on the right hand side of \eqref{tobound1}. Let us define

\begin{align}
C_{i}\coloneqq \E\left(X_{i}Z_{i}^{T}\right),
\end{align}
and observe that $\displaystyle\sum_{i=1}^{m}C_{i}=\E\displaystyle\sum_{i=1}^{m}X_{i}W^{T}=\ov{\K}_{0}$
where we have used that $X_{i}$ and $W_{i}$ are independent, for all $i=1,\ldots, m$. Since $g\in \mathscr{F}_{d}(\ov{\K}_{0})$, we have that $Hg$ satisfies \eqref{boundpde1}, and thus

\begin{align}\label{critical1}
\begin{aligned}
\left\vert\E\nm{\ov{\K}_{0}-\sum_{i=1}^{m}X_{i}Z_{i}^{T},{\rm Hessian}\,g(W)}_{\rm HS}\right\vert &\leq\sqrt{\ov{N}}\norma{\ov{\K}_{0}^{-1}}_{{\rm op}}\norma{\ov{\K}_{0}}_{{\rm op}}^{\frac{1}{2}} \E\norma{\ov{\K}_{0}-\sum_{i=1}^{m}X_{i}Z_{i}^{T}}_{{\rm HS}}\\
&\leq\sqrt{\ov{N}}\norma{\ov{\K}_{0}^{-1}}_{{\rm op}}\norma{\ov{\K}_{0}}_{{\rm op}}^{\frac{1}{2}}\sqrt{\E\norma{\ov{\K}_{0}-\sum_{i=1}^{m}X_{i}Z_{i}^{T}}_{\rm HS}^{2}}\\
&=\sqrt{\ov{N}}\norma{\ov{\K}_{0}^{-1}}_{{\rm op}}\norma{\ov{\K}_{0}}_{{\rm op}}^{\frac{1}{2}}\sqrt{\E\norma{\sum_{i=1}^{m}(C_{i}-X_{i}Z_{i}^{T})}_{{\rm HS}}^{2}}\\
&=\sqrt{\ov{N}}\norma{\ov{\K}_{0}^{-1}}_{{\rm op}}\norma{\ov{\K}_{0}}_{{\rm op}}^{\frac{1}{2}}\sqrt{\E\norma{\sum_{i=1}^{m}M_{i}}_{{\rm HS}}^{2}},
\end{aligned}
\end{align}
where we have defined $M_{i}\coloneqq C_{i}-X_{i}Z_{i}^{T}$, for $i=1,\ldots,m$. Notice that from \autoref{usedremark1}, by the definition of $Z_{i}$, and $Z_{j}$, these are independent whenever $\mathcal{P}_{i}\cap \widetilde{\mathcal{P}}_{j}=\emptyset$. Notice that the same holds true for $M_{i}$ and $M_{j}$ for any $i,j=1,\ldots, m$. Furthermore, since the maximum number of qubits involved in the definition of $Z_{i}$ are $D$ as defined in \eqref{maxdegree}, then
\begin{align}
    \norma{C_{i}}_{\rm HS}\,,\quad \E\norma{X_{i}Z_{i}^{T}}_{\rm HS}\leq \frac{D\ov{N}}{(N(m))^{2}},
\end{align}
so that 
\begin{align}\label{pass3}
    \norma{M_{i}}_{\rm HS}\leq  \frac{2D\ov{N}}{(N(m))^{2}}.
\end{align}
Then, by inserting in \eqref{tobound1},  inequalities \eqref{pass2}, \eqref{critical1}, and \eqref{pass3}, we conclude that
\begin{align}
\begin{aligned}
\E\vert h\left(W\right)-h(\mathcal{N}(0,\ov{\K}_{0}))\vert&\leq \sqrt{\ov{N}}\norma{\ov{\K}_{0}^{-1}}_{{\rm op}}\norma{\ov{\K}_{0}}_{{\rm op}}^{\frac{1}{2}}\sqrt{\E\norma{\sum_{i=1}^{m}M_{i}}_{{\rm HS}}^{2}}+\frac{Dm\ov{N}}{(N(m))^{2}}\omega\left(\ov{N},\frac{D}{N(m)}\right)\\
&\leq \sqrt{\ov{N}}\norma{\ov{\K}_{0}^{-1}}_{{\rm op}}\norma{\ov{\K}_{0}}_{{\rm op}}^{\frac{1}{2}} \sqrt{\frac{\widetilde{D}(2\ov{N}D)^{2}}{(N(m))^{4}}}+\frac{Dm\ov{N}}{(N(m))^{2}}\omega\left(\ov{N},\frac{D}{N(m)}\right)\\\label{modform1}
&\eqqcolon C_{m}^{\ov{N},\ov{\K}_{0}},
\end{aligned}
\end{align}
where in the last inequality, we have used that if $i,j$ are observable such that $M_{i},M_{j}$ are independent, then their components do not give any contribution to the sum on the right hand side of \eqref{modform1} (recall that $\widetilde{D}$ is the constant defined in \eqref{indmeas}, and denotes the maximum over all possible related observables $i,j$). By taking the supremum over all the functions with Lipschitz constant one, we obtain the left-hand side of \eqref{ratewasst1}. That is, 
\begin{align}\label{dnewwass1}
\de_{\w}\left(f(\Theta,\ov{X}),\mathcal{N}(0,\ov{\K}_{0})\right)\leq C_{m}^{\ov{N},\ov{\K}_{0}}.
\end{align}
It remains to bound this constant $C_{m}^{\ov{N},\ov{\K}_{0}}$. First, we notice that
\begin{align}
\nonumber
\mathcal{K}_0(x,x)&=\frac{1}{N^2(m)}\sum_{k,k'=1}^m\mathbb{E}[f_k(\Theta,x)f_{k'}(\Theta,x)]\\
\nonumber&=\frac{1}{N^2(m)}\sum_{k=1}^m\left(\sum_{k'\in\mathcal{P}_k}\mathbb{E}[f_k(\Theta,x)f_{k'}(\Theta,x)]
+\sum_{k'\notin\mathcal{P}_k}\mathbb{E}[f_k(\Theta,x)]\mathbb{E}[f_{k'}(\Theta,x)]\right)\\
\nonumber&=\frac{1}{N^2(m)}\sum_{k=1}^m\sum_{k'\in\mathcal{P}_k}\mathbb{E}[f_k(\Theta,x)f_{k'}(\Theta,x)]\\
\nonumber&\leq \frac{1}{N^2(m)}\sum_{k=1}^m\sum_{k'\in\mathcal{P}_k}\mathbb{E}[|f_k(\Theta,x)||f_{k'}(\Theta,x)|]\\
&\leq \frac{1}{N^2(m)}\sum_{k=1}^m\sum_{k'\in\mathcal{P}_k}1\leq \frac{1}{N^2(m)}m\max_k|\mathcal{P}_k|\leq \frac{1}{N^2(m)}m|\mathcal{M}||\mathcal{N}|.
\end{align}\\
where the last inequality follows from \autoref{stimacone1}. Therefore,
\begin{align}\label{eq:inequalityNm}
    1=\max_{x\in\mathcal{X}}\mathcal{K}_0(x,x)\leq \frac{1}{N^2(m)}m|\mathcal{M}||\mathcal{N}|.
\end{align}
 
 We recall that
\begin{align}
\begin{aligned}
\omega(d,x)\coloneqq 
\begin{cases}
x\left(C(1,d)-2\log\,x\right), & \text{if $ x \leq 1$,}\\
C(1,d), &\text{if $x>1$.}
\end{cases}
\qquad\text{with}\qquad 
 C(1,d)\coloneqq 2^{\frac{3}{2}}\frac{1+2d}{d}\frac{\Gamma\left(\frac{1+d}{2}\right)}{\Gamma\left(\frac{d}{2}\right)}. 
\end{aligned}
\end{align}
Recalling Wendel's inequality \cite{Wendel}, valid for any $x>0$ and $s\in (0,1)$,
\begin{align}
    \frac{\Gamma(x+s)}{\Gamma(x)}\leq x^s,
\end{align}
we set $x=d/2$ and $s=1/2$ in order to upper bound
\begin{align}
\begin{aligned}
C(1,d)&\leq 2^{\frac{3}{2}}\frac{1+2d}{d}\sqrt{\frac{d}{2}}\leq 2+4\sqrt{d}. 
\end{aligned}
\end{align}
In particular, for $d\geq 1$ we can upper bound $C(1,d)\leq 6\sqrt{d}$.
Furthermore,
\begin{align}
\begin{aligned}
\omega\left(d,\frac{D}{N(m)}\right)&=  
\begin{cases}
\frac{D}{N(m)}\left(C(1,d)-2\log D+\log N(m)\right) &\text{if $ D \leq N(m)$,}\\
C(1,d) &\text{if $D>N(m)$,}
\end{cases}\\
&\leq
\begin{cases}
\frac{D}{N(m)}\left(C(1,d)+\log N(m)\right) &\text{if $ D \leq N(m)$,}\\
\frac{D}{N(m)}C(1,d) &\text{if $D>N(m)$,}
\end{cases}\\
&\leq \frac{D}{N(m)}C(1,d)\left(1+\log N(m)\right)\\
&\leq 6\sqrt d \frac{D}{N(m)}\left(1+\log N(m)\right),
\end{aligned}
\end{align}
where we have noticed that $d,D,N(m)\geq 1$. This, combined with \autoref{stimacone1} ($D\leq |\mathcal{M}||\mathcal{N}|$) and \autoref{lem:stimatildeD} ($\widetilde D\leq (|\mathcal{M}||\mathcal{N}|)^4$), implies
\begin{align}\label{fcostantona}
\begin{aligned}
    C_{m}^{\ov{N},\ov{\K}_{0}}&\leq  2\ov{N}\sqrt{\ov{N}}\norma{\ov{\K}_{0}^{-1}}_{{\rm op}}\norma{\ov{\K}_{0}}_{{\rm op}}^{\frac{1}{2}}\frac{|\mathcal{M}|^3|\mathcal{N}|^3}{N^2(m)}+6\ov{N}\sqrt{\ov{N}}\frac{m|\mathcal{M}|^2|\mathcal{N}|^2}{N^3(m)}\left(1+\log N(m)\right)\\
    &\leq\ov{N}^{\frac{3}{2}}\left(2\norma{\ov{\K}_{0}^{-1}}_{{\rm op}}\norma{\ov{\K}_{0}}_{{\rm op}}^{\frac{1}{2}}+6\right)\frac{m|\mathcal{M}|^{7/2}|\mathcal{N}|^{7/2}}{N^3(m)}\left(1+\log N(m)\right).
\end{aligned}
\end{align}
where in the second inequality we have used \eqref{eq:inequalityNm}.
\end{proof}

\section{Convergence of the trained network}\label{sec:train}

\subsection{Quantitative lazy training for a finite size circuit}
In this section, we provide a renewed proof of some results in \cite{girardi2024} that takes into account the quantitative contribution of the finite-size effects and where we do not need to build a sequence of networks with diverging width.
Therefore, the results will apply to any network satisfying the expressivity condition \eqref{hp:lambda_pos}.

\begin{thm}[Lazy training, formal statement]
\label{newgradfl}
Suppose that \autoref{A1}, \autoref{A2} and \autoref{A3} hold true, and let
\begin{align}\label{eq:R2}
    R(\delta)&\coloneqq \|Y\|_2+\sqrt{\frac{2n}{\delta}}.
\end{align}
Let us assume that, fixed a constant $0<\delta<1$, it holds that
\begin{align}\label{hp:lambda_pos}
    \lambda_{\min}^{K}\geq \frac{96 n\sqrt{Lm}|\mathcal{M}|^2|\mathcal{N}|}{N^2(m)}\sqrt{\log\frac{2n^2}{\delta}}.
\end{align}
Then, there exist a positive number $\widetilde{\lambda}_{\min}(\delta)$ satisfying
\begin{align}\label{relation1}
    \widetilde{\lambda}_{\min}(\delta)\geq \frac{1}{3}\lambda_{\min}^{K},
\end{align} whose explicit expression is provided in \eqref{eq:deflambda} below, such that, when applying gradient flow with learning rate $\eta$, the following inequalities hold with probability at least $1-\delta$ over random initialization:
\begin{align}
\label{grad1}\mathcal{L} (\Theta_t)&\leq \frac{R^2(\delta)}{2} e^{-2\eta\widetilde{\lambda}_{\min}(\delta)t} &\forall\,  t\geq 0,\\
\label{grad2}\|\Theta_t-\Theta_{0}\|_\infty&\leq\frac{2\sqrt n R(\delta)|\mathcal{M}|}{N(m)}\frac{1}{\widetilde{\lambda}_{\min}(\delta)}\left(1-e^{-\eta\widetilde{\lambda}_{\min}(\delta)t}\right) &\forall\,  t\geq 0,\\
\label{grad3}
\sup_{\substack{x\in\mathcal{X}\\t\geq 0}}|f(\Theta_t,x)-f^{\mathrm{lin}}(\Theta_t^{\mathrm{lin}},x)|&\leq 132 n^2 R^2(\delta)\left(1+\left(\widetilde{\lambda}_{\min}(\delta)\right)^{-3}\right)
\frac{L^2m^2|\mathcal{M}|^5|\mathcal{N}|^2}{N^5(m)}\log N(m).&
\end{align}
\end{thm}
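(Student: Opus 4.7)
The plan is to run a bootstrapping argument in the spirit of the classical lazy-training proofs (as in \cite{girardi2024}), but with every constant tracked explicitly at finite width. First, I would set up two high-probability events at initialization. Event $A$: the initial loss satisfies $\mathcal{L}(\Theta_{0})=\|F(0)-Y\|_{2}^{2}\le R^{2}(\delta)/2$. Since $\mathbb{E}\,f(\Theta_{0},x)=0$ by \autoref{A2} and $\max_{x}\mathcal{K}_{0}(x,x)=1$, I have $\mathbb{E}\|F(0)\|_{2}^{2}\le n$, so Markov (together with $(a+b)^{2}\le 2a^{2}+2b^{2}$) yields $A$ with probability at least $1-\delta/2$, which explains the $\sqrt{2n/\delta}$ term in $R(\delta)$. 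Event $B$: the empirical NTK $\hat K_{\Theta_{0}}(X,X^{T})$ is close to the analytic NTK $K(X,X^{T})$, specifically $\|\hat K_{\Theta_{0}}-K\|_{\mathrm{op}}\le \lambda_{\min}^{K}/3$, so that $\lambda_{\min}(\hat K_{\Theta_{0}})\ge 2\lambda_{\min}^{K}/3$. The concentration needed for $B$ should follow from an entrywise Hoeffding/McDiarmid bound on $\hat K_{\Theta_{0}}(x^{(i)},x^{(j)})$, whose summands are indexed by parameters but bounded in magnitude by $O(|\mathcal{M}|^{2}/N^{2}(m))$ thanks to the light-cone structure. Assumption \eqref{hp:lambda_pos} is exactly what is needed to make the Hoeffding deviation $O(\sqrt{Lm\,|\mathcal{M}|^{4}|\mathcal{N}|^{2}/N^{4}(m)\cdot\log(n^{2}/\delta)})$ smaller than $\lambda_{\min}^{K}/3$; taking the union bound over the $n^{2}$ kernel entries absorbs the remaining $\delta/2$.

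On the intersection of $A$ and $B$ I then define $\widetilde{\lambda}_{\min}(\delta)$ as the largest value that still allows the bootstrap to close (this is where the $1/3$ in \eqref{relation1} comes from). I introduce the stopping time
\begin{equation*}
\tau \;\coloneqq\; \inf\Bigl\{t\ge 0\,:\,\|\Theta_{t}-\Theta_{0}\|_{\infty} > \tfrac{2\sqrt{n}R(\delta)|\mathcal{M}|}{N(m)\widetilde{\lambda}_{\min}(\delta)}\Bigr\},
\end{equation*}
and I first show that for $t<\tau$ the empirical NTK remains $\lambda_{\min}(\hat K_{\Theta_{t}})\ge \widetilde{\lambda}_{\min}(\delta)$. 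This reduces to a Lipschitz-in-parameters estimate on $\hat K_{\Theta}(X,X^{T})$: by the parameter-shift structure of $\partial_{\theta_{i}}f_{k}$ together with the light-cone bound from \autoref{stimacone1}, each partial derivative of $\hat K_{\Theta}$ with respect to a single $\theta_{i}$ is bounded by a quantity involving $|\mathcal{M}|$ and $|\mathcal{N}|$ and the prefactor $N(m)^{-2}$, which combines with $\|\Theta_{t}-\Theta_{0}\|_{\infty}\le$ (the bound above) to give a perturbation of $\lambda_{\min}$ that is absorbed into the factor $3$ in \eqref{relation1}.

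Once $\lambda_{\min}(\hat K_{\Theta_{t}})\ge \widetilde{\lambda}_{\min}(\delta)$ is known for $t<\tau$, the inequality \eqref{grad1} is standard: from \eqref{gradeq2} one has
\begin{equation*}
\tfrac{\mathrm{d}}{\mathrm{d}t}\mathcal{L}(\Theta_{t}) \;=\; -2\eta\,(F(t)-Y)^{T}\hat K_{\Theta_{t}}(F(t)-Y) \;\le\; -2\eta\,\widetilde{\lambda}_{\min}(\delta)\,\mathcal{L}(\Theta_{t}),
\end{equation*}
and Gr\"onwall plus event $A$ gives \eqref{grad1}. Integrating the first line of \eqref{gradeq2}, using the bound $\|\nabla_{\Theta}f(\Theta_{t},X^{T})\|_{\mathrm{op}\to\infty}\le \sqrt n\,|\mathcal{M}|/N(m)$ (from the light-cone argument) and $\|\nabla_{F}\mathcal{L}(\Theta_{t})\|_{2}=2\sqrt{\mathcal{L}(\Theta_{t})}$ together with \eqref{grad1}, yields \eqref{grad2} with the very constant appearing in the definition of $\tau$. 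This closes the bootstrap, showing $\tau=+\infty$.

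For the lazy-training bound \eqref{grad3} I compare the two dynamics. Writing $\Delta(t,x)\coloneqq f(\Theta_{t},x)-f^{\mathrm{lin}}(\Theta_{t}^{\mathrm{lin}},x)$ and subtracting the equations for $F(t)$ and $F^{\mathrm{lin}}(t)$, the training-input error satisfies
\begin{equation*}
\tfrac{\mathrm{d}}{\mathrm{d}t}(F(t)-F^{\mathrm{lin}}(t)) \;=\; -\eta\,\hat K_{\Theta_{0}}(F(t)-F^{\mathrm{lin}}(t))\;-\;\eta(\hat K_{\Theta_{t}}-\hat K_{\Theta_{0}})(F(t)-Y).
\end{equation*}
The perturbation $\hat K_{\Theta_{t}}-\hat K_{\Theta_{0}}$ is controlled by \eqref{grad2} together with the Lipschitz bound on $\hat K_{\Theta}$ used in the bootstrap; after integration with Duhamel against $e^{-\eta\hat K_{\Theta_{0}}t}$ and using $\lambda_{\min}(\hat K_{\Theta_{0}})\ge \widetilde{\lambda}_{\min}(\delta)$, the exponential decay of $\mathcal{L}(\Theta_{t})$ makes the time integral finite uniformly in $t$, contributing the $\widetilde{\lambda}_{\min}(\delta)^{-3}$ factor. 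A second, similar application handles a general input $x$ by propagating the error through $\hat K_{\Theta_{t}}(x,X^{T})-\hat K_{\Theta_{0}}(x,X^{T})$, so the $\sup_{x\in\mathcal{X}}$ follows without any further growth in the architecture constants. The main obstacle I expect is the careful bookkeeping in the concentration step producing $\widetilde{\lambda}_{\min}(\delta)$: the sub-Gaussian summands defining $\hat K_{\Theta_{0}}$ are $Lm$ in number but most of them are zero by the light-cone restriction, so the variance proxy must be tracked via $|\mathcal{M}|^{2}|\mathcal{N}|$ rather than $Lm$ to avoid weakening \eqref{hp:lambda_pos} to an unusable form.
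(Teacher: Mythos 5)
Your outline of the bootstrap (two initialization events, stopping time $\tau$, Lipschitz control of $\hat K_{\Theta_t}$ along the trajectory, Gr\"onwall for the loss, integration of $\dot\Theta_t$) matches the paper's proof of \eqref{grad1} and \eqref{grad2} almost line by line. The one place you gloss over a genuine technical point is the construction of $\widetilde{\lambda}_{\min}(\delta)$: "the largest value that still allows the bootstrap to close" hides the fact that $\rho(m)$ must solve a quadratic self-consistency equation, because the drift bound \eqref{eq:rho} that defines the radius of the trajectory depends on $\widetilde{\lambda}_{\min}(\delta)$, which in turn depends linearly on $\rho(m)$ through $h(\delta)$. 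The paper solves that quadratic explicitly in \eqref{eq:boundrho} and then uses the hypothesis \eqref{hp:lambda_pos} (which gives $g(\delta)\le\lambda_{\min}^K/6$) together with $1-\sqrt{1-x}\le x$ to deduce \eqref{relation1}. Without exhibiting this solution your argument does not produce a concrete $\widetilde{\lambda}_{\min}(\delta)$.

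For \eqref{grad3} you take a genuinely different route. The paper first bounds the Taylor remainder $|f(\Theta_t,x)-f^{\lin}(\Theta_t,x)|$ via a Hessian estimate, then bounds $\|F(t)-F^{\lin}(t)\|_2$ by a differential inequality that drops the damping term, and finally controls $\|\Theta_t-\Theta_t^{\lin}\|_\infty$ by splitting the integral of $\|\dot\Theta_t-\dot\Theta_t^{\lin}\|_\infty$ at a cutoff $t^\ast=(\eta\widetilde{\lambda}_{\min})^{-1}\log N(m)$, using a different estimate on each regime; the $\log N(m)$ in \eqref{grad3} comes precisely from this cutoff. You instead write a closed ODE for $\Delta(t,x)=f(\Theta_t,x)-f^{\lin}(\Theta_t^{\lin},x)$, integrate the error on training inputs by Duhamel against $e^{-\eta\hat K_{\Theta_0}t}$, and then integrate the ODE for a test point. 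The Duhamel step is the key improvement: keeping the damping term gives $\|F(t)-F^{\lin}(t)\|_2\lesssim t\,e^{-\eta\widetilde{\lambda}_{\min}t}$, which is uniformly integrable, whereas the paper's $\|F(t)-F^{\lin}(t)\|_2\lesssim 1-e^{-\eta\widetilde{\lambda}_{\min}t}$ tends to a nonzero constant and requires the two-regime splitting to become integrable. Your route sidesteps the $\|\Theta_t-\Theta_t^{\lin}\|_\infty$ bookkeeping and the cutoff $t^\ast$ entirely and, if carried out, would drop the $\log N(m)$ from \eqref{grad3}. Be careful, however, with the phrase "the exponential decay of $\mathcal{L}(\Theta_t)$ makes the time integral finite": the decay of $\|F(t)-Y\|_2$ alone is not enough for the second term of your $\Delta(t,x)$ ODE, since $\|F(t)-F^{\lin}(t)\|_2$ itself needs the Duhamel improvement before its time integral converges; state this explicitly so the reader does not mistake it for the paper's (non-decaying) differential-inequality bound.
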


\begin{proof}
We recall Chebyshev's inequality for a random vector $V$ with $\mathbb{E}[V]=v$ and $\mathbb{E}[(X-v)^2]=\sigma^2$:
\begin{align}
    \mathbb{P}\left(\|V-v\|_2\geq k\|\sigma\|_2\right)\leq \frac{1}{k^2}.
\end{align}
Calling $F(0)=f(\Theta_0,X)$, we have $\mathbb{E}[F(0)]=0$ and $\mathbb{E}[F^2(0)]=\text{diag}(\mathcal{K}_0(X,X^T))$. Let us define
\begin{align}
    R(\delta)\coloneqq \|Y\|_2+\sqrt{\frac{2}{\delta}}\left\|\left(\text{diag}(\mathcal{K}_0(X,X^T)\right)^{1/2}\right\|_2
\end{align}
Hence,
\begin{align}
    \nonumber\mathbb{P}\left(\|F(0)-Y\|_2\geq R(\delta)\right)&\leq \mathbb{P}\left(\|F(0)\|_2\geq R(\delta)-\|Y\|_2\right)\\
    \label{eq:corollaryR}&\leq \mathbb{P}\left(\|F(0)\|_2\geq\sqrt{\frac{2}{\delta}}\left\|\left(\text{diag}(\mathcal{K}_0(X,X^T)\right)^{1/2}\right\|_2 \right)\leq \frac{\delta}{2}.
\end{align}
Therefore
\begin{align}
\mathbb{P}\left(\|F(0)-Y\|_2<R(\delta)\right)\geq 1-\frac{\delta}{2}.
\end{align}
Using that 
\begin{align}
    \sum_{i=1}^n\mathcal{K}_0(x^{(i)},x^{(i)})=n,
\end{align}
we have
    \begin{align}\label{R2used}
    R(\delta)\coloneqq \|Y\|_2+\sqrt{\frac{2}{\delta}}\left\|\left(\text{diag}(\mathcal{K}_0(X,X^T)\right)^{1/2}\right\|_2=\|Y\|_2+\sqrt{\frac{2}{\delta}}\left(\sum_{i=1}^n\mathcal{K}_0(x^{(i)},x^{(i)})\right)^{1/2} \leq\|Y\|_2+\sqrt{\frac{2n}{\delta}}.
\end{align}
where we have used that $\mathcal{K}_0(x^{(i)},x^{(i)})\leq 1$.
In particular $R(\delta)\geq 1$, which means $R(\delta)\leq R^2(\delta)$, which will be used later.\\

For the sake of a lean notation, we define
\begin{align}\label{funzg}
g(\delta)&\coloneqq \frac{16 n\sqrt{Lm}|\mathcal{M}|^2|\mathcal{N}|}{N^2(m)}\sqrt{\log\frac{2n^2}{\delta}},\\
\label{eq:boundrho}
\rho(m)&\coloneqq\frac{1}{2}\frac{N^{2}(m)}{16nLm|\mathcal{M}|^2|\mathcal{N}|}(\lambda_{\min}^{K}-g(\delta))\left(1-\sqrt{1-\frac{128n\sqrt n R(\delta)}{(\lambda_{\min}^{K}-g(\delta))^2}\frac{Lm|\mathcal{M}|^3|\mathcal{N}|}{N^3(m)}}\right) \le \frac{4\sqrt n R(\delta)}{\lambda_{\min}^{K}-g(\delta)}\frac{|\mathcal{M}|}{N(m)},\\
h(\delta)&\coloneqq 16n\frac{Lm}{N^2(m)}\rho(m),\\
\label{eq:deflambda}\widetilde{\lambda}_{\min}(\delta)&\coloneqq \lambda_{\min}^{K}-g(\delta)-h(\delta),
\end{align}
where the bound follows from $1-\sqrt{1-x}\leq x$. 
Let us informally clarify the meaning of these definitions; in particular, the following claims, which will be formally proved later, might help to understand the idea behind the apparently complicated form of \eqref{funzg}-\eqref{eq:deflambda}. We call Spec$(A)$ the spectrum of a symmetric matrix $A$.
\begin{itemize}
    \item The minimal eigenvalue $\lambda_{\min}^{K}$ of the analytic kernel $K(X,X^T)$ is not in general the same one of the empirical kernel at initialization $\hat{K}_{\Theta_0}$ and during the training $\hat{K}_{\Theta_t}$. We will need to lower bound $\min_{0\leq t\leq t_1}\min\text{Spec}(\hat{K}_{\Theta_t})$ and we are going to call $\tilde \lambda_{\min}(\delta)$ our estimate. $t_1$ is a time that will be discussed later. Of course the bound will worsen with the size of the trajectory of $\Theta_t$ ($0\leq t\leq t_1$) in the parameter space: if $\rho(m)$ is the maximal possible radius that such trajectory can achieve with respect to the starting point, $\tilde \lambda_{\min}(\delta)$ will decrease with $\rho(m)$. 
    \item $\tilde \lambda_{\min}(\delta)$ is made of two corrections to $\lambda_{\min}^K$: the first one -- $g(\delta)$ -- emerges from the fact that, even if the empirical neural tangent kernel at initialization is concentrated on the analytic neural tangent kernel, they are different matrices; the second one -- $h(\delta)$ -- is due to the evolution from $\hat K_{\Theta_0}$ to $\hat K_{\Theta_t}$. The latter grows with the radius of the trajectory.
    \item In the proof we will study the trajectory from $t=0$ to $t=t_1$, where the final time is the first instant when the parameter vector $\Theta_{t_1}$ is far from $\Theta_{0}$ more than $\rho(m)$ in the parameter space, where it will be crucial to require $\rho(m)$ to satisfy
    \begin{align}\label{eq:rho}
        \rho(m)=\frac{2\sqrt n R(\delta) |\mathcal{M}|}{N(m)}\frac{1}{\widetilde{\lambda}_{\min}(\delta)}
    \end{align}
    in order to conclude the proof of \eqref{grad1} and \eqref{grad2}.
    Since $\rho(m)$ linearly appears in $\tilde \lambda_{\min}(\delta)$, in order to compute $\rho(m)$ we need to solve the second order equation provided by \eqref{eq:rho}. The solution is \eqref{eq:boundrho}. 
\end{itemize}
Therefore, by our definitions, we have that $\rho(m)$ satisfies \eqref{eq:rho} (which can be explicitly verified by using the definitions provided above).
In particular, by \eqref{eq:boundrho},
\begin{align}
    \widetilde{\lambda}_{\min}(\delta)=\frac{2\sqrt n R(\delta) |\mathcal{M}|}{N(m)}\frac{1}{\rho(m)}\geq \frac{1}{2}(\lambda_{\min}^{K}-g(\delta)),
\end{align}
and, by hypothesis \eqref{hp:lambda_pos}, which can be restated as
\begin{align}
    g(\delta)\leq \frac{1}{6}\lambda_{\min}^{K},
\end{align}
we have that
\begin{align}\label{eq:deflmin}
    \widetilde{\lambda}_{\min}(\delta)\geq \frac{1}{3}\lambda_{\min}^{K}>0.
\end{align}

Let $B_r(\Theta_{0})=\{\Theta: \|\Theta_{0}-\Theta\|_\infty< r\}$ be the ball of center $\Theta_{0}$ and radius $r$. 

Let us now recall the following quantitative result about the closeness between $\hat{K}$ and $K$ proved in \cite{girardi2024}:
\begin{thm}[{\cite[Theorem 4.13]{girardi2024}}]\label{basedthm1}
Suppose that we initialize a parameterized quantum circuit randomly, that is, the parameters $\theta_{i}$ are taken as independent random variables. Then, for any $x,x'\in\X$ we have
\begin{align}\label{tight1}
 \P\left(\vert\hat{K}_{\Theta}(x,x')-K(x,x') \vert\geq \varepsilon\right)\leq\exp\left(-\frac{1}{256}\frac{N^{4}(m)}{Lm\vert \mathcal{M}\vert^{4} \vert\mathcal{N} \vert^{2}}\varepsilon^{2}\right).
\end{align}
\end{thm}
Let $\|\,\cdot\,\|_{{\rm HS}}$ be the Hilbert Schmidt norm and let us apply \autoref{basedthm1} to $F_{ij}\coloneqq K(x^{(i)},x^{(j)})-\hat K_{\Theta_{0}}(x^{(i)},x^{(j)})$, where $1\leq i,j\leq n$, i.e. $F=K-\hat K_{\Theta_{0}}$:
\begin{align}
\nonumber
\mathbb{P}[\|F\|_{{\rm HS}}\geq \epsilon]&=
\mathbb{P}\left[\sum_{i,j=1}^n(F_{ij})^2\geq \epsilon^2\right]\leq \mathbb{P}\left[\max_{ij}|F_{ij}|\geq \epsilon/n\right]\\
&\leq\sum_{i,j=1}^n\mathbb{P}\left[|F_{ij}|\geq \epsilon/n\right]\leq n^2 \exp\left(-\frac{1}{256}\frac{N^{4}(m)}{Lm\vert \mathcal{M}\vert^{4} \vert\mathcal{N} \vert^{2}}\frac{\epsilon^{2}}{n^2}\right)\eqqcolon \frac{\delta}{2},
\end{align}
whence
\begin{align}
    \mathbb{P}\left[\|F\|_{{\rm HS}}< g(\delta)\right]\leq 1-\frac{\delta}{2}.
\end{align}
When $\|F\|_{{\rm HS}}< g(\delta)$, the maximum eigenvalue of $|F|$ is $\lambda_F< g(\delta)$, so
\[ F\preceq |F| \preceq \lambda_F\id\prec g(\delta) \id.\]
The previous equation implies that
\[K-\hat K_{\Theta_{0}} \prec g(\delta) \id \quad \Rightarrow\quad \hat K_{\Theta_{0}} \succ K-g(\delta) \id\succeq (\lambda_{\min}^{K}-g(\delta))\id.\]
So,
\begin{align}
\hat K_{\Theta_{0}}(X,X^T)\succ\left(\lambda_{\min}^{K}-g(\delta)\right)\id 
\end{align}
with probability at least $1-\frac{\delta}{2}$. Let 
\begin{align}
t_1=\inf\left\{t:\|\Theta_t-\Theta_{0}\|_\infty\geq \rho(m)\right\}.
\label{t1}
\end{align}
For $t\leq t_1$ we have, by \cite[Lemma 4.32]{girardi2024}\footnote{since in the current work we do not introduce the normalization $N_K(m)$ for the neural tangent kernel, it is sufficient to set $N_K(m)=1$, and $\Sigma_{1}=Lm$ in the quoted result.}, we have
\begin{align}
|\hat K_{\Theta_t}(x,x')-\hat K_{\Theta_{0}}(x,x')|&\leq 16\frac{Lm|\mathcal{M}|^2|\mathcal{N}|}{N^2(m)}\|\Theta_t-\Theta_{0}\|_\infty\leq16\frac{Lm|\mathcal{M}|^2|\mathcal{N}|}{N^2(m)}\rho(m).
\end{align}
Therefore
\begin{align}\label{ntkbound}
\|\hat K_{\Theta_t}-\hat K_{\Theta_{0}}\|_{{\rm HS}}&\leq 16n\frac{Lm|\mathcal{M}|^2|\mathcal{N}|}{N^2(m)}\rho(m)= h(\delta),
\end{align}
whence
\begin{align}\label{usataforq}
\hat K_{\Theta_t}(X,X^T)\succ\left(\lambda_{\min}^{K}-g(\delta)-h(\delta)\right)\id=\widetilde{\lambda}_{\min}(\delta)\id \qquad \forall \,t\leq t_1,
\end{align}
with probability at least $1-\delta$ (by the union bound applied to the events described above).
 Recalling that
\[
\frac{d}{dt} f(\Theta_t,x)=-\eta\hat K_{\Theta_t}(x,X^T)\cdot \left(F(t)-Y\right),
\]
for $t\leq t_1$ we have, with probability at least $1-\delta$,
\begin{align}
\nonumber \frac{d}{dt}\|F(t)-Y\|_2^2&=-2\eta(F(t)-Y)^T\hat K_{\Theta_t}(F(t)-Y)\leq -2\eta\widetilde{\lambda}_{\min}(\delta)\|F(t)-Y\|_2^2,\\
 \Rightarrow \mathcal{L}(\Theta_t)=\frac{1}{2}\|F(t)-Y\|_2^2&\leq \frac{1}{2}e^{-2\eta\widetilde{\lambda}_{\min}(\delta)t}\|F(0)-Y\|_2^2\leq e^{-2\eta\widetilde{\lambda}_{\min}(\delta)t}\frac{R^2(\delta)}{2}. \label{disug2}
\end{align}
Recalling also that
\[ 
\dot \Theta_t = -\eta \nabla_\Theta f(\Theta_t,X)\cdot (F(t)-Y)
\]
and using \cite[Lemma 4.30]{girardi2024}
\begin{align}
\nonumber \frac{d}{dt}|\theta_i(t)-\theta_i(0)|&\leq \Big|\frac{d}{dt}\theta_i(t)\Big|=\Big|\eta\partial_{\theta_i}f(\Theta,X)\cdot (F(t)-Y)\Big|\\
\nonumber &\leq \eta\|\partial_{\theta_i}f(\Theta,X)\|_2\|F(t)-Y\|_2\\
 &\leq 2 \eta \sqrt n\,\frac{|\mathcal{M}|}{N(m)}\,R(\delta)e^{-\eta\widetilde{\lambda}_{\min}(\delta)t},\\ 
 \Rightarrow |\theta_i(t)-\theta_i(0)|&\leq\frac{2\sqrt n R(\delta)|\mathcal{M}|}{N(m)}\frac{1}{\widetilde{\lambda}_{\min}(\delta)}\left(1-e^{-\eta\widetilde{\lambda}_{\min}(\delta)t}\right)\qquad &\forall\,t\leq t_1,\\
\Rightarrow \|\Theta_t-\Theta\|_\infty&\leq\rho(m)\left(1-e^{-\eta\widetilde{\lambda}_{\min}(\delta)t}\right)\qquad &\forall\,t\leq t_1,
\label{disug3}
\end{align}
with probability at least $1-\delta$, where the last implication follows from \eqref{eq:rho} combined with the definition \eqref{eq:deflmin} of $\widetilde{\lambda}_{\min}(\delta)$. If $t_1<\infty$, then  
\begin{align}
\|\Theta_{t_1}-\Theta\|_\infty&\leq\rho(m)\left(1-e^{-\eta\widetilde{\lambda}_{\min}(\delta)t_1}\right)<\rho(m)\qquad \forall\,t\leq t_1,
\end{align}
with probability at least $1-\delta$, but this contradicts the definition \eqref{t1} of $t_1$, so we must have $t_1=\infty$. This implies that \eqref{disug2} and \eqref{disug3} hold for any $t>0$; so, we have proved \eqref{grad1} and \eqref{grad2}. Combining the latter with \cite[Theorem 4.33]{girardi2024}, we can estimate
\begin{align}
\sup_t|f(\Theta_t,x)-f^{\mathrm{lin}}(\Theta_t,x)|&\leq \frac{Lm |\mathcal{M}|^2|\mathcal{N}|}{N(m)}\|\Theta_t-\Theta_0\|^2_\infty
\leq \frac{Lm |\mathcal{M}|^2|\mathcal{N}|}{N(m)}\rho^2(m),
\end{align}
with probability at least $1-\delta$. Now, we want to prove that
\begin{align}\label{eq:deltaF}
\|F(t)-F^{\mathrm{lin}}(t)\|_2\leq \frac{16nR(\delta)}{\widetilde{\lambda}_{\min}(\delta)} \frac{Lm|\mathcal{M}|^2|\mathcal{N}|}{N^2(m)}\rho(m) \left(1-e^{-\eta\widetilde{\lambda}_{\min}(\delta)t}\right),
\end{align}
with probability at least $1-\delta$. We follow the strategy of \cite{abedi2023}, using our results to improve the final bound. Let us define
\[\Delta(t)=\|F(t)-F^{\mathrm{lin}}(t)\|_2\]
and compute
\begin{align}
    \nonumber
    \frac{1}{2}\frac{d}{dt}\Delta^2(t)
    &= \sum_{i=1}^n  \left(f(\Theta_t,x^{(i)})-f^{\mathrm{lin}}(\Theta^{\mathrm{lin}}_t,x^{(i)})\right)\left(\frac{d}{dt}f(\Theta_t,x^{(i)})-\frac{d}{dt}f^{\mathrm{lin}}(\Theta^{\mathrm{lin}}_t,x^{(i)})\right)\\
    \nonumber
    &= -\eta\sum_{i=1}^n \left(f(\Theta_t,x^{(i)})-f^{\mathrm{lin}}(\Theta^{\mathrm{lin}}_t,x^{(i)})\right) \times\\
    \nonumber
    & \qquad \times \left(\hat K_{\Theta_t}(x^{(i)},X^T)(f(\Theta_t,X)-Y)-\hat K_{\Theta}(x^{(i)},X^T)(f(\Theta^{\mathrm{lin}}_t,X)-Y)\right)\\
    \nonumber
    &= -\eta(F(t)-F^{\mathrm{lin}}(t))^T\hat K_{\Theta_t}(F(t)-Y)+\eta (F(t)-F^{\mathrm{lin}}(t))^T\hat K_{\Theta}(F^{\mathrm{lin}}(t)-Y)\\
    \nonumber
    &= -\eta(F(t)-F^{\mathrm{lin}}(t))^T\hat K_{\Theta_t}(F(t)-Y)\\
    & \quad -\eta (F(t)-F^{\mathrm{lin}}(t))^T\hat K_{\Theta}(F(t)-F^{\mathrm{lin}}(t))+\eta (F(t)-F^{\mathrm{lin}}(t))^T\hat K_{\Theta}(F(t)-Y)
    \label{ultimaeq}
\end{align}
Noticing that $\hat K_{\Theta}$ is positive semidefinite,
\begin{align}
    -\eta (F(t)-F^{\mathrm{lin}}(t))^T\hat K_{\Theta}(F(t)-F^{\mathrm{lin}}(t))\leq 0,
\end{align}
so \eqref{ultimaeq} becomes
\begin{align}
    \nonumber
     \frac{1}{2}\frac{d}{dt}\Delta^2(t)&\leq -\eta(F(t)-F^{\mathrm{lin}}(t))^T\hat K_{\Theta_t}(F(t)-Y)+\eta (F(t)-F^{\mathrm{lin}}(t))^T\hat K_{\Theta}(F(t)-Y)\\
     &=-\eta(F(t)-F^{\mathrm{lin}}(t))^T(\hat K_{\Theta_t}-\hat K_{\Theta})(F(t)-Y)
\end{align}
whence
\begin{align}
    \nonumber
     \left|\Delta(t)\frac{d}{dt}\Delta(t)\right|&\leq \eta\|F(t)-F^{\mathrm{lin}}(t)\|_2\|\hat K_{\Theta_t}-\hat K_{\Theta}\|_{\mathrm{op}}\|F(t)-Y\|_2\\
     &= \eta \Delta(t)\|\hat K_{\Theta_t}-\hat K_{\Theta}\|_{\mathrm{op}}\|F(t)-Y\|_2.
\end{align}
Therefore,
\begin{align} \left|\frac{d}{dt}\Delta(t)\right|&\leq \eta\|\hat K_{\Theta_t}-\hat K_{\Theta}\|_{\mathcal{L}}||F(t)-Y||_2.
\end{align}
By \eqref{disug2}
\begin{align}
    \|F(t)-Y\|_2\leq R(\delta)e^{-\eta\widetilde{\lambda}_{\min}(\delta)t},
\end{align}
and by \eqref{ntkbound}
\begin{align}\label{opdiventaf}
\|\hat K_{\Theta_t}-\hat K_{\Theta_0}\|_{\mathrm{op}}&\leq \|\hat K_{\Theta_t}-\hat K_{\Theta_0}\|_{{\rm HS}}\leq 16n\frac{Lm|\mathcal{M}|^2|\mathcal{N}|}{N^2(m)}\rho(m),
\end{align}
we have
\begin{align}
    |\partial_t\Delta(t)|\leq 16nR(\delta)\eta \frac{Lm|\mathcal{M}|^2|\mathcal{N}|}{N^2(m)}\rho(m)e^{-\eta\widetilde{\lambda}_{\min}(\delta)t}
\end{align}
whence
\[\Delta(t)\leq \frac{16nR(\delta)}{\widetilde{\lambda}_{\min}(\delta)} \frac{Lm|\mathcal{M}|^2|\mathcal{N}|}{N^2(m)}\rho(m) \left(1-e^{-\eta\widetilde{\lambda}_{\min}(\delta)t}\right),\]
and this proves the claim \eqref{eq:deltaF}. Now we have all the ingredients to finish the proof.
We adapt to our case the strategy of \cite{CB18}. Let us compute
\begin{align}
\nonumber
\big\|\dot\Theta_t-\dot\Theta_t^{\mathrm{lin}}\big\|_\infty&=\eta\left\|\nabla_\Theta f(\Theta_t,X^T)(F(t)-Y)-\nabla_\Theta f(\Theta_0,X^T)(F^{\mathrm{lin}}(t)-Y)\right\|_\infty\\
\nonumber
&\leq \eta\left\|\left(\nabla_\Theta f(\Theta_t,X^T)-\nabla_\Theta f(\Theta_0,X^T)\right)(F(t)-Y)\right\|_\infty+\eta\left\|\nabla_\Theta f(\Theta_0,X^T)(F^{\mathrm{lin}}(t)-F(t))\right\|_\infty\\
&\leq \eta\sup_i\|\partial_{\theta_i} f(\Theta_t,X)-\partial_{\theta_i} f(\Theta_0,X)\|_2\|F(t)-Y\|_2+\eta\sup_i\|\partial_{\theta_i} f(\Theta_0,X)\|_2\|F^{\mathrm{lin}}(t)-F(t)\|_2
\end{align}
Let us bound the previous expression term by term. Combining the Lipschitzness result of \cite[Lemma 4.30]{girardi2024} with the lazy training bound \eqref{disug3}, and using the convergence to the examples \eqref{disug2}, we control the first term:
\begin{align}
\eta\sup_i\|\partial_{\theta_i} f(\Theta_t,X)&-\partial_{\theta_i} f(\Theta_0,X)\|_2\|F(t)-Y\|_2\leq 4\eta \sqrt n  R(\delta)\frac{|\mathcal{M}|^2|\mathcal{N}|}{N(m)}\rho(m)e^{-\eta\widetilde{\lambda}_{\min}(\delta)t}=:A(t).
\end{align}
Regarding the second term, we need two different estimates to be used for ``small'' and ``large'' $t$, as we will show soon. The first estimate is based on the Lipschitzness of the gradient \cite[Lemma 4.30]{girardi2024} and on \eqref{eq:deltaF}:
\begin{align}
\eta\sup_i\|\partial_{\theta_i} f(\Theta_0,X)\|_2\|F^{\mathrm{lin}}(t)-F(t)\|_2\leq  \frac{32\eta n\sqrt n R(\delta)}{\widetilde{\lambda}_{\min}(\delta)} \frac{Lm|\mathcal{M}|^3|\mathcal{N}|}{N^3(m)}\rho(m)=:B(t).
\end{align}
The second estimate exploits again \cite[Lemma 4.30]{girardi2024} and the convergence to the examples for the original model \eqref{eq:deltaF}; we also need to quantify the convergence to the examples for the linearized model; this immediately follows from the analytic solution for the evolution of linearized model \eqref{linappr1} and from \eqref{eq:corollaryR}:
\begin{align}
\nonumber
\eta\sup_i\|\partial_{\theta_i} f(\Theta_0,X)\|_2\|F^{\mathrm{lin}}(t)-F(t)\|_2&\leq 2\eta\sqrt n \frac{|\mathcal{M}|}{N(m)}\left(\|F^{\mathrm{lin}}(t)-Y\|_2+\|F(t)-Y\|_2\right)\\
\nonumber
&\leq 2\eta\sqrt n \frac{|\mathcal{M}|}{N(m)}\left(e^{-\eta\lambda_{\min}^{K} t}\|F(0)-Y\|_2+R(\delta) e^{-\eta\widetilde{\lambda}_{\min}(\delta)t}\right)\\
\nonumber
&\leq 2\eta\sqrt n \frac{|\mathcal{M}|}{N(m)}R(\delta)\left(e^{-\eta_0\lambda_{\min}^{K} t}+e^{-\eta\widetilde{\lambda}_{\min}(\delta)t}\right)\\
&\leq 4\eta \sqrt n R(\delta)\frac{|\mathcal{M}|}{N(m)}e^{-\eta\widetilde{\lambda}_{\min}(\delta)t}=:C(t).
\end{align}
Hence
\begin{align}
    \big\|\dot\Theta_t-\dot\Theta_t^{\mathrm{lin}}\big\|_\infty\leq A(t)+B(t) \quad \text{and}\quad \big\|\dot\Theta_t-\dot\Theta_t^{\mathrm{lin}}\big\|_\infty\leq A(t)+C(t).
\end{align}
Defining
\begin{align}
    t^\ast = \frac{1}{\eta \widetilde{\lambda}_{\min}(\delta)}\log N(m),
\end{align}
we integrate
\begin{align}
\nonumber
\big\|\Theta_t-\Theta_t^{\mathrm{lin}}\big\|_\infty&\leq \int_0^\infty A(t)dt + \int_0^{t^\ast} B(t)dt + \int_{t^\ast}^\infty C(t)dt\\
\nonumber
&= 4 \sqrt n  R(\delta)\frac{|\mathcal{M}|^2|\mathcal{N}|}{N(m)}\frac{\rho(m)}{\widetilde{\lambda}_{\min}(\delta)}\\
\nonumber
&\qquad + \frac{32 n\sqrt n R(\delta)}{\widetilde{\lambda}_{\min}(\delta)} \frac{Lm|\mathcal{M}|^3|\mathcal{N}|}{N^3(m)}\frac{\rho(m)}{\widetilde{\lambda}_{\min}(\delta)}\log N(m)\\
\nonumber
&\qquad + \frac{4\sqrt n R(\delta)}{\widetilde{\lambda}_{\min}(\delta)}  \frac{|\mathcal{M}|}{N(m)}e^{-\log N(m)}\\
\label{eq:differenza_cammini}
&=\frac{4\sqrt n R(\delta)}{\widetilde{\lambda}_{\min}(\delta)}\frac{|\mathcal{M}|}{N(m)}\left\{\left(|\mathcal{M}||\mathcal{N}|+\frac{8n}{\widetilde{\lambda}_{\min}(\delta)}\frac{Lm|\mathcal{M}|^2|\mathcal{N}|}{N^2(m)}\log N(m)\right)\rho(m)+\frac{1}{N(m)}\right\}.
\end{align}

So, using \cite[Theorem 4.33]{girardi2024} together with \eqref{disug3} and \eqref{eq:differenza_cammini}, we conclude by computing
\begin{align}
\nonumber
|f(\Theta_t,x)-&f^{\mathrm{lin}}(\Theta_t^{\mathrm{lin}},x)|\\
\nonumber
&=|f(\Theta_t,x)-f(\Theta_0,x)-\nabla_\Theta f(\Theta_0,x)^T(\Theta_t^{\mathrm{lin}}-\Theta_0)|\\
\nonumber
&\leq |f(\Theta_t,x)-f^{\mathrm{lin}}(\Theta_t,x)|+|\nabla_\Theta f(\Theta_0,x)^T(\Theta_t^{\mathrm{lin}}-\Theta_t)|\\
\nonumber
&\leq \frac{Lm|\mathcal{M}|^2|\mathcal{N}|}{N(m)}\|\Theta_t-\Theta_0\|_\infty^2+2\frac{Lm|\mathcal{M}|}{N(m)}\|\Theta_t^{\mathrm{lin}}-\Theta_t\|_\infty\\
\nonumber
&\leq \frac{Lm|\mathcal{M}|^2|\mathcal{N}|}{N(m)}\rho^2(m)\\
\nonumber
&\qquad + \frac{8\sqrt n R(\delta)}{\widetilde{\lambda}_{\min}(\delta)}\frac{Lm|\mathcal{M}|^2}{N^2(m)}\left\{\left(|\mathcal{M}||\mathcal{N}|+\frac{8n}{\widetilde{\lambda}_{\min}(\delta)}\frac{Lm|\mathcal{M}|^2|\mathcal{N}|}{N^2(m)}\log N(m)\right)\rho(m)+\frac{1}{N(m)}\right\}\\
&\leq \frac{20 n R^2(\delta)}{\lambda^2_{\min}(\delta)}\frac{Lm|\mathcal{M}|^4|\mathcal{N}|}{N^3(m)}
+\frac{128 n^2 R^2(\delta)}{\left(\widetilde{\lambda}_{\min}(\delta)\right)^{3}}\frac{L^2m^2|\mathcal{M}|^5|\mathcal{N}|}{N^5(m)}\log N(m)
+ \frac{8\sqrt n R(\delta)}{\widetilde{\lambda}_{\min}(\delta)}\frac{Lm|\mathcal{M}|^2}{N^3(m)}\eqqcolon\xi_m(\delta).
\end{align}
We can use inequality \eqref{eq:inequalityNm} to bound
\begin{align}
\nonumber
\xi_m(\delta)&\leq \frac{n^2 R^2(\delta)}{\widetilde{\lambda}_{\min}(\delta)}\left(8+\frac{20}{\widetilde{\lambda}_{\min}(\delta)}+\frac{128}{\left(\widetilde{\lambda}_{\min}(\delta)\right)^{2}}\right)
\frac{L^2m^2|\mathcal{M}|^5|\mathcal{N}|^2}{N^5(m)}\left(1+\log N(m)\right)\\
\nonumber
&\leq \frac{4n^2 R^2(\delta)}{\widetilde{\lambda}_{\min}(\delta)}\left(2+\frac{5}{\left(\widetilde{\lambda}_{\min}(\delta)\right)^{2}}+\frac{32}{\left(\widetilde{\lambda}_{\min}(\delta)\right)^{3}}\right)
\frac{L^2m^2|\mathcal{M}|^5|\mathcal{N}|^2}{N^5(m)}\left(1+\log N(m)\right)\\
&\leq 132 n^2 R^2(\delta)\left(1+\frac{1}{\left(\widetilde{\lambda}_{\min}(\delta)\right)^{3}}\right)
\frac{L^2m^2|\mathcal{M}|^5|\mathcal{N}|^2}{N^5(m)}\left(1+\log N(m)\right)\,,
\end{align}
where we have recalled that $R(\delta)\leq R^2(\delta)$ and, in the last inequality, we have used that, for $x\geq 0$,
\begin{align}
    2x+5x^2+32x^3\leq 2x+5x^2+32x^3+(x-1)^2+(x-4)^2(x+2) = 33(1+x^3)\,.
\end{align}
This concludes the proof.
\end{proof}

\subsection{The Wasserstein distance between \texorpdfstring{$f$}{f} and \texorpdfstring{$f^{\lin}$}{f\^lin}}
In this part, we prove an upper bound to the Wasserstein distance of order $1$ between the objective function $f(\Theta_{t},x)$ and its linearized version $f^{\lin}(\Theta_{t}^{\lin},x)$.

\begin{lem}\label{f_flin}
Suppose that \autoref{A1}, \autoref{A2}, \autoref{A3} and \autoref{A4} hold true. Let $\ov{X}$ be the vector of all the inputs of $\mathcal{X}$.
Then, for any $s>0$ we have
\begin{align}\label{genwasst2}
\begin{aligned}
\de_{\w}^{(s)}&\left(f(\Theta_{t},\ov{X}),f^{\lin}(\Theta_{t}^{\lin},\ov{X})\right)\\
&\qquad\leq 132n^{2}\sqrt{\ov{N}} \left(\norma{Y}_{2}+\sqrt{2n\sqrt{N(m)}}\right)^{2}\left(1+\frac{27}{\left(\lambda_{\min}^{K}\right)^{3}}\right)
\frac{L^2m^2|\mathcal{M}|^5|\mathcal{N}|^2}{N^5(m)}\left(1+\log N(m)\right)\\
&\qquad\quad +\frac{s}{\sqrt{N(m)}}.
\end{aligned}
\end{align}
\end{lem}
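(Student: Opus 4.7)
The plan is to use a coupling argument together with \autoref{newgradfl} (lazy training). Since the truncated Wasserstein distance of order $1$ satisfies
\begin{align*}
\de_{\w}^{(s)}\left(f(\Theta_{t},\ov{X}),f^{\lin}(\Theta_{t}^{\lin},\ov{X})\right) \le \mathbb{E}\bigl[\norma{f(\Theta_{t},\ov{X})-f^{\lin}(\Theta_{t}^{\lin},\ov{X})}_{2}\wedge s\bigr],
\end{align*}
it suffices to estimate the right hand side using the natural coupling in which $f$ and $f^{\lin}$ share the same random initialization $\Theta_{0}$ (recall that $\Theta_{t}^{\lin}$ depends on the gradient at $\Theta_{0}$, so this coupling is canonical).

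The key observation is that the value of $\delta$ in \autoref{newgradfl} can be tuned to match the normalization factor $N(m)$. I will choose
\begin{align*}
\delta = \frac{1}{\sqrt{N(m)}},
\end{align*}
so that $R(\delta) = \|Y\|_{2} + \sqrt{2n\sqrt{N(m)}}$, and the hypothesis \eqref{hp:lambda_pos} of \autoref{newgradfl} becomes exactly \eqref{eq:A4} in \autoref{A4}, hence it is satisfied. Moreover, with this choice, $\widetilde{\lambda}_{\min}(\delta) \ge \lambda_{\min}^{K}/3$ by \eqref{relation1}, so $\bigl(\widetilde{\lambda}_{\min}(\delta)\bigr)^{-3} \le 27 (\lambda_{\min}^{K})^{-3}$.

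Next, on the event $E$ of probability at least $1-\delta$ on which \eqref{grad3} holds, the bound
\begin{align*}
\norma{f(\Theta_{t},\ov{X})-f^{\lin}(\Theta_{t}^{\lin},\ov{X})}_{2} \le \sqrt{\ov{N}}\sup_{x\in\mathcal{X}}|f(\Theta_{t},x)-f^{\lin}(\Theta_{t}^{\lin},x)| \le \sqrt{\ov{N}}\,\eta_{m,n,\delta}
\end{align*}
holds uniformly in $t\ge 0$, where $\eta_{m,n,\delta}$ is the right hand side of \eqref{grad3}. On the complement $E^{c}$, which has probability at most $\delta = 1/\sqrt{N(m)}$, I use the trivial bound $\norma{\cdot}_{2}\wedge s \le s$. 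Combining these two estimates:
\begin{align*}
\mathbb{E}\bigl[\norma{\cdots}_{2}\wedge s\bigr] \le \sqrt{\ov{N}}\,\eta_{m,n,\delta} + \frac{s}{\sqrt{N(m)}},
\end{align*}
and substituting \eqref{grad3} with the simplification $\widetilde{\lambda}_{\min}(\delta)^{-3}\le 27(\lambda_{\min}^{K})^{-3}$ gives exactly \eqref{genwasst2}.

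The main (and essentially the only) subtlety is to verify that the chosen $\delta$ makes \autoref{A4} trigger \eqref{hp:lambda_pos} with the correct constants, and to correctly replace $R(\delta)$ and $\widetilde{\lambda}_{\min}(\delta)$ in \eqref{grad3} by the stated expressions. Everything else is a direct consequence of the dual formulation of $\de_{\w}^{(s)}$ applied to the natural coupling, together with the uniform-in-time lazy training estimate from \autoref{newgradfl}.
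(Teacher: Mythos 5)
Your proposal is correct and follows essentially the same approach as the paper: bound the truncated Wasserstein distance via the natural coupling, choose $\delta_m = 1/\sqrt{N(m)}$ so that \eqref{hp:lambda_pos} becomes \eqref{eq:A4}, apply \autoref{newgradfl}, condition on the success event, pay $s\delta_m$ on its complement, and simplify via $\widetilde{\lambda}_{\min}(\delta_m)\ge \lambda_{\min}^K/3$.
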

\begin{proof}
Note that
\begin{align*}
\de_{\w}^{(s)}\left(f(\Theta_{t},\ov{X}),f^{\lin}(\Theta_{t}^{\lin},\ov{X})\right)&\leq
\E\min\left\{\norma{f(\Theta_t,\ov{X}) - f^{\lin}(\Theta^\lin_t,\ov{X})}_{2}\,,\;s\right\}\\
&\leq \E\min\left\{\sqrt{\ov{N}}\sup_{x\in\X}\left\vert f(\Theta_{t},x)-f^{\lin}(\Theta_{t}^{\lin},x)\right\vert\,,\;s\right\}\\
\end{align*}
where $\ov{N}$ indicates the number of inputs in $\ov{X}$. Further, let us take
\begin{align}\label{chosendelta}
\delta_{m}\coloneqq \frac{1}{\sqrt{N(m)}};
\end{align}
since \eqref{hp:lambda_pos} holds with $\delta_{m}$, we then apply \autoref{newgradfl}, and so that when applying the gradient flow with learning rate $\eta$, the following inequality holds with probability at least $1-\delta_{m}$ over random initialization:
\begin{align}\label{eq:boundff}
\sup_{\substack{x\in\mathcal{X}\\t\geq 0}}|f(\Theta_t,x)-f^{\mathrm{lin}}(\Theta_t^{\mathrm{lin}},x)|&\leq 132 n^2 R^2(\delta_{m})\left(1+\frac{1}{\left(\widetilde{\lambda}_{\min}(\delta_{m})\right)^{3}}\right)
\frac{L^2m^2|\mathcal{M}|^5|\mathcal{N}|^2}{N^5(m)}\left(1+\log N(m)\right)
\end{align}
where $\widetilde{\lambda}_{\min}(\delta_{m})$ is defined according to \eqref{eq:deflambda}, and satisfies \eqref{relation1} due to \eqref{eq:A4} in \autoref{A4}.
Let us call $\mathcal{E}$ the event in which \eqref{eq:boundff} holds, and let $\mathcal{E}^c$ be its complementary. Clearly $\mathbbm{P}(\mathcal{E})\leq 1$ and $\mathbbm{P}(\mathcal{E}^c)\leq \delta_{m}$, and thus

\begin{align}\label{adjustedf1}
\E\min\left\{\sqrt{\ov{N}}\sup_{x\in\X}\left\vert f(\Theta_{t},x)-f^{\lin}(\Theta_{t}^{\lin},x)\right\vert\,,\;s\right\}
&\leq\E\left(\sqrt{\ov{N}}\sup_{\stackrel{x\in\X}{t\geq 0}}\left\vert f(\Theta_{t},x)-f^{\lin}(\Theta_{t}^{\lin},x)\right\vert \Bigg|\cE\right)+\E\left(s \,|\,\cE^{c}\right)\delta_m.
\end{align}

Hence, by applying \eqref{eq:boundff} and \eqref{adjustedf1}, one gets that

\begin{align}\label{adjustedf2}
\begin{aligned}
\de_{\w}^{(s)}&\left(f(\Theta_{t},\ov{X}),f^{\lin}(\Theta_{t}^{\lin},\ov{X})\right)\\
&\qquad\leq 132\sqrt{\ov{N}} n^2 R^2(\delta_{m})\left(1+\frac{1}{\left(\widetilde{\lambda}_{\min}(\delta_{m})\right)^{3}}\right)
\frac{L^2m^2|\mathcal{M}|^5|\mathcal{N}|^2}{N^5(m)}\left(1+\log N(m)\right)\\
&\qquad\quad +\frac{s}{\sqrt{N(m)}}\\
&\qquad\leq 132\sqrt{\ov{N}} n^2 R^2(\delta_{m})\left(1+\frac{27}{\left(\lambda_{\min}^{K}\right)^{3}}\right)
\frac{L^2m^2|\mathcal{M}|^5|\mathcal{N}|^2}{N^5(m)}\left(1+\log N(m)\right)\\
&\qquad\quad+\frac{s}{\sqrt{N(m)}},
\end{aligned}
\end{align}
where the last inequality follows by \eqref{relation1}. Now by the choice of $\delta_{m}$ and $R(\delta_{m})$ defined according to \eqref{R2used}, we are done.
\end{proof}

\subsection{The closeness between \texorpdfstring{$f^{\lin}$}{f\^lin} and a Gaussian process}
Let us define the following processes:

\begin{align}\label{remaider1}
&R_{t}(\ov{X})\coloneqq -\hat{K}_{\Theta_{0}}(\ov{X},X^{T})\hat{K}_{\Theta_{0}}^{-1}\left(\mathbbm{1}-{\rm e}^{-\eta\,t\hat{K}_{\Theta_{0}}}\right)(F(0)-Y),\\\label{limremaider}
&R_{t}^{\infty}(\ov{X})\coloneqq -K(\ov{X},X^{T})K^{-1}\left(\mathbbm{1}-{\rm e}^{-\eta\,tK}\right)(f^{(\infty)}(X)-Y).
\end{align}
In the next, we show that for any $t>0$, $f^{\lin}(\Theta_{t},\ov{X})$ is close to $Z_{t}(\ov{X})$ that is defined as
\begin{align}
\label{olimite1}
  &Z_{t}(\ov{X})\coloneqq f^{(\infty)}(\ov{X})-K(\ov{X},X^{T})K^{-1}\left(\mathbbm{1}-{\rm e}^{-\eta\,tK}\right)(f^{(\infty)}(X)-Y).
\end{align}
By $Z_{t}(\ov{X})$, we denote a stochastic process that follows a normal probability law $\mathcal{N}(\mu_{t}(\ov{X}),\K_{t}(\ov{X},\ov{X}^{T}))$, where $\mu_{t}(\ov{X})$ represents its mean vector, and $\ov{\K}_{t}\coloneqq \K_{t}(\ov{X},\ov{X}^{T})$  its covariance operator. These quantities are the vectorial form of \eqref{newmedia}, and \eqref{covlimit}, respectively. 
\begin{lem}\label{thmwass2}
Let us fix $0<s<+\infty$. Let $\ov{X}$ be the vector of all the inputs in $\mathcal{X}$, and let $X$ be the vector of the $n$ training inputs. For any $t>0$, we set $\ov{\K}_{t}\coloneqq\K_{t}(\ov{X},\ov{X}^{T})$ defined according to \eqref{covlimit}. Suppose that \autoref{A1}, \autoref{A2} and \autoref{A3} hold true. Then 

\begin{align}\label{ratewasst2}
\begin{aligned}
\de_{\w}^{(s)}\left(f^{\lin}(\Theta_{t},\ov{X}),\mathcal{N}(\mu_{t}(\ov{X}),\ov{\K}_{t})\right) &\leq  C_{m}^{\ov{N},\ov{\K}_{0}}+\frac{\norma{K(\ov{X},X^{T})}_{{\rm op}}}{\lambda_{\min}^{K}}C_{m}^{n,\ov{\K}_{0}}+\\
&+2\left(\frac{1}{\sqrt{N(m)}\lambda_{\min}^{K}} +\frac{6}{\sqrt{N(m)}\left(\lambda_{\min}^{K}\right)^{2}}\sqrt{\bar N n}\frac{Lm|\mathcal{M}|^2}{N^2(m)}\right)\E\norma{F(0)-Y}_{2}\\
&\phantom{formula}+\exp\left(-\frac{1}{256}\frac{N^{3}(m)}{Lm\vert \mathcal{M}\vert^{2} \vert\mathcal{N} \vert^{2}}\frac{1}{\ov{N}}\right)s
\end{aligned}
\end{align}
where $C_{m}^{\ov{N},\ov{\K}_{0}}$ is defined according to \eqref{modform1}, and $C_{m}^{n,\ov{\K}_{0}}$ defined as

\begin{align}\label{modform2}
C_{m}^{n,\ov{\K}_{0}}\coloneqq  \sqrt{n}\norma{\ov{\K}_{0}^{-1}}_{{\rm op}}\norma{\ov{\K}_{0}}_{{\rm op}}^{\frac{1}{2}} \sqrt{\frac{\widetilde{D}(2nD)^{2}}{(N(m))^{4}}}+\frac{Dmn}{(N(m))^{2}}\omega\left(n,\frac{D}{N(m)}\right),
\end{align}
where $D$, $\widetilde{D}$, and $\omega$ are defined in \eqref{maxdegree}, \eqref{indmeas}, and \eqref{contfunct}, respectively.
\end{lem}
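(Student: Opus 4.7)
The strategy is to compare $f^{\lin}(\Theta_t^{\lin},\ov{X})$ with the Gaussian target $\mathcal{N}(\mu_t(\ov{X}),\ov{\K}_t)$ by interpolating through two intermediate random vectors and applying the triangle inequality together with \eqref{ineqtroncata}. Define
\begin{align*}
    g_t(\ov{X}) &\coloneqq f(\Theta_0,\ov{X}) - K(\ov{X},X^T)K^{-1}\bigl(\mathbbm{1} - e^{-\eta t K}\bigr)\bigl(F(0)-Y\bigr),\\
    Z'_t(\ov{X}) &\coloneqq f(\Theta_0,\ov{X}) - K(\ov{X},X^T)K^{-1}\bigl(\mathbbm{1} - e^{-\eta t K}\bigr)\bigl(f^{(\infty)}(X)-Y\bigr),
\end{align*}
so that $g_t$ arises from \eqref{linevolmod1} by substituting the empirical kernel $\hat K_{\Theta_0}$ with the analytic kernel $K$, while $Z'_t$ arises from $g_t$ by substituting the empirical vector $F(0)$ with its Gaussian limit $f^{(\infty)}(X)$ (coupled to $\Theta_0$ in any way). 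The triangle inequality for $\de_\w^{(s)}$ gives
\[
    \de_\w^{(s)}\bigl(f^{\lin}(\Theta_t^{\lin},\ov{X}),Z_t(\ov{X})\bigr) \leq \de_\w^{(s)}\bigl(f^{\lin}(\Theta_t^{\lin},\ov{X}),g_t(\ov{X})\bigr) + \de_\w\bigl(g_t(\ov{X}),Z'_t(\ov{X})\bigr) + \de_\w\bigl(Z'_t(\ov{X}),Z_t(\ov{X})\bigr).
\]

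\textbf{Closing the second and third terms (Gaussian part).} The vectors $g_t$ and $Z'_t$ are obtained as the same deterministic affine map, with linear part $-K(\ov{X},X^T)K^{-1}(\mathbbm{1}-e^{-\eta tK})$, applied to $F(0)$ and $f^{(\infty)}(X)$ respectively, while $f(\Theta_0,\ov{X})$ is held fixed. Since $K\succeq \lambda_{\min}^K\mathbbm{1}$ and $\norma{\mathbbm{1}-e^{-\eta tK}}_{{\rm op}}\leq 1$, the contractivity of $\de_\w$ under Lipschitz maps yields
\[
    \de_\w(g_t,Z'_t)\leq \norma{K(\ov{X},X^T)K^{-1}\bigl(\mathbbm{1}-e^{-\eta tK}\bigr)}_{{\rm op}}\de_\w\bigl(F(0),f^{(\infty)}(X)\bigr)\leq \frac{\norma{K(\ov{X},X^T)}_{{\rm op}}}{\lambda_{\min}^K}\,C_m^{n,\ov{\K}_0},
\]
where the last inequality repeats verbatim the Stein-method estimate in the proof of \autoref{thmwass1}, now applied to the $n$-dimensional vector $F(0)$ and producing \eqref{modform2}. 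Analogously, $Z'_t(\ov{X})-Z_t(\ov{X})=f(\Theta_0,\ov{X})-f^{(\infty)}(\ov{X})$, so $\de_\w(Z'_t,Z_t)\leq \de_\w\bigl(f(\Theta_0,\ov{X}),f^{(\infty)}(\ov{X})\bigr)\leq C_m^{\ov{N},\ov{\K}_0}$ by \eqref{dnewwass1}.

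\textbf{Closing the first term (kernel-perturbation part).} Let $u(\Sigma)\coloneqq \Sigma^{-1}\bigl(\mathbbm{1}-e^{-\eta t\Sigma}\bigr)=\int_0^{\eta t}e^{-s\Sigma}\,\de s$. Adding and subtracting $K(\ov{X},X^T)u(\hat K_{\Theta_0})$, we write
\[
    f^{\lin}(\Theta_t^{\lin},\ov{X})-g_t(\ov{X}) = -\Bigl[\bigl(\hat K_{\Theta_0}(\ov{X},X^T)-K(\ov{X},X^T)\bigr)u(\hat K_{\Theta_0}) + K(\ov{X},X^T)\bigl(u(\hat K_{\Theta_0})-u(K)\bigr)\Bigr](F(0)-Y).
\]
We fix a threshold $\varepsilon$ and introduce the concentration event $E=\{\norma{\hat K_{\Theta_0}-K}_{{\rm HS}}<\varepsilon\}$ on the relevant blocks, whose complement has probability bounded by \autoref{basedthm1} combined with a union bound over the $\ov{N}\cdot n+n^2$ entries involved. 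Then
\[
    \de_\w^{(s)}(f^{\lin},g_t)\leq \E\bigl[\norma{f^{\lin}-g_t}_2\,\mathbbm{1}_E\bigr] + s\,\P(E^c).
\]
On $E$ we use $\norma{u(\hat K_{\Theta_0})}_{{\rm op}}\leq 1/\lambda_{\min}^{K}$ (up to an error absorbed by adjusting $\varepsilon$) for the first bracket, and the Duhamel identity $e^{-s\hat K}-e^{-sK}=-\int_0^s e^{-\tau\hat K}(\hat K-K)e^{-(s-\tau)K}\de\tau$, integrated over $s\in[0,\eta t]$, to obtain $\norma{u(\hat K_{\Theta_0})-u(K)}_{{\rm op}}\leq (\lambda_{\min}^K)^{-2}\norma{\hat K_{\Theta_0}-K}_{{\rm op}}$ uniformly in $t\geq 0$; the $(\lambda_{\min}^K)^{-1}$ factor from $\norma{u(\hat K_{\Theta_0})}_{{\rm op}}$ combined with $\norma{K(\ov{X},X^T)}_{{\rm op}}(\lambda_{\min}^K)^{-2}$ accounts for the two coefficients in the stated bound. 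The HS norm $\norma{\hat K_{\Theta_0}-K}_{{\rm HS}}$ picks up a $\sqrt{\ov{N}n}$ factor (and the $Lm|\mathcal{M}|^2/N^2(m)$ factor from the sub-Gaussian tail), while Cauchy--Schwarz separates $\mathbb{E}\norma{F(0)-Y}_2$. Choosing $\varepsilon$ so that $\P(E^c)$ equals exactly the exponential tail $\exp\bigl(-\frac{N^3(m)}{256 Lm|\mathcal{M}|^2|\mathcal{N}|^2}\frac{1}{\ov{N}}\bigr)$ produces the last term of \eqref{ratewasst2} and the $1/\sqrt{N(m)}$ denominator in the coefficient of $\E\norma{F(0)-Y}_2$.

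\textbf{Main obstacle.} The delicate point is the uniform-in-$t$ perturbation bound for the matrix function $\Sigma\mapsto u(\Sigma)$: the naive scalar Lipschitz estimate of $\lambda\mapsto (1-e^{-\eta t\lambda})/\lambda$ does not immediately transfer to the matrix setting, and a direct norm bound on $\hat K^{-1}-K^{-1}$ would be useless because one cannot trade off the dependence on $t$. The integral representation $u(\Sigma)=\int_0^{\eta t}e^{-s\Sigma}\de s$ combined with Duhamel is what replaces the diverging integration interval with the exponential decay $e^{-s\lambda_{\min}^K}$, yielding the crucial $t$-uniform factor $(\lambda_{\min}^K)^{-2}$. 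Balancing the threshold $\varepsilon$ against the truncation level $s$ so that the concentration tail and the perturbation bound both contribute to \eqref{ratewasst2} with the stated powers is the final calibration.
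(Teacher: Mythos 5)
Your proposal is correct and follows essentially the same route as the paper's proof: introducing the intermediate $\tilde f^{\lin}$ (your $g_t$) built from the analytic kernel, using the Stein bound from \autoref{thmwass1} for the Gaussian-comparison term, conditioning on a concentration event for $\hat K_{\Theta_0}-K$ controlled by \autoref{basedthm1}, and tuning the threshold $\varepsilon=N(m)^{-1/2}$ to balance the exponential tail against the perturbation term. The only differences are cosmetic: you split $\tilde f^{\lin}-Z_t$ into two steps via $Z'_t$ where the paper does it in one, and you derive the $t$-uniform kernel-perturbation bound from the Duhamel identity for $u(\Sigma)=\int_0^{\eta t}e^{-s\Sigma}\,\de s$ instead of differentiating $M(\Sigma_\alpha)$ along the linear interpolation path $\Sigma_\alpha=\alpha K+(1-\alpha)\hat K_{\Theta_0}$, which yield the same $(\lambda_{\min}^K)^{-2}$ factor.
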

\begin{proof}
Consider $0<s<+\infty$, and let $h$ be a 1-Lipschitz function. Let us define

\begin{align}
\begin{aligned}\label{auxlinearf}
&\tilde{f}^{\lin}(\Theta_{t}^{\lin},\ov{X})=f(\Theta_{0},\ov{X})+ \tilde{R}_{t}(\ov{X}),\\
&\tilde{R}_{t}(\ov{X})\coloneqq -K(\ov{X},X^{T})K^{-1}\left(\mathbbm{1}-{\rm e}^{-\eta\,tK}\right)(F(0)-Y).
\end{aligned}
\end{align}
Notice that by the triangle inequality

\begin{align}
\de_{W}^{(s)}(f^{\lin}(\Theta_{t}^{\lin},\ov{X}),Z_{t})\leq  \de_{W}^{(s)}(f^{\lin}(\Theta_{t}^{\lin},\ov{X}),\tilde{f}^{\lin}(\Theta_{t}^{\lin},\ov{X})) + \de_{W}^{(s)}(\tilde{f}^{\lin}(\Theta_{t}^{\lin},\ov{X}),Z_{t})
\end{align}
and so
\begin{align}\label{dalimt1}
\begin{aligned}
\de_{W}^{(s)}(f^{\lin}(\Theta_{t}^{\lin},\ov{X}),Z_{t})&\leq \E\min\left\{\norma{f^{\lin}(\Theta_{t}^{\lin},\ov{X})-\tilde{f}^{\lin}(\Theta_{t}^{\lin},\ov{X})}_{2},s\right\} +\\
&+\E\min\left\{\norma{\tilde{f}^{\lin}(\Theta_{t}^{\lin},\ov{X})-Z_{t}}_{2},s\right\}.
\end{aligned}
\end{align}
Notice that by the definition of  $f$ and $\tilde{f}^{\lin}$, we have

\begin{align}\label{primobound}
 \E\min\left\{\norma{f^{\lin}(\Theta_{t}^{\lin},\ov{X})-\tilde{f}^{\lin}(\Theta_{t}^{\lin},\ov{X})}_{2},s\right\}\leq \E\min\left\{\norma{R_{t}-\tilde{R}_{t}}_{2},s\right\}.  
\end{align}
In what follows, since $\eta\,$ is a constant, we can absorb it into the variable $t$, and therefore, we will not consider it further. Hence, for each matrix $\Sigma$ depending on $\ov{X},X$ such that  the inverse matrix $(\Sigma(X,X^{T}))^{-1}$ is well defined, we define the matricial function,

\begin{align}
M(\Sigma)=\Sigma(\ov{X},X^{T})\Sigma^{-1}(\mathbbm{1}-e^{-t\Sigma}), \hskip 0,1cm \text{where $\Sigma=\Sigma(X,X^{T})$, and $\Sigma^{-1}=(\Sigma(X,X^{T}))^{-1}$.}
\end{align}
For the sake of simplicity, given a covariance matrix $\Sigma$ of a set of real valued random variables indexed by the elements of $\mathcal{X}$,  we write $\Sigma_{\ast}\coloneqq  \Sigma(\ov{X},X^{T})$. Then
\begin{align}
M(K)=K_{\ast}K^{-1}(\mathbbm{1}-e^{-tK}), \hskip 0,2cm M(\hat{K})=\hat{K}_{\ast}\hat{K}^{-1}(\mathbbm{1}-e^{-t\hat{K}}), \hskip 0,1cm \text{where $\hat{K}=K_{\Theta_{0}}$}.
\end{align}

Notice that 

\begin{align}
\de M=\de \Sigma_{\ast}\Sigma^{-1}(\mathbbm{1}-e^{-t\Sigma})-\Sigma_{\ast}\Sigma^{-1}\de \Sigma \Sigma^{-1}(1-e^{-t\Sigma}) + \Sigma_{\ast}\Sigma^{-1}\int_{0}^{1}e^{-s\Sigma t}t\de \Sigma e^{-(1-s)\Sigma t}\de s
\end{align}
Let us denote by $\lambda_{\min}^{\Sigma}$ the smallest eigenvalue of $\Sigma$, and suppose that $\Sigma \geq \lambda_{\min}^{\Sigma}\mathbbm{1}$. Notice that

\begin{align}\label{f:disug1}
\begin{aligned}
\norma{\de M}_{{\rm op}}&\leq \norma{\de \Sigma_{\ast}}_{{\rm op}}\frac{1}{\lambda_{\min}^{\Sigma}} +\norma{\Sigma_{\ast}}_{{\rm op}}\norma{\de \Sigma}_{{\rm op}}\frac{1}{(\lambda_{\min}^{\Sigma})^{2}}+t\norma{\Sigma_{\ast}}_{{\rm op}}\frac{1}{\lambda_{\min}^{\Sigma}}\norma{\de \Sigma}_{{\rm op}}e^{-t\lambda_{\min}^{\Sigma}} \\ 
&\leq \norma{\de \Sigma_{\ast}}_{{\rm op}}\frac{1}{\lambda_{\min}^{\Sigma}} +\norma{\Sigma_{\ast}}_{{\rm op}}\norma{\de \Sigma}_{{\rm op}}\frac{1}{(\lambda_{\min}^{\Sigma})^{2}}+ \frac{1}{(\lambda_{\min}^{\Sigma})^{2}e}\norma{\Sigma_{\ast}}_{{\rm op}}\norma{\de \Sigma}_{{\rm op}}.
\end{aligned}
\end{align}
On the other hand, by the definition of $R_{t}$, and $\tilde{R}_{t}$
\begin{align}
\E\min\left\{\norma{R_{t}-\tilde{R}_{t}}_{2},s\right\}=\E\min\left\{\norma{\left(M(\hat{K})-M(K)\right)(F(0)-Y)}_{2},s\right\},
\end{align}
where $\hat{K}=\hat{K}_{\Theta_{0}}$ is the empirical tangent kernel, and $K$ the analytic tangent kernel. In what follows, fix $0<\varepsilon$, and we call $\cE\coloneqq \left\{\omega\in\Omega: \norma{K( \ov{X},\ov{X}^{T})-\hat{K}(\ov{X},\ov{X}^{T})}_{{\rm op}} <\varepsilon\right\}$. We write 

\begin{align}
\begin{aligned}
\E\min\left\{\norma{\left(M(\hat{K})-M(K)\right)(F(0)-Y)}_{2},s\right\}=& \P\left(\cE\right)\E\left(\min\left\{\norma{\left(M(\hat{K})-M(K)\right)(F(0)-Y)}_{2},s\right\}\Bigg| \cE\right)+\\
+&\P\left(\cE^{c}\right)\E\left(\min\left\{\norma{\left(M(\hat{K})-M(K)\right)(F(0)-Y)}_{2},s\right\}\Bigg| \cE^{c}\right),
\end{aligned}
\end{align}
and thus

\begin{align}
\begin{aligned}
\E\min\left\{\norma{R_{t}-\tilde{R}_{t}}_{2},s\right\}\leq & \P\left(\cE\right)\E\left(\min\left\{\norma{\left(M(\hat{K})-M(K)\right)(F(0)-Y)}_{2},s\right\}\Bigg| \cE\right)+ \P\left(\cE^{c}\right)s\\
&\leq \P\left(\cE\right)\E\left(\min\left\{\norma{\left(M(\hat{K})-M(K)\right)}_{{\rm op}}\norma{(F(0)-Y)}_{2},s\right\}\Bigg| \cE\right)+ \P\left(\cE^{c}\right)s.
\end{aligned}
\end{align}
Let us take $\alpha\in [0,1]$. By the fundamental theorem of calculus, we write

\begin{align}\label{f:disug2}
 M(K)-M(\hat{K})=\int_{0}^{1}\frac{dM(\Sigma_{\alpha})}{\de \alpha}\de \alpha, \hskip 0,2cm \Sigma_{\alpha}\coloneqq \alpha K+(1-\alpha)\hat{K}.   
\end{align}
By \eqref{f:disug1} we have that

\begin{align}
\begin{aligned}    
\norma{M(K)-M(\hat{K})}_{{\rm op}}&\leq \norma{\frac{\de (\Sigma_{\alpha})_{\ast}}{\de \alpha}}_{{\rm op}}\frac{1}{\lambda_{\min}^{K}} +\norma{K_{\ast}}_{{\rm op}}\norma{\frac{\de \Sigma_{\alpha}}{\de \alpha}}_{{\rm op}}\frac{1}{(\lambda_{\min}^{K})^{2}}+ \frac{1}{(\lambda_{\min}^{K})^{2}e}\norma{K_{\ast}}_{{\rm op}}\norma{\frac{\de \Sigma_{\alpha}}{\de \alpha}}_{{\rm op}}\\
&=\frac{1}{\lambda_{\min}^{K}}\norma{K(\ov{X},X^{T})-K_{\Theta_{0}}(\ov{X},X^{T})}_{{\rm op}}+\frac{1}{\left(\lambda_{\min}^{K}\right)^{2}}\norma{K_{\ast}}\norma{K-\hat{K}}_{{\rm op}}+\\
&+ \frac{1}{(\lambda_{\min}^{K})^{2}e}\norma{K_{\ast}}_{{\rm op}}\norma{K-\hat{K}}_{{\rm op}}.
\end{aligned}
\end{align}
Hence, we obtain that
\begin{align}
&\E\left(\min\left\{\norma{M(\hat{K})-M(K)}_{{\rm op}}\norma{F(0)-Y}_{2},s\right\}\Bigg| \cE\right)\leq \frac{\varepsilon}{\lambda_{\min}^{K}}\E\left(\min\left\{\norma{F(0)-Y}_{2},s\right\}\Bigg| \cE\right)+\\
&+\frac{\varepsilon}{\left(\lambda_{\min}^{K}\right)^{2}}\E\left(\min\left\{\norma{K_{\ast}}_{{\rm op}}\norma{F(0)-Y}_{2},s\right\}\Bigg| \cE\right)+ \frac{\varepsilon}{(\lambda_{\min}^{K})^{2}e}\E\left(\min\left\{\norma{K_{\ast}}_{{\rm op}}\norma{F(0)-Y}_{2},s \right\}\Bigg| \cE\right).
\end{align}
In order to bound $\norma{K_{\ast}}_{{\rm op}}$ and $\norma{\hat K_{\ast}}_{{\rm op}}$, let us first notice that, for a generic real $n_1\times n_2$ matrix $A$ such that $|A_{ij}|\leq a$ for any $1\leq i\leq n_1$ and $1\leq j\leq n_2$, by convexity of $x\mapsto x^2$ we have
\begin{align}\label{eq:opHS}
    \|A\|_{\rm op} = \sup_{\|v\|\leq 1}\sqrt{\sum_{i=1}^{n_1}\left(\sum_{j=1}^{n_2}A_{ij}v_j\right)^2}\leq \sup_{\|v\|\leq 1}\sqrt{n_2\sum_{i=1}^{n_1}\sum_{j=1}^{n_2}A_{ij}^2v_j^2}
    \leq a \sup_{\|v\|\leq 1}\sqrt{n_1n_2\sum_{j=1}^{n_2}v_j^2}
    \leq a\sqrt{n_1n_2}
\end{align}
where $v\in \mathbb{R}^{n_2}$. A uniform bound on the matrix elements of $\hat{K}_{\D}$ can be obtained as follows. By \cite[Lemma 4.30]{girardi2024}, for any $1\leq i\leq Lm$ and $x\in\mathcal{X}$ we have
\begin{align}
    |\partial_{\theta_i}f(\Theta_0,x)|\leq 2\frac{|\mathcal{M}|}{N(m)},
\end{align}
therefore
\begin{align}
    |\hat{K}_{\Theta_0}(x,x')|=\left|\sum_{j=1}^{Lm}\partial_{\theta_j}f(\Theta_0,x)\partial_{\theta_j}f(\Theta_0,x')\right|\leq \sum_{j=1}^{Lm}|\partial_{\theta_j}f(\Theta_0,x)|\,|\partial_{\theta_j}f(\Theta_0,x')|\leq 4\frac{Lm|\mathcal{M}|^2}{N^2(m)}
\end{align}
whence
\begin{align}\label{explitcb}
    \|\hat{K}_{\ast}\|_{\rm op}\leq 4\sqrt{\bar N n}\frac{Lm|\mathcal{M}|^2}{N^2(m)}\qquad \text{and}\qquad \|K_{\ast}\|_{\rm op}\leq \mathbb{E}[\|\hat K_{\D}\|_{\rm op}]\leq 4 \sqrt{\bar N n}\frac{Lm|\mathcal{M}|^2}{N^2(m)}.
\end{align}

Hence, we obtain that
\begin{align}\label{formin1}
\begin{aligned}
&\E\left(\min\left\{\norma{M(\hat{K})-M(K)}_{{\rm op}}\norma{F(0)-Y}_{2},s\right\}\Bigg| \cE\right)\leq \frac{\varepsilon}{\lambda_{\min}^{K}}\E\left(\min\left\{\norma{F(0)-Y}_{2},s\right\}\Bigg| \cE\right)\\
&\phantom{formula}+\frac{3}{2}\frac{\varepsilon}{\left(\lambda_{\min}^{K}\right)^{2}}\E\left(\min\left\{4 \sqrt{\bar N n}\frac{Lm|\mathcal{M}|^2}{N^2(m)}\norma{F(0)-Y}_{2},s\right\}\Bigg| \cE\right)\\
&\le \left(\frac{\varepsilon}{\lambda_{\min}^{K}} +\frac{6\varepsilon}{\left(\lambda_{\min}^{K}\right)^{2}}\sqrt{\bar N n}\frac{Lm|\mathcal{M}|^2}{N^2(m)}\right)\E\left(\norma{F(0)-Y}_{2} | \cE\right).
\end{aligned}
\end{align}
Let us now compute $\P(\cE^{c})$.
Proceeding as in \eqref{eq:opHS} we get
\begin{align}\label{upperbd1}
\begin{aligned}
 \left\|K(\ov{X},\ov{X}^{T})-\hat{K}(\ov{X},\ov{X}^{T})\right\|_{\rm op} \leq \sqrt{\ov{N}}\sup_{x,x'\in \X}\vert K(x,x')-\hat{K}(x,x')\vert.
\end{aligned}
\end{align}
Then, by \autoref{basedthm1} and \eqref{upperbd1}, one gets
\begin{align}\label{formin2}
\begin{aligned}
\P(\cE^{c})&\leq \P\left(\sqrt{\ov{N}}\sup_{x,x'\in \X}\vert K(x,x')-\hat{K}(x,x')\vert \geq\varepsilon\right)\\
&=\P\left(\sup_{x,x'\in \X}\vert K(x,x')-\hat{K}(x,x')\vert \geq\frac{\varepsilon}{\sqrt{\ov{N}}}\right)\\
&\leq \exp\left(-\frac{1}{256}\frac{N^{4}(m)}{Lm\vert \mathcal{M}\vert^{2} \vert\mathcal{N} \vert^{2}}\frac{\varepsilon^{2}}{\ov{N}}\right).
\end{aligned}
\end{align}

Therefore, by \eqref{primobound} together with \eqref{formin1} and \eqref{formin2}, one has 
\begin{align}\label{eq:bounddop2}
\begin{aligned}
&\E\min\left\{\norma{f^{\lin}(\Theta_{t}^{\lin},\ov{X})-\tilde{f}^{\lin}(\Theta_{t}^{\lin},\ov{X})}_{2},s\right\}\leq 
\left(\frac{\varepsilon}{\lambda_{\min}^{K}} +\frac{6\varepsilon}{\left(\lambda_{\min}^{K}\right)^{2}}\sqrt{\bar N n}\frac{Lm|\mathcal{M}|^2}{N^2(m)}\right)\E\left(\norma{F(0)-Y}_{2} | \cE\right)\\
&\phantom{formula}+\exp\left(-\frac{1}{256}\frac{N^{4}(m)}{Lm\vert \mathcal{M}\vert^{2} \vert\mathcal{N} \vert^{2}}\frac{\varepsilon^{2}}{\ov{N}}\right)s
\end{aligned}
\end{align}
Let us now estimate the remaining term $\E\min\left\{\norma{\tilde{f}^{\lin}(\Theta_{t}^{\lin},\ov{X})-Z_{t}}_{2},s\right\}$. Notice that
\begin{align}
\tilde{f}^{\lin}(\Theta_{t}^{\lin},\ov{X}))- Z_{t}= f(\Theta_{0},\ov{X})-f^{(\infty)}(\ov{X}) -\left(K(\ov{X},X^{T})K^{-1}\left(\mathbbm{1}-{\rm e}^{-\eta\,tK}\right)(f(\Theta_{0},X)-f^{(\infty)}(X))\right),
\end{align}
and thus
\begin{align}
\begin{aligned}
\E\min\left\{\norma{\tilde{f}^{\lin}(\Theta_{t}^{\lin},\ov{X})-Z_{t}}_{2},s\right\}\leq \E\norma{f(\Theta_{0},\ov{X})-f^{(\infty)}(\ov{X})}_{2} + \frac{\norma{K(\ov{X},X^{T})}_{{\rm op}}}{\lambda_{\min}^{K}}\E\norma{f(\Theta_{0},X)-f^{(\infty)}(X)}_{2}\\
= \de_{W}(f(\Theta_{0},\ov{X}), \mathcal{N}(0,\ov{\K}_{0}))+ \frac{\norma{K(\ov{X},X^{T})}_{{\rm op}}}{\lambda_{\min}^{K}}\de_{W}(f(\Theta_{0},X), \mathcal{N}(0,\ov{\K}_{0}))
\end{aligned}
\end{align}
where the last identity follows by the existence of an optimal plan. Then by \eqref{dnewwass1}, one gets that
\begin{align}\label{quasifit}
\E\min\left\{\norma{\tilde{f}^{\lin}(\Theta_{t}^{\lin},\ov{X})-Z_{t}}_{2},s\right\}\leq C_{m}^{\ov{N},\ov{\K}_{0}}+ \frac{\norma{K(\ov{X},X^{T})}_{{\rm op}}}{\lambda_{\min}^{K}}C_{m}^{n,\ov{\K}_{0}}.
\end{align}
Therefore, combining \eqref{eq:bounddop2} and \eqref{quasifit} with \eqref{dalimt1}, we obtain

\begin{align}
\begin{aligned}
\de_{\w}^{(s)}\left(f^{\lin}(\Theta_{t},\ov{X}),\mathcal{N}(\mu_{t}(\ov{X}),\ov{\K}_{t})\right) &\leq  C_{m}^{\ov{N},\ov{\K}_{0}}+\frac{\norma{K(\ov{X},X^{T})}_{{\rm op}}}{\lambda_{\min}^{K}}C_{m}^{n,\ov{\K}_{0}}+\\
&+\left(\frac{\varepsilon}{\lambda_{\min}^{K}} +\frac{6\varepsilon}{\left(\lambda_{\min}^{K}\right)^{2}}\sqrt{\bar N n}\frac{Lm|\mathcal{M}|^2}{N^2(m)}\right)\E\left(\norma{F(0)-Y}_{2} | \cE\right)\\
&\phantom{formula}+\exp\left(-\frac{1}{256}\frac{N^{4}(m)}{Lm\vert \mathcal{M}\vert^{2} \vert\mathcal{N} \vert^{2}}\frac{\varepsilon^{2}}{\ov{N}}\right)s.
\end{aligned}
\end{align}
Let us choose $\varepsilon=\frac{1}{\sqrt{N(m)}}$. Notice that 
\begin{align}
\E\left(\norma{F(0)-Y}_{2} | \cE\right)\leq \frac{1}{\mathbb{P}(\cE)}\E\norma{F(0)-Y}_{2} \le 2\,\E\norma{F(0)-Y}_{2}\,,   
\end{align}
where
\begin{align}
    \mathbb{P}(\cE)=1-\mathbb{P}(\cE^c)\geq 1-\exp\left(-\frac{1}{256}\frac{N^{3}(m)}{Lm\vert \mathcal{M}\vert^{2} \vert\mathcal{N} \vert^{2}}\frac{1}{\ov{N}}\right)\geq \frac{1}{2}
\end{align}
which follows from \autoref{A4}, and our conclusion follows.
\end{proof}
We now are in position to prove \autoref{thmwass4}.
\begin{proof}[{Proof of \autoref{thmwass4}}]
Let us fix $s>0$.

Notice that thanks to \autoref{f_flin}

\begin{align}\label{ultf1}
\begin{aligned}
\de_{\w}^{(s)}&\left(f(\Theta_{t},\ov{X}),\mathcal{N}(\mu_{t}(\ov{X}),\ov{\K}_{t})\right)\\
&\qquad\leq \de_{\w}^{(s)}\left(f(\Theta_{t},\ov{X}),f^{\lin}(\Theta_{t}^{\lin},\ov{X})\right)+ \de_{\w}^{(s)}\left(f^{\lin}(\Theta_{t}^{\lin},\ov{X}),\mathcal{N}(\mu_{t}(\ov{X}),\ov{\K}_{t})\right)\\
&\qquad\leq 132n^{2}\sqrt{\ov{N}} \left(\norma{Y}_{2}+\sqrt{2n\sqrt{N(m)}}\right)^{2}\left(1+\frac{27}{\left(\lambda_{\min}^{K}\right)^{3}}\right)
\frac{L^2m^2|\mathcal{M}|^5|\mathcal{N}|^2}{N^5(m)}\left(1+\log N(m)\right)\\
&\qquad\quad+\frac{s}{\sqrt{N(m)}} + \de_{\w}^{(s)}\left(f^{\lin}(\Theta_{t}^{\lin},\ov{X}),\mathcal{N}(\mu_{t}(\ov{X}),\ov{\K}_{t})\right).
\end{aligned}
\end{align}
On the other hand, by \autoref{thmwass2}, we obtain that
\begin{align}\label{ultf2}
\begin{aligned}
  \de_{\w}^{(s)}&\left(f^{\lin}(\Theta_{t}^{\lin},\ov{X}),\mathcal{N}(\mu_{t}(\ov{X}),\ov{\K}_{t})\right)\\
  &\qquad\leq C_{m}^{\ov{N},\ov{\K}_{0}}+\frac{\norma{K(\ov{X},X^{T})}_{{\rm op}}}{\lambda_{\min}^{K}}C_{m}^{n,\ov{\K}_{0}}\\
&\qquad\quad+2\left(\frac{1}{\sqrt{N(m)}\lambda_{\min}^{K}} +\frac{6}{\sqrt{N(m)}\left(\lambda_{\min}^{K}\right)^{2}}\sqrt{\ov{N} n}\frac{Lm|\mathcal{M}|^2}{N^2(m)}\right)\mathbb{E}[\|F(0)-Y\|_2]\\
&\qquad\quad+\exp\left(-\frac{1}{256}\frac{N^{4}(m)}{Lm\vert \mathcal{M}\vert^{2} \vert\mathcal{N} \vert^{2}}\frac{1}{N(m)\ov{N}}\right)s
\end{aligned}
\end{align}
Hence, combining \eqref{ultf1}, and \eqref{ultf2} we get

\begin{align}
\begin{aligned}\label{eq:tobesimplified}
\de_{\w}^{(s)}&\left(f(\Theta_{t},\ov{X}),\mathcal{N}(\mu_{t}(\ov{X}),\ov{\K}_{t})\right)\\
&\qquad\leq \frac{s}{\sqrt{N(m)}} + C_{m}^{\ov{N},\ov{\K}_{0}}+\frac{\norma{K(\ov{X},X^{T})}_{{\rm op}}}{\lambda_{\min}^{K}}C_{m}^{n,\ov{\K}_{0}}+\\
&\qquad\quad +132n^{2}\sqrt{\ov{N}} \left(\|Y\|_2+\sqrt{2n\sqrt{N(m)}}\right)^{2}\left(1+\frac{27}{\left(\lambda_{\min}^{K}\right)^{3}}\right)
\frac{L^2m^2|\mathcal{M}|^5|\mathcal{N}|^2}{N^5(m)}\left(1+\log N(m)\right)\\
&\qquad\quad +2\left(\frac{1}{\sqrt{N(m)}\lambda_{\min}^{K}} +\frac{6}{\left(\lambda_{\min}^{K}\right)^{2}}\sqrt{\ov{N}n}\frac{Lm|\mathcal{M}|^2}{N^{5/2}(m)}\right)\mathbb{E}[\|F(0)-Y\|_2]\\
&\qquad\quad+\exp\left(-\frac{1}{256}\frac{N^{4}(m)}{Lm\vert \mathcal{M}\vert^{2} \vert\mathcal{N} \vert^{2}}\frac{1}{N(m)\ov{N}}\right)s.
\end{aligned}
\end{align}
 We recall that
\begin{align}
\begin{aligned}
    C_{m}^{n,\ov{\K}_{0}}\coloneqq  \sqrt{n}\norma{\ov{\K}_{0}^{-1}}_{{\rm op}}\norma{\ov{\K}_{0}}_{{\rm op}}^{\frac{1}{2}} \sqrt{\frac{\widetilde{D}(2nD)^{2}}{(N(m))^{4}}}+\frac{Dmn}{(N(m))^{2}}\omega\left(n,\frac{D}{N(m)}\right),
\end{aligned}
\end{align}
where we recall that
\begin{align}
\begin{aligned}
\omega(d,x)\coloneqq 
\begin{cases}
x\left(C(1,d)-2\log\,x\right) &\text{if $ x \leq 1$,}\\
C(1,d) &\text{if $x>1$.}
\end{cases}
\qquad\text{with}\qquad 
 C(1,d)\coloneqq 2^{\frac{3}{2}}\frac{1+2d}{d}\frac{\Gamma\left(\frac{1+d}{2}\right)}{\Gamma\left(\frac{d}{2}\right)}. 
\end{aligned}
\end{align}
We notice that by proceeding as in \eqref{fcostantona}, we also obtain that
\begin{align}
\begin{aligned}
    C_{m}^{n,\ov{\K}_{0}}&\leq n\sqrt{n}\left(2\norma{\ov{\K}_{0}^{-1}}_{{\rm op}}\norma{\ov{\K}_{0}}_{{\rm op}}^{\frac{1}{2}}+6\right)\frac{m|\mathcal{M}|^{7/2}|\mathcal{N}|^{7/2}}{N^3(m)}\left(1+\log N(m)\right).
\end{aligned}
\end{align}
Furthermore, in order to simplify \eqref{eq:tobesimplified}, we recall \eqref{explitcb}
\begin{align}
    \|K(\bar X,X^T)\|_{\rm op}\leq 4 \sqrt{\ov{N} n}\frac{Lm|\mathcal{M}|^2}{N^2(m)},
\end{align}
we upper bound
\begin{align}
    \left(\|Y\|_2+\sqrt{2n\sqrt{N(m)}}\right)^{2}\leq 2\|Y\|_2^2+4n\sqrt{N(m)}\leq \left(2\|Y\|_2^2+4n\right)\sqrt{N(m)}
\end{align}
and, by virtue of \autoref{A2}, we compute
\begin{align}
    \begin{aligned}
        \mathbb{E}[\|F(0)-Y\|_2]^2&\leq \mathbb{E}[\|F(0)-Y\|_2^2]\\
        &=\sum_{i=1}^n \mathbb{E}\left[\left(f(\Theta_0,x^{(i)})-y^{(i)}\right)^2\right]\\
        &=\sum_{i=1}^n\left( \mathcal{K}_0(x^{(i)},x^{(i)})+(y^{(i)})^2\right)\\
        &\leq n + \|Y\|_2^2,
    \end{aligned}
\end{align}
whence
\begin{align}
    \mathbb{E}[\|F(0)-Y\|_2]&\leq \sqrt{n + \|Y\|_2^2}\leq \sqrt{n}+\|Y\|_2.
\end{align}
Now, we know that\begin{align}
    xe^{-x}\leq \frac{1}{e}
\end{align}
whence
\begin{align}
    e^{-x}\leq \frac{1}{e}\,\frac{1}{x}
\end{align}
holds. Therefore,
\begin{align}
    \exp\left(-\frac{1}{256}\frac{N^{3}(m)}{Lm\vert \mathcal{M}\vert^{2} \vert\mathcal{N} \vert^{2}}\frac{1}{\ov{N}}\right)\leq \frac{256}{e}\ov{N}\,\frac{Lm\vert \mathcal{M}\vert^{2} \vert\mathcal{N} \vert^{2}}{N^{3}(m)}.
\end{align}
The last ingredient is again \eqref{eq:inequalityNm}:
\begin{align}
    N(m)\leq \sqrt{m|\mathcal{M}||\mathcal{N}|}.
\end{align}
Then, it's laborious but easy to show that
\begin{align}
    \begin{aligned}
    \de_{\w}^{(s)}&\left(f(\Theta_{t},\ov{X}),\mathcal{N}(\mu_{t}(\ov{X}),\ov{\K}_{t})\right)\\
    &\qquad\leq \frac{s}{\sqrt{N(m)}}\\
    &\qquad\quad +\ov{N}\sqrt{\ov{N}}\left(2\norma{\ov{\K}_{0}^{-1}}_{{\rm op}}\norma{\ov{\K}_{0}}_{{\rm op}}^{\frac{1}{2}}+6\right)\frac{m|\mathcal{M}|^{7/2}|\mathcal{N}|^{7/2}}{N^3(m)}\left(1+\log N(m)\right)\\
    &\qquad\quad + 4 \sqrt{\ov{N} n}\frac{Lm|\mathcal{M}|^2}{N^2(m)}n\sqrt{n}\left(2\norma{\ov{\K}_{0}^{-1}}_{{\rm op}}\norma{\ov{\K}_{0}}_{{\rm op}}^{\frac{1}{2}}+6\right)\frac{m|\mathcal{M}|^{7/2}|\mathcal{N}|^{7/2}}{N^3(m)}\left(1+\log N(m)\right)\\
    &\qquad\quad + 132n^{2}\sqrt{\ov{N}} \left(2\|Y\|_2^2+4n\right)\sqrt{N(m)}\,\left(1+\frac{27}{\left(\lambda_{\min}^{K}\right)^{3}}\right)
    \frac{L^2m^2|\mathcal{M}|^5|\mathcal{N}|^2}{N^5(m)}\left(1+\log N(m)\right)\\
    &\qquad\quad +2\left(\frac{1}{\sqrt{N(m)}\lambda_{\min}^{K}} +\frac{6}{\left(\lambda_{\min}^{K}\right)^{2}}\sqrt{\ov{N} n}\frac{Lm|\mathcal{M}|^2}{N^{5/2}(m)}\right)\left(\sqrt n + \|Y\|_2\right)\\
    &\qquad\quad + \frac{256}{e}s\,\ov{N}\,\frac{Lm\vert \mathcal{M}\vert^{2} \vert\mathcal{N} \vert^{2}}{N^{3}(m)}\\
    &\qquad\leq s\,\,\frac{m^{9/4}|\mathcal{M}|^{9/4}|\mathcal{N}|^{9/4}}{N^5(m)}\\
    &\qquad\quad +\ov{N}\sqrt{\ov{N}}\left(2\norma{\ov{\K}_{0}^{-1}}_{{\rm op}}\norma{\ov{\K}_{0}}_{{\rm op}}^{\frac{1}{2}}+6\right)\frac{m^{2}|\mathcal{M}|^{9/2}|\mathcal{N}|^{9/2}}{N^{5}(m)}\left(1+\log N(m)\right)\\
    &\qquad\quad + 4 n^2\sqrt{\ov{N}}\left(2\norma{\ov{\K}_{0}^{-1}}_{{\rm op}}\norma{\ov{\K}_{0}}_{{\rm op}}^{\frac{1}{2}}+6\right)\frac{Lm^2|\mathcal{M}|^{11/2}|\mathcal{N}|^{9/2}}{N^5(m)}\left(1+\log N(m)\right)\\
    &\qquad\quad + 132n^{2}\sqrt{\ov{N}} \left(2\|Y\|_2^2+4n\right)\,\left(1+\frac{27}{\left(\lambda_{\min}^{K}\right)^{3}}\right)
    \frac{L^2m^{9/4}|\mathcal{M}|^{21/4}|\mathcal{N}|^{9/4}}{N^5(m)}\left(1+\log N(m)\right)\\
    &\qquad\quad +2\left(\frac{1}{\lambda_{\min}^{K}}\frac{m^{9/4}|\mathcal{M}|^{9/4}|\mathcal{N}|^{9/4}}{N^5(m)} +\frac{6}{\left(\lambda_{\min}^{K}\right)^{2}}\sqrt{\ov{N}n}\frac{Lm^{9/4}|\mathcal{M}|^{13/4}|\mathcal{N}|^{5/4}}{N^5(m)}\right)\left(\sqrt n + \|Y\|_2\right)\\
    &\qquad\quad + \frac{256}{e} s\,\ov{N}\,\frac{Lm^2\vert \mathcal{M}\vert^{3} \vert\mathcal{N} \vert^{3}}{N^{5}(m)}\\
    &\qquad\leq C(\ov{N}, n, \|Y\|_2, \lambda_{\min}^K, \ov{\K}_{0}, s)  \frac{L^2m^{9/4}|\mathcal{M}|^{11/2}|\mathcal{N}|^{9/2}}{N^5(m)}\left(1+\log N(m)\right)  
    \end{aligned}
\end{align}
where
\begin{align}
\begin{aligned}
    C(\ov{N}, n, \|Y\|_2, \lambda_{\min}^K, \ov{\K}_{0}, s)&\coloneqq 96 s\ov{N} +8\sqrt{\ov{N}}\norma{\ov{\K}_{0}^{-1}}_{{\rm op}}\norma{\ov{\K}_{0}}_{{\rm op}}^{\frac{1}{2}}(\ov{N}+4n^2)\\
    &\quad +132n^{2}\sqrt{\ov{N}} \left(2\|Y\|_2^2+4n\right)\,\left(1+\frac{27}{\left(\lambda_{\min}^{K}\right)^{3}}\right)\\
    &\quad+2\left(\sqrt n + \|Y\|_2\right)\left(\frac{1}{\lambda_{\min}^{K}} +\frac{6}{\left(\lambda_{\min}^{K}\right)^{2}}\sqrt{\ov{N} n}\right).
\end{aligned}
\end{align}

\end{proof}

\section{Conclusions}\label{sec:concl}

We have studied quantum neural networks whose generated function is given by the expectation value of the sum of single-qubit observables on the state generated by a parametric quantum circuit (\autoref{A1}).
Our main result is a quantitative proof of the convergence to a Gaussian process of the probability distribution of the functions generated by such networks in the limit of infinite width for both untrained networks with randomly initialized parameters and trained networks.

First, we have considered the probability distribution of the function generated by a quantum neural network with untrained randomly initialized parameters. In \autoref{thmwass1}, we have provided an upper bound to the Wasserstein distance of order $1$ between such probability distribution and the Gaussian distribution with the same covariance.

Then, we have considered the probability distribution of the function generated by the network during the training dynamics via gradient flow.
In \autoref{thmwass4}, we have proved an upper bound to a truncated version of the Wasserstein distance of order $1$ between such probability distribution and a suitable Gaussian process.
The proof of \autoref{thmwass4} is based on \autoref{newgradfl}, which states that, for large enough width, the generated function can be approximated by truncating to the first order its Taylor series with respect to the parameters of the model around their initialization value, implying that each parameter remains with high probability close to its initialization value and that the training happens in the {\em lazy regime}.

Both upper bounds \autoref{thmwass1} and \autoref{newgradfl} tend to zero in the limit of infinite width, provided that the network does not suffer from barren plateaus.
Therefore, our results provide a quantitative proof of the convergence in distribution to a Gaussian process proved in \cite{girardi2024}.



While \cite{girardi2024} requires to build a sequence of networks with diverging width, our results can be applied to any quantum neural network with finite width.
Moreover, \cite{girardi2024} proves the convergence in distribution to a Gaussian process only at fixed training time, and such result holds in the limit $t\to\infty$ only if such limit is taken after the limit of infinite width.
Instead, all the upper bounds proved in this work do not depend on the training time and hold in the limit $t\to\infty$ for finite width.

Our results open the way to several possible research directions:
\begin{itemize}
    \item It would be interesting to extend our results to the case of an infinite input space, such as a subset of $\mathbb{R}^d$. In particular, we would like to generalize the functional bounds for the Wasserstein distance for classical neural networks, see for instance \cite[Theorem 1.1]{balasub2024}, and \cite[Theorem 3.12]{favaro2024q}.
    \item It would also be interesting to determine lower bounds to the distance between the probability distribution of the function generated by the trained network and the corresponding Gaussian process, or at least to determine the asymptotic scaling of such corrections in the limit of infinite width, providing a quantum generalization of the classical results in this direction \cite{hanin2019finitedepthwidthcorrections,yaida2020nongaussian,Roberts_2022,hanin2023randomfullyconnectedneural}.
\end{itemize}

\section*{Acknowledgements}
GDP has been supported by the HPC Italian National Centre for HPC, Big Data and Quantum Computing -- Proposal code CN00000013 -- CUP J33C22001170001 and by the Italian Extended Partnership PE01 -- FAIR Future Artificial Intelligence Research -- Proposal code PE00000013 -- CUP J33C22002830006 under the MUR National Recovery and Resilience Plan funded by the European Union -- NextGenerationEU.
Funded by the European Union -- NextGenerationEU under the National Recovery and Resilience Plan (PNRR) -- Mission 4 Education and research -- Component 2 From research to business -- Investment 1.1 Notice Prin 2022 -- DD N. 104 del 2/2/2022, from title ``understanding the LEarning process of QUantum Neural networks (LeQun)'', proposal code 2022WHZ5XH -- CUP J53D23003890006.
DP has been supported by project SERICS (PE00000014) under the MUR National Recovery and Resilience Plan funded by the European Union -- NextGenerationEU.
GDP and DP are members of the ``Gruppo Nazionale per la Fisica Matematica (GNFM)'' of the ``Istituto Nazionale di Alta Matematica ``Francesco Severi'' (INdAM)''. AMH has been supported by project PRIN 2022 
``understanding the LEarning process of QUantum Neural networks (LeQun)'', proposal code 2022WHZ5XH -- CUP J53D23003890006. The author AMH is a member of the ``Gruppo Nazionale per l'Analisi Matematica, la Probabilità e le loro Applicazioni (GNAMPA)'' of the ``Istituto Nazionale di Alta Matematica ``Francesco Severi'' (INdAM)''.

\appendix

\section{Regularity of the solutions of the Stein equation}\label{sec:regularity}
In what follows, we recall some recent result about the regularity of the solutions of the Stein equation \eqref{pdestein1} obtained in \cite{gall2018}. Let us fix $d\geq 1$, and suppose that $\Sigma= \I_{d\times d}$ is identity matrix in $\R^{d}$. Given $h:\R^{d}\rightarrow \R$ a $1$-Lipschitz function, we consider $f_{h}$ as a solution of the Stein equation \eqref{pdestein1}.   

\begin{prop}\label{propreg1}
Let $h:\R^{d}\rightarrow \R$ be a $1$-Lipschitz function. Then the solution $f_{h}$ of the Stein equation \eqref{pdestein1} satisfies,
\begin{align}\label{regformula}
 \norma{{\rm Hessian} f_{h}(x)- {\rm Hessian} f_{h}(y)}_{2} \leq 
 \displaystyle\begin{cases}
 \norma{x-y}_{2}\left(C(1,d)-2\log(\norma{x-y}_{2})\right), & \text{if $\norma{x-y}_{2}\leq 1$;}\\
C(1,d) & \text{if $\norma{x-y}_{2}>1$}.
 \end{cases}
\end{align}
Here, we have denoted by $C(1,d)$ the constant given by
\begin{align}\label{costreg}
 C(1,d)\coloneqq 2^{\frac{3}{2}}\frac{1+2d}{d}\frac{\Gamma\left(\frac{1+d}{2}\right)}{\Gamma\left(\frac{d}{2}\right)}.  
\end{align}
\end{prop}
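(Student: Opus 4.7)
The plan is to adapt the approach of \cite{gall2018}, exploiting the explicit representation \eqref{solpde1} of $f_h$ together with Gaussian integration by parts. Setting $P_t h(x) := \mathbb{E}[h(e^{-t}x + \sqrt{1-e^{-2t}}N)]$ for $N \sim \mathcal{N}_d(0,I_d)$, one has $f_h(x) = \int_0^\infty(\mathbb{E}[h(N)] - P_t h(x))\,dt$. After mollifying $h$ to gain smoothness and passing to the limit at the end, applying Gaussian integration by parts twice transfers both derivatives from $h$ onto the Gaussian density, yielding the key representation
\[\text{Hess}\,P_t h(x) = \frac{e^{-2t}}{1-e^{-2t}}\,\mathbb{E}\!\left[h\!\left(e^{-t}x + \sqrt{1-e^{-2t}}N\right)(NN^T - I_d)\right].\]
Since $\mathbb{E}[NN^T - I_d] = 0$, the function $h$ inside the expectation may be shifted by an arbitrary constant without affecting the Hessian, a freedom crucial below. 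Differencing at $x$ and $y$ and integrating in $t$ expresses $\text{Hess}\,f_h(y) - \text{Hess}\,f_h(x)$ as a time integral of $(NN^T - I_d)$ weighted by
\[\Delta_t(x,y,N) := h(e^{-t}x + \sqrt{1-e^{-2t}}N) - h(e^{-t}y + \sqrt{1-e^{-2t}}N).\]

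Next, I would split the time integral at a threshold $t^* > 0$ to be optimized. On $[t^*,\infty)$, the Lipschitz bound $|\Delta_t| \le e^{-t}\|x-y\|_2$ together with the dimension-free estimate $\sup_{\|u\|=\|v\|=1}\mathbb{E}|u^T(NN^T - I_d)v| \le 2$ yields a contribution of size $\|x-y\|_2 \cdot \int_{t^*}^\infty \frac{e^{-3t}}{1-e^{-2t}}\,dt$. On $[0,t^*]$ this estimate fails to be integrable at $t=0$, so I would instead bound each Hessian individually: exploiting the shift freedom, replacing $h$ by $h - h(e^{-t}x)$ inside the expectation gives $|h(\cdot) - h(e^{-t}x)| \le \sqrt{1-e^{-2t}}\|N\|_2$, and evaluating the Gaussian moments of $\|N\|_2 \cdot |(u^TN)(v^TN) - u^Tv|$ sharply (via Cauchy--Schwarz and the chi distribution of $\|N\|_2$) yields $\|\text{Hess}\,P_t h(x)\|_2 \lesssim \sqrt d\, e^{-2t}/\sqrt{1-e^{-2t}}$, which is integrable near $t=0$ and contributes $O(\sqrt d)\sqrt{1-e^{-2t^*}}$ to the difference.

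Choosing $t^*$ so that $\sqrt{1-e^{-2t^*}} \asymp \|x-y\|_2$ -- possible precisely when $\|x-y\|_2 \le 1$ -- balances the two pieces: the small-$t$ contribution becomes $O(\sqrt d)\|x-y\|_2$, and the tail integral $\int_{t^*}^\infty \frac{e^{-3t}}{1-e^{-2t}}\,dt$, computed via the substitution $u = e^{-2t}$, produces a logarithmic factor $-\log(1-e^{-2t^*}) \sim -2\log\|x-y\|_2$ from the integrable logarithmic singularity of the transformed integrand near $u=1$. Summing reproduces the form $\|x-y\|_2\left(C(1,d) - 2\log\|x-y\|_2\right)$, and the precise constant $C(1,d)$ in \eqref{costreg}, with its ratio of Gamma functions, arises from the explicit evaluation of the Gaussian moments above. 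For $\|x-y\|_2 > 1$, I would simply use $\|\text{Hess}\,f_h(x) - \text{Hess}\,f_h(y)\|_2 \le 2\sup_z\|\text{Hess}\,f_h(z)\|_2$, controlled directly by \autoref{boundpec}. The main technical obstacle is the non-integrable singularity $1/(1-e^{-2t}) \sim 1/(2t)$ of the kernel at $t=0$, which forces the time-splitting argument; a secondary obstacle is achieving the sharp dimension-dependence $\sqrt d$ in $C(1,d)$, which requires bounding $(NN^T - I_d)$ through the dimension-free estimate on $|(u^TN)(v^TN) - u^Tv|$ rather than through the Hilbert--Schmidt norm of $NN^T - I_d$ (which would yield a worse dimension scaling).
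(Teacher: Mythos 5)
The paper does not give a proof here at all: its ``proof'' of \autoref{propreg1} is the single sentence citing \cite[Proposition~2.3]{gall2018}. Your proposal instead reconstructs the actual argument of that reference, and it does so faithfully. The Mehler-kernel representation $\text{Hess}\,P_t h(x) = \frac{e^{-2t}}{1-e^{-2t}}\mathbb{E}\bigl[h(e^{-t}x+\sqrt{1-e^{-2t}}N)(NN^T-I_d)\bigr]$ via double Gaussian integration by parts is correct, the centering trick (shifting $h$ by a constant, legal because $\mathbb{E}[NN^T-I_d]=0$) is exactly what makes the $t\to 0^+$ singularity integrable, and the time-split at $t^*$ chosen so that $\sqrt{1-e^{-2t^*}}\asymp\|x-y\|_2$ is precisely what produces the $\|x-y\|_2(\text{const}-2\log\|x-y\|_2)$ shape: the near-origin piece contributes $O(\sqrt d)\,\|x-y\|_2$, and the tail integral (which you correctly note behaves like $-\tfrac12\log(1-e^{-2t^*})$ after the substitution $u=e^{-2t}$, and picks up the prefactor $2$ from the dimension-free bound $\mathbb{E}|u^T(NN^T-I_d)v|\le 2$) contributes the $-2\log$ term. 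Your dimension-free operator-norm estimate is also the right ingredient, as opposed to the Hilbert--Schmidt route which would lose a power of $d$. The only places where your sketch undersells itself are routine: the mollification step is waved at, and the Cauchy--Schwarz estimate $\mathbb{E}[\|N\|_2\,|u^T(NN^T-I_d)v|]\le\sqrt{2d}$ gives the correct $\sqrt d$ order but not the literal constant $C(1,d)=2^{3/2}\tfrac{1+2d}{d}\tfrac{\Gamma((1+d)/2)}{\Gamma(d/2)}$; recovering that requires computing the relevant chi-moment $\mathbb{E}\|N\|_2=\sqrt 2\,\Gamma((d+1)/2)/\Gamma(d/2)$ exactly rather than via Cauchy--Schwarz, which is where the Gamma ratio in \eqref{costreg} actually originates. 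Since any constant of the right order suffices for the paper's applications, these gaps are cosmetic.
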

\begin{proof}
This result was proved in \cite[Proposition 2.3]{gall2018} in a more general case.    
\end{proof}
In the next, let us define $\omega(d,\cdot): [0,\infty]\rightarrow [0,\infty]$ the following function:
\begin{align}\label{contfunct}
\begin{aligned}
\omega(d,x)\coloneqq 
\begin{cases}
x\left(C(1,d)-2\log\,x\right), & \text{if $ x \leq 1$,}\\
C(1,d), & \text{if $x>1$.}
\end{cases}
\end{aligned}
\end{align}
Notice that $\omega(d,0)=0$, $\lim_{x\rightarrow 0^{+}}\omega(d;x)=0$, and it is increasing, then it is a modulus of continuity.

\section{Convergence Towards a Gaussian Process}\label{sec:towgauss}

In this Appendix, we briefly summarize some of the key results established in \cite{girardi2024}.
There, the authors consider a sequence of quantum neural networks with diverging width satisfying \autoref{A1}-\autoref{A4} and the following further assumptions: 
\begin{ass}\label{H2}
Suppose that the covariance of the generated function at initialization converges uniformly:
\begin{align}\label{ipot2}
    \lim_{m\rightarrow +\infty}\sup_{x,x'\in \X}\left\vert\E\left(f(\Theta,x)f(\Theta,x')\right) - \K_0^\infty(x,x')\right\vert =0\,,
\end{align}
where $\K_0^\infty:\X\times\X\rightarrow \R$ is a kernel such that 
\begin{align}\label{ipot3}
    \K_0^\infty(x,x)>0 \hskip 0,1cm \text{for all $x\in\X$.}
\end{align}
\end{ass}
\begin{ass}\label{H3}
Suppose that $N(m)$ grows sufficiently fast such that
\begin{align}
\lim_{m\rightarrow +\infty}\frac{m \vert \mathcal{M}\vert^{2}\vert\mathcal{N}\vert^{2}}{N^{3}(m)}=0.
\end{align}
\end{ass}
\begin{ass}\label{lasthypconv}
Alternatively, suppose that
\begin{align}
\lim_{m\rightarrow +\infty}\frac{L^{2}m^{2}\vert\mathcal{M}\vert^{6}\vert\mathcal{N}\vert^{3}}{N^5(m)}\log\,N(m)=0. 
\end{align}    
\end{ass}
\begin{ass}\label{H4}
Suppose that there exist a  positive constant $N_{K}(m)$ depending on the number of qubits $m$ and a kernel $K^\infty:\X\times \X \rightarrow \R$ such that the analytic neural tangent kernel rescaled by $N_K(m)$ converges uniformly to $K^\infty$:
\begin{align}\label{limitk}
\lim_{m\rightarrow +\infty}\sup_{x,x'\in \X}\left\vert \frac{K(x,x')}{N_{K}(m)}-K^\infty(x,x')\right\vert=0.    
\end{align}
 Further, suppose that the minimum eigenvalue of $K^\infty= K^\infty(X,X^{T})$ is strictly positive, and denote it by $\lambda_{{\rm min}}^{\infty}$.
\end{ass} 
The following is one of the main achievements of \cite{girardi2024}.
\begin{thm}[{\cite[Theorem 3.14]{girardi2024}}]\label{convergencethm}
Suppose that \autoref{A1}-\autoref{A2}, \autoref{H2}-\autoref{H3} hold true. Then as $m\rightarrow +\infty$, the probability distribution of the generated function at initialization converges in distribution to the Gaussian process with mean zero and covariance $\K_{0}^\infty$ satisfying our assumption \autoref{H2}.
\end{thm}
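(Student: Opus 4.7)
The plan is to deduce \autoref{convergencethm} as a direct consequence of the quantitative Wasserstein bound in \autoref{thmwass1}, combined with a Gaussian covariance-perturbation estimate that moves the target from $\mathcal{N}(0,\ov{\K}_0)$ to $\mathcal{N}(0,\ov{\K}_0^\infty)$, where $\ov{\K}_0^\infty \coloneqq \K_0^\infty(\ov{X},\ov{X}^T)$.

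Since $\mathcal{X}$ is finite with cardinality $\ov{N}$, the Gaussian process $\{f(\Theta,x)\}_{x\in\mathcal{X}}$ is fully described by the single $\ov{N}$-dimensional marginal $f(\Theta,\ov{X})\in\mathbb{R}^{\ov{N}}$, so convergence in distribution of the process reduces to convergence in law of $f(\Theta,\ov{X})$ to $\mathcal{N}(0,\ov{\K}_0^\infty)$. Because convergence in $\de_{\w}$ implies convergence in distribution on $\mathbb{R}^{\ov{N}}$, it suffices to prove that $\de_{\w}\bigl(f(\Theta,\ov{X}),\,\mathcal{N}(0,\ov{\K}_0^\infty)\bigr)\to 0$ as $m\to\infty$. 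By the triangle inequality,
\[
\de_{\w}\bigl(f(\Theta,\ov{X}),\,\mathcal{N}(0,\ov{\K}_0^\infty)\bigr) \;\le\; \de_{\w}\bigl(f(\Theta,\ov{X}),\,\mathcal{N}(0,\ov{\K}_0)\bigr) \;+\; \de_{\w}\bigl(\mathcal{N}(0,\ov{\K}_0),\,\mathcal{N}(0,\ov{\K}_0^\infty)\bigr),
\]
and I would treat the two terms separately.

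The first summand is handled directly by \autoref{thmwass1}, whose explicit bound carries the prefactor $\norma{\ov{\K}_0^{-1}}_{{\rm op}}\norma{\ov{\K}_0}_{{\rm op}}^{1/2}$ and the $m$-dependent factor $\frac{m|\mathcal{M}|^{7/2}|\mathcal{N}|^{7/2}}{N^3(m)}(1+\log N(m))$. From \autoref{H2} and the non-degeneracy condition \eqref{ipot3} (strengthened, where needed, to positive-definiteness of $\ov{\K}_0^\infty$, which is the natural genericity assumption on the limit kernel), $\ov{\K}_0\to\ov{\K}_0^\infty$ in operator norm so that the prefactor stays bounded uniformly in $m$; the $m$-dependent factor is then forced to vanish in the regimes targeted by the paper. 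The second summand is controlled by the classical Gaussian estimate $\de_{\w}\bigl(\mathcal{N}(0,\Sigma_1),\mathcal{N}(0,\Sigma_2)\bigr)\le\norma{\Sigma_1^{1/2}-\Sigma_2^{1/2}}_{{\rm HS}}$, which follows from the explicit coupling $(\Sigma_1^{1/2}Z,\Sigma_2^{1/2}Z)$ with $Z\sim\mathcal{N}(0,\I)$; continuity of the matrix square root on the positive-definite cone, combined with the uniform convergence $\ov{\K}_0\to\ov{\K}_0^\infty$ granted by \autoref{H2}, makes this term vanish.

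The main obstacle is purely quantitative: \autoref{thmwass1} produces the powers $|\mathcal{M}|^{7/2}|\mathcal{N}|^{7/2}$, while \autoref{H3} only provides control on $|\mathcal{M}|^2|\mathcal{N}|^2$. To bridge the gap I would use the elementary inequality $N(m)\le\sqrt{m|\mathcal{M}||\mathcal{N}|}$ from \eqref{eq:inequalityNm}, or simply restrict to the regimes of logarithmic depth with polynomial or mild exponential light-cone growth discussed in Section~2.6, in which both scalings vanish simultaneously and \autoref{convergencethm} is recovered. A minor auxiliary point is lifting pointwise positivity of $\K_0^\infty(x,x)$ in \eqref{ipot3} to positive-definiteness of $\ov{\K}_0^\infty$; this is a generic assumption one naturally attaches to the limit kernel and ensures that the spectral quantities $\norma{\ov{\K}_0^{-1}}_{{\rm op}}$ are uniformly bounded along the sequence.
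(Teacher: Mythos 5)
Your route — bound $\de_\w\bigl(f(\Theta,\ov X),\mathcal N(0,\ov{\K}_0)\bigr)$ by \autoref{thmwass1}, then move from $\ov{\K}_0$ to $\ov{\K}_0^\infty$ by a Gaussian coupling — is exactly the strategy the paper uses to \emph{re-derive} the convergence as a corollary of \autoref{thmwass1}, but that corollary is proved under the strictly stronger hypothesis \eqref{newassumpt}, namely $\frac{m|\mathcal M|^{7/2}|\mathcal N|^{7/2}}{N^3(m)}\bigl(1+\log N(m)\bigr)\to 0$, not under \autoref{H3}, which only controls $\frac{m|\mathcal M|^{2}|\mathcal N|^{2}}{N^3(m)}$. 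The paper itself flags this discrepancy immediately after the corollary: the cited result \autoref{convergencethm} in \cite{girardi2024} is proved with L\'evy's continuity theorem combined with cumulant bounds, precisely because that method tolerates the weaker \autoref{H3}, whereas the Stein/Wasserstein machinery needs the higher powers of $|\mathcal M|,|\mathcal N|$. So your plan does not prove the theorem as stated; it proves a statement with a stronger hypothesis.

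Your two proposed patches do not close this gap. The inequality $N(m)\le\sqrt{m|\mathcal M||\mathcal N|}$ from \eqref{eq:inequalityNm} is an \emph{upper} bound on $N(m)$, i.e.\ a \emph{lower} bound on $1/N(m)$; it can only make the right-hand side of \eqref{ratewasst1} larger, not smaller, so it cannot convert control of the exponent-$2$ quantity into control of the exponent-$7/2$ one. Concretely, with $|\mathcal M|=|\mathcal N|=m^\alpha$ and $N(m)=m^\beta$ one can have $\frac{m|\mathcal M|^2|\mathcal N|^2}{N^3(m)}\to 0$ while $|\mathcal M|^{3/2}|\mathcal N|^{3/2}(1+\log N(m))\to\infty$, so the extra factor genuinely diverges under \autoref{H3}. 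Restricting to the logarithmic-depth regimes of Section~2.6 proves a special case, not the theorem. A further caveat you correctly noticed, but that also weakens the claim, is that \autoref{thmwass1} needs $\norma{\ov{\K}_0^{-1}}_{\rm op}$ bounded along the sequence, which neither \autoref{A2} (pointwise positivity of the diagonal) nor \autoref{H2} (pointwise positivity of $\K_0^\infty(x,x)$) guarantees without promoting it to positive-definiteness of the full $\ov N\times\ov N$ matrix — another hypothesis the cumulant-based proof in \cite{girardi2024} does not need. To actually prove \autoref{convergencethm} under \autoref{H3} you would have to switch methods, e.g.\ reproduce the L\'evy/cumulant argument of the original reference.
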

We notice that \autoref{convergencethm} does not provide any convergence rate. Next, we prove that \autoref{convergencethm} can be derived from Theorem \autoref{thmwass1}.
\begin{cor}
Let us suppose that \autoref{A1}, \autoref{A2}, and 

\begin{align}\label{newassumpt}
 \lim_{m\rightarrow +\infty}\frac{m|\mathcal{M}|^{7/2}|\mathcal{N}|^{7/2}}{N^3(m)}\left(1+\log N(m)\right)=0. 
 \end{align}
Then 
\begin{align}
    \lim_{m\rightarrow +\infty}\de_{\w}\left(f(\Theta,\ov{X}),\mathcal{N}(0,\K_{0}^\infty)\right)=0,
\end{align}
so that $x\mapsto f(\Theta,x)$ converges in distribution to a Gaussian process with mean zero, and covariance operator $\K_{0}^\infty$.
\end{cor}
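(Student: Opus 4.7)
The plan is to compare the law of $f(\Theta,\ov{X})$ to $\mathcal{N}(0,\K_0^\infty)$ through the finite-width Gaussian $\mathcal{N}(0,\ov{\K}_0)$. By the triangle inequality,
\begin{align*}
\de_{\w}\left(f(\Theta,\ov{X}),\mathcal{N}(0,\K_{0}^\infty)\right) \le \de_{\w}\left(f(\Theta,\ov{X}),\mathcal{N}(0,\ov{\K}_{0})\right)+\de_{\w}\left(\mathcal{N}(0,\ov{\K}_{0}),\mathcal{N}(0,\K_{0}^\infty)\right).
\end{align*}
Since $\K_0^\infty$ only appears in the statement through \autoref{H2}, the corollary is implicitly assuming that hypothesis, together with the positive-definiteness of $\K_0^\infty$ that is needed for $\|\ov{\K}_0^{-1}\|_{\rm op}$ to be controlled below.

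For the first term I would apply \autoref{thmwass1} directly, which gives the explicit bound $\alpha_m^{\ov{N},\ov{\K}_0}$. The hypothesis \eqref{newassumpt} ensures the $m$-dependent factor tends to zero, so it remains to check that the prefactor $2\|\ov{\K}_0^{-1}\|_{\rm op}\|\ov{\K}_0\|_{\rm op}^{1/2}+6$ stays bounded as $m\to\infty$. Under \autoref{H2} the entry-wise (hence operator-norm, since $\ov{N}$ is fixed) convergence $\ov{\K}_0 \to \K_0^\infty$ yields $\|\ov{\K}_0\|_{\rm op} \to \|\K_0^\infty\|_{\rm op}$, and by the continuity of the smallest eigenvalue one gets $\lambda_{\min}(\ov{\K}_0)\to\lambda_{\min}(\K_0^\infty)>0$, whence $\|\ov{\K}_0^{-1}\|_{\rm op}$ is eventually bounded. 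Therefore this contribution vanishes as $m\to\infty$.

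For the second term I would use the standard Gaussian coupling: letting $Z\sim \mathcal{N}(0,\mathbbm{1}_{\ov{N}})$, the random vectors $\ov{\K}_0^{1/2} Z$ and $(\K_0^\infty)^{1/2} Z$ have the correct marginals, so
\begin{align*}
\de_{\w}\left(\mathcal{N}(0,\ov{\K}_{0}),\mathcal{N}(0,\K_{0}^\infty)\right)
\le \E\left\|\ov{\K}_0^{1/2}Z-(\K_0^\infty)^{1/2}Z\right\|_2
\le \sqrt{\ov N}\,\left\|\ov{\K}_0^{1/2}-(\K_0^\infty)^{1/2}\right\|_{\rm op}.
\end{align*}
Because the matrix square root is continuous on the set of positive definite matrices (of the fixed size $\ov{N}\times\ov{N}$) and $\ov{\K}_0\to\K_0^\infty$, the right-hand side tends to zero. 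Combining the two estimates proves the Wasserstein-$1$ convergence, and since convergence in $\de_{\w}$ implies convergence in distribution, the stated conclusion follows.

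The main obstacle, should one try to remove hidden assumptions, is exactly the finiteness of $\|\ov{\K}_0^{-1}\|_{\rm op}$: without \autoref{H2} plus strict positive definiteness of $\K_0^\infty$, the first term in the triangle inequality is not obviously controlled, since $\ov{\K}_0$ could in principle degenerate along the sequence even if each individual covariance is invertible. Everything else is a direct application of \autoref{thmwass1} and of the elementary Gaussian-coupling estimate above.
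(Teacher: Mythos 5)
Your proof follows the same essential route as the paper, namely an application of \autoref{thmwass1}. The paper's own proof is a one-liner (``Follows from \autoref{thmwass1} and \eqref{newassumpt}''), while your version spells out the two ingredients that are left implicit in the statement and in that terse proof: (i) the limit covariance $\K_0^\infty$ only makes sense if one tacitly assumes \autoref{H2}, and one needs the triangle inequality through $\mathcal{N}(0,\ov{\K}_0)$ plus a Gaussian-coupling estimate (correctly justified via $\E\|AZ\|_2\le\|A\|_{\mathrm{op}}\sqrt{\E\|Z\|_2^2}=\|A\|_{\mathrm{op}}\sqrt{\ov{N}}$ and continuity of the matrix square root) to move from $\ov{\K}_0$ to $\K_0^\infty$; (ii) the prefactor $2\|\ov{\K}_0^{-1}\|_{\mathrm{op}}\|\ov{\K}_0\|_{\mathrm{op}}^{1/2}+6$ depends on $m$ through $\ov{\K}_0$, and keeping it bounded requires $\ov{\K}_0\to\K_0^\infty$ together with strict positive definiteness of $\K_0^\infty$, which is stronger than the pointwise condition $\K_0^\infty(x,x)>0$ appearing in \autoref{H2}. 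These are genuine hidden hypotheses that the paper's proof glosses over, and your identification of them is accurate; the argument you give, assuming them, is correct and complete.
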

\begin{proof}
Follows from \autoref{thmwass1} and \eqref{newassumpt}.
\end{proof}
Let us notice that to establish the convergence result \autoref{convergencethm}, in \cite{girardi2024}, the authors used the Lévy's continuity theorem combined with an appropriate upper bound on the cumulants of $f$. This approach was necessary because the random variables $\{f_{k}\}_{k=1}^{m}$ defined in \eqref{model1} exhibit weak dependence, meaning the standard central limit theorem cannot be directly applied \cite{feray2016,janson1988}. Nevertheless, they were able to impose the weaker convergence assumption \autoref{H3} instead of \eqref{newassumpt}. 
Next, let us recall another remarkable convergence result proved in \cite{girardi2024} regarding the behavior of the trained objective function $f(\Theta_{t},\cdot)$.

\begin{thm}\label{lasthm}
Let us consider a sequence of quantum neural networks trained via gradient flow with learning rate $\frac{\eta}{N_K(m)}$
\begin{align}\label{gradeq1bis}
 \displaystyle\begin{cases}
  \frac{\de \Theta_{t}}{\de t}=-\frac{\eta}{N_K(m)}\nabla_{\Theta}f(\Theta_{t},X^{T})\nabla_{f(\Theta_{t},X)}\L(\Theta_{t}),\\
 \frac{\de}{\de t}f(\Theta_{t},x)=-\frac{\eta}{N_K(m)}\left(\nabla_{\Theta}f(\Theta_{t},x)\right)^{T}\nabla_{\Theta}f(\Theta_{t},X^{T})\nabla_{f(\Theta_{t},X)}\L(\Theta_{t})\,,
 \end{cases}   
\end{align}
and satisfying \autoref{A1}, \autoref{A2}, \autoref{H2}, \autoref{lasthypconv}, \autoref{H4}. Then, for any $t\geq 0$, in the limit of infinitely many qubits $m\rightarrow +\infty$, $\{f(\Theta_{t},x)\}_{x\in \X}$ converges in distribution to a Gaussian process $\{f_{t}^{(\infty)}(x)\}_{x\in\X}$ with mean and covariance 

\begin{align}
&\begin{aligned}\label{KK}
\K^\infty_{t}(x,x') &\coloneqq \K^\infty_{0}(x,x')-K^\infty(x,X^{T})(K^\infty)^{-1}\left(\mathbbm{1}-e^{-t\eta\,K^\infty}\right)\mathcal{K}^\infty_{0}(X,x')\\
&-K^\infty(x',X^{T})(K^\infty)^{-1}\left(\mathbbm{1}-e^{-t\eta\,K^\infty}\right)\mathcal{K}^\infty_{0}(X,x)+\\
&+K^\infty(x,X^{T})(K^\infty)^{-1}\left(\mathbbm{1}-e^{-t\eta\,K^\infty}\right)\mathcal{K}^\infty_{0}(X,X^{T})\left(\mathbbm{1}-e^{-t\eta\,K^\infty}\right)(K^\infty)^{-1}K^\infty(X,x'),
\end{aligned}\\\label{MM}
&\begin{aligned}
\quad\mu^\infty_{t}(x) &\coloneqq K^\infty(x,X^{T})(K^\infty)^{-1}\left(\mathbbm{1}-e^{-t\eta\,K^\infty}\right)Y,    
\end{aligned}
\end{align}
where $K^\infty\coloneq K^\infty(X,X^T)$.
\end{thm}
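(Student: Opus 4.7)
The natural plan is to deduce Theorem \ref{lasthm} as a corollary of the quantitative Theorem \ref{thmwass4} combined with the convergence of the finite-$m$ Gaussian parameters to their limits. Since $\mathcal{X}$ (hence $\overline{X}$) is finite, convergence in distribution of the process $\{f(\Theta_t,x)\}_{x\in\mathcal{X}}$ to a Gaussian process reduces to convergence of the $\overline{N}$-dimensional random vector $f(\Theta_t,\overline{X})$ to the centered Gaussian vector with mean $\mu_t^{\infty}(\overline{X})$ and covariance $\mathcal{K}_t^{\infty}(\overline{X},\overline{X}^T)$ given by \eqref{MM} and \eqref{KK}. Moreover, the truncated Wasserstein distance $\mathrm{d}_\mathrm{W}^{(s)}$ metrizes weak convergence (for any fixed $s>0$), so it suffices to prove that $\mathrm{d}_\mathrm{W}^{(s)}(f(\Theta_t,\overline{X}),\mathcal{N}(\mu_t^{\infty}(\overline{X}),\overline{\mathcal{K}}_t^{\infty}))\to 0$ as $m\to\infty$ for some fixed $s$.

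First, I would verify that the hypotheses of Theorem \ref{lasthm} imply \autoref{A3} and \autoref{A4} for $m$ large enough. Indeed, Assumption \autoref{H4} gives $K(X,X^T) = N_K(m)\,K^{\infty}(X,X^T)+o(N_K(m))$ uniformly on a finite set of training inputs, so $\lambda_{\min}^K \geq \tfrac{1}{2} N_K(m)\,\lambda_{\min}^{\infty}$ eventually; combined with \autoref{lasthypconv} (which is stronger than \autoref{H3}) this makes both \eqref{eq:A4} and \eqref{eq:A4bis} eventually satisfied after rescaling the learning rate to $\eta/N_K(m)$ (which absorbs the factor $N_K(m)$ in both the dynamics and the kernel). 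Theorem \ref{thmwass4}, applied with this rescaled learning rate, then yields
\begin{equation*}
\mathrm{d}_\mathrm{W}^{(s)}\bigl(f(\Theta_t,\overline{X}),\mathcal{N}(\mu_t(\overline{X}),\overline{\mathcal{K}}_t)\bigr) \;\le\; C(\overline{N},n,\|Y\|_2,\lambda_{\min}^K,\overline{\mathcal{K}}_0,s)\,\frac{L^2 m^{9/4}|\mathcal{M}|^{11/2}|\mathcal{N}|^{9/2}}{N^5(m)}\bigl(1+\log N(m)\bigr),
\end{equation*}
where $\mu_t$ and $\overline{\mathcal{K}}_t$ are the finite-$m$ versions \eqref{newmedia}-\eqref{covlimit} now evaluated with learning rate $\eta/N_K(m)$, so that $e^{-\eta t K/N_K(m)}=e^{-\eta t (K/N_K(m))}$ matches the $e^{-\eta t K^\infty}$ appearing in the limit. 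Under \autoref{lasthypconv}, the right-hand side tends to zero (the prefactor $C(\cdot)$ is controlled since $\lambda_{\min}^K\gtrsim N_K(m)$ is bounded below by a positive constant times $N_K(m)$, and the overall power counting in \autoref{lasthypconv} dominates the remaining polynomial growth in $m,|\mathcal{M}|,|\mathcal{N}|$).

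The remaining task is to show that $\mathcal{N}(\mu_t(\overline{X}),\overline{\mathcal{K}}_t)\to \mathcal{N}(\mu_t^{\infty}(\overline{X}),\overline{\mathcal{K}}_t^{\infty})$ in distribution, which together with the previous bound and the triangle inequality for $\mathrm{d}_\mathrm{W}^{(s)}$ concludes the proof. Since the mean and covariance of a Gaussian determine weak convergence, it is enough to check $\mu_t(\overline{X})\to \mu_t^{\infty}(\overline{X})$ and $\overline{\mathcal{K}}_t\to \overline{\mathcal{K}}_t^{\infty}$ entrywise. Substituting the rescaled learning rate $\eta/N_K(m)$ into \eqref{newmedia}-\eqref{covlimit}, every occurrence of $K$ is paired with a factor $1/N_K(m)$, producing $K/N_K(m)\to K^\infty$ by \autoref{H4}, while $\mathcal{K}_0\to \mathcal{K}_0^\infty$ uniformly by \autoref{H2}. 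Continuity of matrix inversion and of the matrix exponential on the relevant domain (note that $K/N_K(m)$ restricted to the training inputs is eventually bounded below by $\tfrac{1}{2}\lambda_{\min}^\infty>0$, so the inverses are uniformly controlled) yields the desired convergence of $\mu_t$ and $\overline{\mathcal{K}}_t$ to \eqref{MM}-\eqref{KK}. The main subtlety is thus the coupling of the rescalings: one must keep track of the factor $N_K(m)$ coming from both the learning rate and the kernel so that the quantitative bound of Theorem \ref{thmwass4} remains compatible with the limiting expressions \eqref{MM}-\eqref{KK}. Once this bookkeeping is done, everything reduces to the straightforward continuity argument sketched above.
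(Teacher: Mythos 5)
Your overall plan — apply \autoref{thmwass4} to control the truncated Wasserstein distance to the finite-$m$ Gaussian, then pass from that Gaussian to the limit Gaussian by triangle inequality and convergence of means and covariances — is exactly the paper's strategy (the paper uses the Dowson--Landau formula for the $W_2$ distance between Gaussians where you appeal to continuity of inversion and exponentiation, but that is a minor difference). However, there is a genuine gap in the hypothesis management.

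You assert ``Under \autoref{lasthypconv}, the right-hand side tends to zero'' for the bound of \autoref{thmwass4}, claiming the ``power counting in \autoref{lasthypconv} dominates.'' This is false. \autoref{thmwass4} gives the rate
\[
\frac{L^2 m^{9/4}|\mathcal{M}|^{11/2}|\mathcal{N}|^{9/2}}{N^5(m)}\left(1+\log N(m)\right),
\]
whereas \autoref{lasthypconv} controls
\[
\frac{L^{2}m^{2}\vert\mathcal{M}\vert^{6}\vert\mathcal{N}\vert^{3}}{N^5(m)}\log N(m).
\]
The ratio of the first to the second is, up to the harmless $(1+\log N(m))/\log N(m)$ factor, $m^{1/4}\,|\mathcal{N}|^{3/2}/|\mathcal{M}|^{1/2}$, which need not be bounded in $m$ (in fact generically diverges, since light-cone sizes $|\mathcal{M}|,|\mathcal{N}|$ do not scale like powers of $m$). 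So \autoref{lasthypconv} does not imply that the \autoref{thmwass4} rate vanishes, and your proof of the quantitative convergence step breaks down. The paper itself is aware of this: the corollary it proves in this direction replaces \autoref{lasthypconv} by the tailored hypothesis $\lim_{m\to\infty} L^2 m^{9/4}|\mathcal{M}|^{11/2}|\mathcal{N}|^{9/2}(1+\log N(m))/N^5(m)=0$, which matches the exponents of the \autoref{thmwass4} bound exactly. In other words, the paper does not rederive \autoref{lasthm} verbatim from its quantitative results; it derives the conclusion under a modified asymptotic assumption, and the original \autoref{lasthm} is quoted as a result of the reference [girardi2024], where a different (non-quantitative) argument is used. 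Your argument would be correct if you swapped \autoref{lasthypconv} for the stronger exponent condition.

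A smaller issue: you need to check that the rescaled gradient flow with learning rate $\eta/N_K(m)$ is admissible in the statement of \autoref{thmwass4}. The paper invokes an auxiliary remark from [girardi2024] to handle the normalization $N_K(m)$; your sketch only says the rescaling ``absorbs'' the factor, which is the right idea but should be spelled out to confirm that both conditions of \autoref{A4} remain satisfied with the rescaled kernel, not merely that $\lambda_{\min}^K/N_K(m)$ has a positive limit.
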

Next, we prove that the conclusion of \autoref{lasthm} can be obtained from \autoref{thmwass4}.

\begin{cor}
Let us consider a sequence of quantum neural networks trained via the gradient flow \eqref{gradeq1bis} and asymptotically satisfying \autoref{A1}, \autoref{A2}, \autoref{H2}, \autoref{H4}, and

\begin{align}\label{furtherssumpt1}
\lim_{m\rightarrow +\infty}\frac{L^2m^{9/4}|\mathcal{M}|^{11/2}|\mathcal{N}|^{9/2}}{N^5(m)}\left(1+\log N(m)\right)=0.
\end{align}
Then, for any $s>0$ and any $t>0$, we have   

\begin{align}
\lim_{m\rightarrow +\infty}\de_{\w}^{(s)}\left(f(\Theta_{t},\ov{X}),\mathcal{N}(\mu^\infty_{t}(\ov{X}),\K^\infty_{t})\right)=0,  
\end{align}
so that the trained quantum neural function $x\mapsto f(\Theta_{t},x)$ converges in distribution to a Gaussian process with mean $\mu^\infty_{t}$, and covariance operator $\K^\infty_{t}$ defined by \eqref{MM}, and \eqref{KK}, respectively.
\end{cor}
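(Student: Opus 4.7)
The plan is to invoke the quantitative bound of \autoref{thmwass4} with the rescaled learning rate $\eta/N_K(m)$, and then to pass to the limit $m\to\infty$ by comparing, via the triangle inequality, the resulting finite-$m$ Gaussian target with the asymptotic Gaussian process $\mathcal{N}(\mu_t^\infty,\K_t^\infty)$. First I would verify that the standing assumptions \autoref{A3} and \autoref{A4} hold for $m$ large enough: \autoref{H4} yields $\lambda_{\min}^K\geq N_K(m)\lambda_{\min}^{\infty}/2$ eventually, which combined with \eqref{furtherssumpt1} gives both \eqref{eq:A4} and \eqref{eq:A4bis} for large $m$. Applying \autoref{thmwass4} with $\eta$ replaced by $\eta/N_K(m)$ then produces
\begin{equation*}
\de_{\w}^{(s)}\bigl(f(\Theta_t,\ov{X}),\,\mathcal{N}(\mu_t^{(m)}(\ov{X}),\,\ov{\K}_t^{(m)})\bigr)\leq C(\ov{N},n,\|Y\|_2,\lambda_{\min}^K,\ov{\K}_0,s)\cdot\frac{L^2m^{9/4}|\mathcal{M}|^{11/2}|\mathcal{N}|^{9/2}}{N^5(m)}\bigl(1+\log N(m)\bigr),
\end{equation*}
where $\mu_t^{(m)}$ and $\ov{\K}_t^{(m)}$ are obtained from \eqref{newmedia} and \eqref{covlimit} after substituting $\eta\mapsto\eta/N_K(m)$.

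Next I would show that the prefactor $C$ stays bounded as $m\to\infty$. By \autoref{H2}, the matrix $\ov{\K}_0$ converges entrywise to the positive definite matrix $\K_0^\infty(\ov{X},\ov{X}^T)$, so both $\|\ov{\K}_0\|_{\rm op}$ and $\|\ov{\K}_0^{-1}\|_{\rm op}$ remain bounded; by \autoref{H4} one has $\lambda_{\min}^K\to\infty$, so every contribution to $C$ of the form $(\lambda_{\min}^K)^{-k}$ vanishes. Combining this with hypothesis \eqref{furtherssumpt1}, the right-hand side of the above inequality tends to zero. Moreover, the identities $K(\ov{X},X^T)K^{-1}=(K(\ov{X},X^T)/N_K(m))\cdot(K(X,X^T)/N_K(m))^{-1}$ and $(\eta/N_K(m))K=\eta\cdot K/N_K(m)$ combined with \autoref{H4} and \autoref{H2} give $\mu_t^{(m)}(\ov{X})\to\mu_t^\infty(\ov{X})$ and $\ov{\K}_t^{(m)}\to\K_t^\infty$ as $m\to\infty$.

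Finally, I would apply the triangle inequality
\begin{equation*}
\de_{\w}^{(s)}\bigl(f(\Theta_t,\ov{X}),\,\mathcal{N}(\mu_t^\infty,\K_t^\infty)\bigr)\leq\de_{\w}^{(s)}\bigl(f(\Theta_t,\ov{X}),\,\mathcal{N}(\mu_t^{(m)},\ov{\K}_t^{(m)})\bigr)+\de_{\w}^{(s)}\bigl(\mathcal{N}(\mu_t^{(m)},\ov{\K}_t^{(m)}),\,\mathcal{N}(\mu_t^\infty,\K_t^\infty)\bigr).
\end{equation*}
The first term vanishes by the previous two steps. For the second term, I would use \eqref{ineqtroncata} to reduce to the untruncated Wasserstein distance of order $1$ and then invoke the standard fact that, when the mean vectors and covariance matrices of a sequence of finite-dimensional Gaussians converge, the associated Wasserstein distances converge to zero; the convergence of the Gaussian parameters has just been established above. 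Since this argument applies for every $s>0$, a standard Portmanteau-type argument yields the convergence in distribution of $\{f(\Theta_t,x)\}_{x\in\mathcal{X}}$ to the Gaussian process $\mathrm{GP}(\mu_t^\infty,\K_t^\infty)$.

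The main obstacle is not really analytic but organizational: one has to keep consistent track of the double rescaling, namely that \autoref{H4} normalizes the analytic NTK by $N_K(m)$ while the gradient flow \eqref{gradeq1bis} is run with learning rate $\eta/N_K(m)$, so that the two scalings cancel precisely in the products $K(\ov{X},X^T)K^{-1}$ and $(\eta/N_K(m))K$ appearing in \eqref{newmedia} and \eqref{covlimit}. Once this bookkeeping is verified, the conclusion is a direct consequence of the quantitative estimates already in hand.
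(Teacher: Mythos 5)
Your proof follows essentially the same route as the paper's: a triangle-inequality split into (i) a term controlled by the quantitative bound of \autoref{thmwass4}, which vanishes by \eqref{furtherssumpt1}, and (ii) a Gaussian-to-Gaussian term controlled by convergence of the means and covariances via \autoref{H2}-\autoref{H4}. For the second term the paper invokes the explicit Dowson--Landau formula for $\de_{\w_2}$ between Gaussians together with $\de_{\w}^{(s)}\le\de_{\w_2}$, whereas you reduce via \eqref{ineqtroncata} and cite the ``standard fact'' about Wasserstein convergence of Gaussians with converging parameters; these are cosmetically different but mathematically equivalent.

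Your write-up is actually more careful in one respect: you explicitly check that assumptions \autoref{A3} and \autoref{A4} are eventually satisfied, which \autoref{thmwass4} requires but which the corollary's hypothesis list omits and the paper's proof does not address. One small imprecision in your argument: you claim that \autoref{H4} forces $\lambda_{\min}^K\to\infty$. This is only true if $N_K(m)\to\infty$, which \autoref{H4} does not guarantee; what \autoref{H4} (together with $N_K(m)\ge 1$, cf.\ \cite[Remark 4.11]{girardi2024}) does give you is $\liminf_m\lambda_{\min}^K\ge\lambda_{\min}^\infty>0$. That weaker statement is all you need, since the goal is only to keep the prefactor $C(\ov{N},n,\|Y\|_2,\lambda_{\min}^K,\ov{\K}_0,s)$ bounded, not to make the $(\lambda_{\min}^K)^{-k}$ terms vanish; so the conclusion stands, but the stated reason is an overstatement.
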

\begin{proof}
\autoref{thmwass4} provides a bound on the truncated $W_1$ distance between the function generated by the trained network at finite width and the Gaussian process with mean and covariance
\begin{align}
&\begin{aligned}
\widetilde{\K}_{t}(x,x')\coloneqq &\K_{0}(x,x')-\widetilde{K}(x,X^{T})\widetilde{K}^{-1}\left(\mathbbm{1}-e^{-t\eta\,\widetilde{K}}\right)\mathcal{K}_{0}(X,x')\\
&-\widetilde{K}(x',X^{T})\widetilde{K}^{-1}\left(\mathbbm{1}-e^{-t\eta\,\widetilde{K}}\right)\mathcal{K}_{0}(X,x)+\\
&+\widetilde{K}(x,X^{T})\widetilde{K}^{-1}\left(\mathbbm{1}-e^{-t\eta\,\widetilde{K}}\right)\mathcal{K}_{0}(X,X^{T})\left(\mathbbm{1}-e^{-t\eta\,\widetilde{K}}\right)\widetilde{K}^{-1}K(X,x'),
\end{aligned}\\
&\widetilde{\mu}_{t}(x)\coloneqq \widetilde{K}(x,X^{T})\widetilde{K}^{-1}\left(\mathbbm{1}-e^{-t\eta\,\widetilde{K}}\right)Y,
\end{align}
where
\begin{equation}
    \widetilde{K}(x,x')\coloneqq \frac{K(x,x')}{N_{K}(m)}\qquad\forall\;x,\,x'\in\mathcal{X}\,.
\end{equation}
Notice that by the triangle inequality, we have that
\begin{align}\label{exit2}
\de_{\w}^{(s)}\left(f(\Theta_{t},\ov{X}),\mathcal{N}(\mu^\infty_{t}(\ov{X}),\K^\infty_{t})\right)\leq  \de_{\w}^{(s)}\left(f(\Theta_{t},\ov{X}),\mathcal{N}(\tilde{\mu}_{t}(\ov{X}),\tilde{\K}_{t})\right) + \de_{\w}^{(s)}\left(\mathcal{N}(\tilde{\mu}_{t}(\ov{X}),\tilde{\K}_{t}),\mathcal{N}(\mu^\infty_{t}(\ov{X}),\K^\infty_{t})\right). 
\end{align}
By \autoref{thmwass4}, and by \cite[Remark 4.11]{girardi2024}, we simply let $N_{K}(m)\geq 1$, and then we still have that

\begin{align}\label{exit3}
\de_{\w}^{(s)}\left(f(\Theta_{t},\ov{X}),\mathcal{N}(\widetilde{\mu}_{t}(\ov{X}),\widetilde{\K}_{t})\right)
&\leq  C(\ov{N}, n, \|Y\|_2, \lambda_{\min}^K, \mathcal{K}_0, s)  \frac{L^2m^{9/4}|\mathcal{M}|^{11/2}|\mathcal{N}|^{9/2}}{N^5(m)}\left(1+\log N(m)\right)\,,
\end{align}
where $C(\ov{N}, n, \|Y\|_2, \lambda_{\min}^K, \ov{\K}_{0}, s)$ is the positive constant given by \eqref{newcostante}. Then by \eqref{furtherssumpt1} we obtain that \eqref{exit3} tends to zero as $m$ goes to infinity. On the other hand, notice that

\begin{align}
\de_{\w}^{(s)}\left(\mathcal{N}(\mu_{t}(\ov{X}),\K_{t}),\mathcal{N}(\mu^\infty_{t}(\ov{X}),\K^\infty_{t})\right)\leq \de_{\w_{2}}\left(\mathcal{N}(\mu_{t}(\ov{X}),\K_{t},\mathcal{N}(\mu^\infty_{t}(\ov{X}),\K^\infty_{t})\right)
\end{align}
where $d_{\w_{2}}$ is the Wasserstein distance of order $2$. Moreover, by \cite[Equation (4)]{dowson1982}

\begin{align}
\left(\de_{\w_{2}}\left(\mathcal{N}(\widetilde{\mu}_{t}(\ov{X}),\widetilde{\K}_{t},\mathcal{N}(\mu^\infty_{t}(\ov{X}),\K^\infty_{t})\right)\right)^{2}=\norma{\widetilde{\mu}_{t}(\ov{X})- \mu^\infty_{t}(\ov{X})}_{2}^2+ \tr\left(\widetilde{\K}_{t}+\K^\infty_{t}-2(\widetilde{\K}_{t}\K^\infty_{t})^{\frac{1}{2}}\right).   
\end{align}
Then, by applying \autoref{H4}, we conclude that

\begin{align}\label{exit4}
\lim_{m\rightarrow +\infty}\de_{\w_{2}}\left(\mathcal{N}(\widetilde{\mu}_{t}(\ov{X}),\widetilde{\K}_{t}),\mathcal{N}(\mu^\infty_{t}(\ov{X}),\K^\infty_{t})\right)=0.   
\end{align}
Therefore, by combining \eqref{newassumpt} with \eqref{exit2}, \eqref{exit3}, \eqref{exit4}, we obtain our desired convergence.

\end{proof}

\bibliographystyle{unsrt}
\bibliography{bibliography}

\end{document}